\def\input@path{{template/}{sections/}}
\newcommand\vldbdoi{XX.XX/XXX.XX}
\newcommand\vldbpages{XXX-XXX}
\newcommand\vldbvolume{14}
\newcommand\vldbissue{1}
\newcommand\vldbyear{2020}
\newcommand\vldbauthors{Preston Vander Vos,
    Alberto Sonnino,
    Giorgos Tsimos,
    Philipp Jovanovic,
    Lefteris Kokoris-Kogias}
\newcommand\vldbtitle{Bluebottle: Fast and Robust Blockchains through Subsystem Specialization}
\newcommand\vldbavailabilityurl{https://github.com/phvv/mysticeti/tree/odontoceti}
\newcommand\vldbpagestyle{plain}
\newif\ifpublish
\newif\iflong
\newtheorem{definition}{Definition}
\newtheorem{lemma}{Lemma}
\newtheorem{corollary}{Corollary}
\newtheorem{theorem}{Theorem}
\newcommand{\case}[1]{\par\noindent\textbf{Case:} #1\par}
\newcommand{\codelink}{
    \ifpublish
    \url{https://github.com/phvv/mysticeti/tree/odontoceti} (commit \texttt{e02aeba})
    \else
        \url{https://anonymous.4open.science/r/bluebottle-core}
    \fi
}
\newcommand{\asynccodelink}{
    \ifpublish
        \url{https://github.com/phvv/mysticeti/tree/odontoceti-async} (commit \texttt{7b503fa})
    \else
        \url{https://anonymous.4open.science/r/bluebottle-core}
    \fi
}
\newcommand{\dashboardlink}{
    \ifpublish
        \url{https://github.com/asonnino/mysticeti/blob/odontoceti/crates/orchestrator/assets/grafana-dashboard.json}
    \else
        \url{https://anonymous.4open.science/r/bluebottle-core/crates/orchestrator/assets/grafana-dashboard.json}
    \fi
}
\newcommand{\longversion}{the long version of the paper~\cite{vandervos2025bluebottle}\xspace}
\newcommand{\extref}[2]{\iflong\Cref{#2}\else the long version of the paper~\cite{vandervos2025bluebottle} (#1)\fi}
  \newcommand{\alberto}[1]{\textcolor{blue}{\textbf{Alberto:} #1}}
  \newcommand{\lef}[1]{\textcolor{brown}{\textbf{Lef:} #1}}
  \newcommand{\giorgos}[1]{\textcolor{purple}{\textbf{Giorgos:} #1}}
  \newcommand{\philipp}[1]{\textcolor{teal}{\textbf{Philipp:} #1}}
  \newcommand{\preston}[1]{\textcolor{violet}{\textbf{Preston:} #1}}
  \newcommand{\alberto}[1]{}
  \newcommand{\lef}[1]{}
  \newcommand{\giorgos}[1]{}
  \newcommand{\philipp}[1]{}
  \newcommand{\preston}[1]{}
\newcommand{\fabc}{\ensuremath{f_{AbC}}\xspace}
\newcommand{\sysname}{\textsf{BlueBottle}\xspace}
\newcommand{\lmysticeti}{\textsf{BB-Core}\xspace}
\newcommand{\asynclmysticeti}{\textsf{BB-Core-Async}\xspace}
\newcommand{\gossip}{\textsf{BB-Guard}\xspace}
\newcommand{\para}[1]{\smallskip\noindent\textbf{#1}.}
  \newcommand{\change}[1]{#1}
  \newcommand{\change}[1]{\begingroup\color{orange}#1\endgroup}
\newcommand{\altcode}[1]{\textcolor{orange}{#1}}
\newtheorem{observation}{Observation}
\newcommand{\propose}{\textsf{Propose}\xspace}
\newcommand{\decision}{\textsf{Decision}\xspace}
\newcommand{\A}{\text{$\mathcal{A}$}\xspace}
\newcommand{\scommit}{\texttt{commit}\xspace}
\newcommand{\sskip}{\texttt{skip}\xspace}
\newcommand{\sundecided}{\texttt{undecided}\xspace}
\newcommand{\lblamed}{\textsf{LBlamed}\xspace}
\newcommand{\core}{\textsf{core}\xspace}
\newcommand{\lived}{\ensuremath{\Delta_{\text{live}}}\xspace}
\newcommand{\leaderd}{\ensuremath{\Delta_{\text{leader}}}\xspace}
\newcommand{\graced}{\ensuremath{\Delta_{\text{grace}}}\xspace}
\newcommand{\asleep}{\textsf{asleep}\xspace}
\newcommand{\Valid}{\textsf{Valid}\xspace}
\definecolor{eclipseStrings}{RGB}{42,0.0,255}
\definecolor{eclipseKeywords}{RGB}{127,0,85}
\colorlet{numb}{magenta!60!black}
\lstdefinelanguage{json}{
    basicstyle=\normalfont\ttfamily,
    commentstyle=\color{eclipseStrings},
    stringstyle=\color{eclipseKeywords},
    stepnumber=1,
    numbersep=8pt,
    showstringspaces=false,
    breaklines=true,
    string=[s]{"}{"},
    comment=[l]{:\ "},
    morecomment=[l]{:"},
    literate=
        *{0}{{{\color{numb}0}}}{1}
        {1}{{{\color{numb}1}}}{1}
        {2}{{{\color{numb}2}}}{1}
        {3}{{{\color{numb}3}}}{1}
        {4}{{{\color{numb}4}}}{1}
        {5}{{{\color{numb}5}}}{1}
        {6}{{{\color{numb}6}}}{1}
        {7}{{{\color{numb}7}}}{1}
        {8}{{{\color{numb}8}}}{1}
        {9}{{{\color{numb}9}}}{1}
}
\newcommand\YAMLcolonstyle{\color{red}\mdseries}
\newcommand\YAMLkeystyle{\color{black}\bfseries}
\newcommand\YAMLvaluestyle{\color{blue}\mdseries}
\newcommand\language@yaml{yaml}
\lstdefinelanguage
\newcommand\ProcessThreeDashes{\llap{\color{cyan}\mdseries-{-}-}}
\begin{document}

\title{\sysname: Fast and Robust Blockchains through \\ Subsystem Specialization}

\author{Preston Vander Vos$^{2,3}$,
    Alberto Sonnino$^{1,2}$,
    Giorgos Tsimos$^{4}$,
    Philipp Jovanovic$^{1,2}$,
    Lefteris Kokoris-Kogias$^{1}$}

\affiliation{%
    \institution{$^{1}$Mysten Labs;
        $^{2}$University College London;
        $^{3}$Circle;
        $^{4}$Pod Network}
    \country{}
}
\email{}

\begin{abstract}

Blockchain consensus faces a trilemma of security, latency, and decentralization. High-throughput systems often require a reduction in decentralization or robustness against strong adversaries, while highly decentralized and secure systems tend to have lower performance.
We present \sysname, a two-layer consensus architecture to tackle these challenges. 
The main finalization layer, which we call the core layer, is run in a medium-sized validator set. It is instantiated with our novel \lmysticeti consensus protocol. \lmysticeti exposes two client finality paths following the FlexibleBFT~\cite{malkhi2019flexible} paradigm. Fast-path clients that assume $n=5f+1$ experience $2\delta$-finality latency. 
Our experiments show that \lmysticeti reduces latency by $20$--$25\%$ compared to Mysticeti.
Checkpoint-path clients wait for a checkpoint certificate ($4\delta$ latency), which tolerates an additional $\fabc < 2f$ alive-but-corrupt participants---achieving a total of $60\%$ equivocation-fault safety. The key mechanism is a single-proposal-per-height rule that guarantees at most one valid checkpoint exists per slot height, even under heavy equivocation.
The secondary finalization layer, \gossip, is run along a highly decentralized set of guard validators (or full nodes) and provides synchronous decentralized timestamping, proactive misbehavior detection, and a synchronous recovery path. When guard validators observe equivocations or liveness failures they disseminate evidence, agree on misbehaving parties for exclusion or slashing, and either restart the core protocol (for liveness violations) or select a canonical fork (for safety violations). Clients can choose to accept the synchrony assumption as their main assumption and withstand a $n=2f+1$ security level. \gossip can be deployed along any consensus protocol. In \sysname combining \lmysticeti with \gossip  allows clients to choose their security level against their latency requirements. 

\end{abstract}

\maketitle

\pagestyle{\vldbpagestyle}
\begingroup\small\noindent\raggedright\textbf{PVLDB Reference Format:}\\
\vldbauthors. \vldbtitle. PVLDB, \vldbvolume(\vldbissue): \vldbpages, \vldbyear.\\
\href{https://doi.org/\vldbdoi}{doi:\vldbdoi}
\endgroup
\begingroup
\renewcommand\thefootnote{}\footnote{\noindent
    This work is licensed under the Creative Commons BY-NC-ND 4.0 International License. Visit \url{https://creativecommons.org/licenses/by-nc-nd/4.0/} to view a copy of this license. For any use beyond those covered by this license, obtain permission by emailing \href{mailto:info@vldb.org}{info@vldb.org}. Copyright is held by the owner/author(s). Publication rights licensed to the VLDB Endowment. \\
    \raggedright Proceedings of the VLDB Endowment, Vol. \vldbvolume, No. \vldbissue\ %
    ISSN 2150-8097. \\
    \href{https://doi.org/\vldbdoi}{doi:\vldbdoi} \\
}\addtocounter{footnote}{-1}\endgroup
\ifdefempty{\vldbavailabilityurl}{}{
    \vspace{.3cm}
    \begingroup\small\noindent\raggedright\textbf{PVLDB Artifact Availability:}\\
    The source code, data, and/or other artifacts have been made available at \url{\vldbavailabilityurl}.
    \endgroup
}

\section{Introduction}
In currently deployed blockchains it is the protocol, not the client, that decides what finality means.
A transaction is final once it clears a certain point within the security, latency, and decentralization parameter space that the protocol designer fixed in advance.
Yet clients do not value these dimensions equally: a high-value settlement may demand the strongest safety and tolerate waiting, while a latency-sensitive payment prefers the fastest confirmation a weaker assumption allows.
Existing systems make this choice for all clients alike.


FaB~\cite{martin2005fast} showed that lowering the fault-tolerance threshold from $n = 3f + 1$ to $n = 5f + 1$ can reduce finality latency from $3\delta$ to $2\delta$.
Exploring such trade-offs has recently regained traction, with protocols such as Kudzu~\cite{shoup2025kudzu}, Hydrangea~\cite{shrestha2025hydrangea}, and Minimmit~\cite{minimmit}. 
These protocols trade Byzantine resilience for a two-round commit latency, making them more fragile on their fast paths than their $3f+1$ counterparts.
To soften this, Kudzu and Hydrangea retain a secondary commit path alongside the fast one, whereas Minimmit forgoes it entirely.
Crucially, though, the secondary paths of Kudzu and Hydrangea only tolerate additional \emph{crash} faults, i.e., they optimize for more liveness, while assuming the very same Byzantine threshold for safety as the fast path.
Both paths therefore commit under the same safety assumption and clients of these systems are offered no choice over the security--latency trade-off. 
If a client is unwilling to trust the lower resilience assumption, it has no stronger, higher-latency path to fall back on.
This raises an interesting practical question:

\textit{
Can we design a consensus protocol that provides clients with options in the security--latency--decentralization space, finalizing a transaction as soon as the assumptions that the client trusts hold?
}

This paper answers affirmatively by introducing \sysname, a novel dual-layer consensus architecture that synergistically combines the strengths of different consensus paradigms.
Rather than resolving the security--latency--decentralization trilemma at design time on behalf of all clients, \sysname exposes it: each client selects the point it trusts and finalizes at the latency that choice affords.
At \sysname's core is a partially synchronous consensus protocol, \lmysticeti, operated by a medium-sized set of $n_c$ core validators, that trades off failure tolerance ($n_c=5f_c+1$) for lower latency.
It allows clients to decide on $2\delta$ latency with $f_c<\frac{n_c}{5}$, or a $4\delta$ latency with
$f_c<\frac{n_c}{5}$ and an additional $\fabc<\frac{2n_c}{5}$ where $\fabc$ can equivocate but not crash as introduced by FlexibleBFT~\cite{malkhi2019flexible}.

In parallel to \lmysticeti, \sysname runs a secondary synchronous Byzantine Broadcast protocol, which is operated by an enlarged set of validators. We call this enlarged set (that still includes the core validators) the guard validator set. It assumes synchrony and a combined $n=2f+1$ fault mode. Its job is to provide enhanced decentralization, high resilience, and a robust recovery mechanism.
Our approach aims to deliver to the clients a palette of options when it comes to the latency versus security versus decentralization trade-off they wish to follow.

\subsubsection*{The core layer}

Building \sysname's core layer requires a high-performance consensus protocol. Our goal is simple: achieve the lowest possible finality latency and the highest experimentally demonstrated throughput. To do so, we introduce \lmysticeti, which builds on top the partially synchronous DAG-based consensus protocol Mysticeti~\cite{mysticeti} and reduces latency by trading fault tolerance from $n=3f+1$ to $n=5f+1$. \lmysticeti introduces the first $5f+1$ DAG-based consensus protocol and commits transactions in under $0.5$s while sustaining more than $200{,}000$ tx/s (\Cref{sec:evaluation})---a performance level unmatched in the BFT consensus literature.
We also introduce an asynchronous variant of \lmysticeti in \extref{Appendix G}{sec:async}, which tracks the performance of \lmysticeti with a 33\% latency increase.

We use the standard two-step design: first, consensus fixes a single order of operations for everyone, and then identical replicas apply that order ensuring they all end up in the same state (state machine replication). \lmysticeti exposes two explicit finality paths:
\begin{itemize}
    \item {\bf Fast path ($2\delta$):} Clients who follow every block see finality in about $1\,\mathrm{RTT}$ (round-trip time); under $n=5f+1$, this becomes roughly $0.5\,\mathrm{RTT}$ faster than an equivalent $n=3f+1$ protocol. This path is safe as long as at most $f_c$ core validators are Byzantine.
    \item {\bf Checkpoint path ($4\delta$):} Clients willing to wait two additional message delays finalize upon observing a checkpoint certificate (FinalityQC). After a leader slot is committed, validators propose a checkpoint over the resulting state; a critical invariant is that each honest validator proposes \emph{at most one checkpoint per height}. This single-proposal rule guarantees that at most one valid checkpoint can exist per height, even when up to $3f_c$ core validators equivocate. Checkpoint-path clients thus remain safe if fewer than $60\%$ of core validators are equivocating, tolerating an additional $\fabc < 2f_c$ Alive-but-Corrupt~\cite{malkhi2019flexible} faults beyond the $f_c$ Byzantine faults.
\end{itemize}




\subsubsection*{The guard layer}

The guard protocol in \sysname is a highly decentralized auditing and recovery layer, called \gossip. \gossip operates in synchronous steps and 
monitors the core layer for liveness or safety violations (possible with up to $\lfloor 3n_c/5 \rfloor - 1$ faults in the core layer) in the form of equivocations. If this happens, it enables a synchronous recovery protocol.
Upon detecting such misbehavior, the guard validators transition to a recovery phase.
This consists of running Byzantine Broadcast of accumulated evidence and deterministically agreeing on the set of core validators that misbehaved and then excluding them, potentially by slashing their stake.
If the violation relates to liveness, then the reduced set of core validators restarts consensus, whereas if it was a safety violation, they first agree on the canonical fork.
We emphasize that \gossip is not just a simple slashing mechanism, but a full-fledged recovery protocol ensuring that the overall system can continue operating even after a significant attack on the core layer.

\begin{table*}[t]
\centering
\small
\caption{\small Client assumptions and perceived finality latency. $\delta$ denotes the actual message delay, $\Delta$ the synchrony bound used by \gossip. Fast-path clients finalize upon a committed leader slot; checkpoint-path clients finalize upon a FinalityQC (\Cref{sec:checkpoints}). When the core's fast-path threshold is exceeded, \gossip recovers liveness and preserves checkpoint-path safety (\Cref{sec:gossip}).}
\label{tab:comparison}
\begin{tabular}{lcccccc}
\toprule
& \multicolumn{2}{c}{\lmysticeti} & \multicolumn{3}{c}{\sysname (\lmysticeti + \gossip)} \\
\cmidrule(lr){2-3}\cmidrule(lr){4-6}
& Fast path & Checkpoint path & Fast path & Checkpoint path & Synchronous recovery \\
\cmidrule(lr){2-2}
\cmidrule(lr){3-3}
\cmidrule(lr){4-4}
\cmidrule(lr){5-5}
\cmidrule(lr){6-6}
Threshold & $f_c < \frac{n_c}{5}$ & $f_c + \fabc < \frac{3n_c}{5}$ & $f < \frac{n}{6}$ & $f + \fabc < \frac{n}{2}$ & $f < \frac{n}{2}$ \\
Finality & $2\delta$ & $4\delta$ & $2\delta$ & $4\delta$ & $O(\Delta)$ \\
Safety & \checkmark & \checkmark & \checkmark & \checkmark & Recovery preserves Checkpoints \\
\bottomrule
\end{tabular}
\vspace{-0.15cm}
\end{table*}

\subsubsection*{Contributions}
We make the following contributions:
\begin{itemize}
    \item We present \sysname, a novel dual-layer consensus architecture that simultaneously achieves low latency, high security, and high decentralization.
    \item We introduce the two main novel building blocks of \sysname, namely \lmysticeti, a DAG-based consensus protocol that reduces latency by lowering fault-tolerance to $n=5f+1$ and \gossip, a highly decentralized guard protocol for 
          misbehavior monitoring and recovery.
    \item We provide a rigorous security analysis of both \lmysticeti and \gossip, demonstrating that \sysname achieves optimal safety and liveness guarantees under standard assumptions.
    \item We implement and evaluate \lmysticeti and compare it at scale against Mysticeti, confirming the expected latency gains of around $20$--$25\%$ while maintaining similar throughput.
\end{itemize}

\section{System Overview} \label{sec:overview}

Below we provide the system and threat models, the design goals, and a design overview of \sysname.

\subsection{System model} \label{sec:system-model}

We assume a total number of nodes $n$.
\sysname has a \emph{core} and a \emph{guard} layer. All validators are \emph{guard} validators (or guards) but only a subset is \emph{core}.
We denote the number of core validators as $n_c$ and the number of guard validators as $n$.
Core validators continuously operate the core consensus protocol and process transactions, while guard validators checkpoint the output and audit the operation of core validators and help with recovery.
Both core and guard validators are selected using a Sybil-resistant election mechanism~\cite{douceur2002sybil} and we usually assume a proof-of-stake approach~\cite{kucci2021proof}.
We can translate the quantitative assumptions on the nodes to stake assumptions and we often use the two concepts interchangeably.
Core validators are chosen similarly to existing quorum-based blockchains, consisting of roughly the $100$ entities with the highest stake or those that meet specific criteria, such as owning a minimum percentage of the total stake~\cite{sui}.
Guard validators include all stakeholding entities.
The number of guards is assumed to be significantly larger than the number of core validators (multiple hundreds). In practice, we expect full nodes to operate as guards. 
We provide a discussion on the stake distribution in the next section.
\sysname operates as a message-passing system.

\subsection{Threat model} \label{sec:threat-model}

\sysname assumes a computationally bounded adversary, ensuring that common cryptographic security assumptions, like those for hash functions and digital signatures, hold.
We assume a computationally bounded adversary controlling at most $f$ of $n = 2f+1$ total validators (or equivalently, stake $S_f < S/2$ of total stake $S$).
Byzantine nodes may behave arbitrarily; the remaining $n-f$ nodes are honest.
This global bound governs the guard layer; the core layer operates under a stricter threshold ($n_c = 5f_c+1$) that may be temporarily violated, triggering recovery.
The two layers also differ in their network assumptions, detailed below.
We provide a security assessment on the stake requirements for the \lmysticeti and \gossip layers at the end of this section. We also highlight that the two layers of \sysname conceptually operate under different networking models, which we discuss in the following paragraphs.


\para{Core layer} \sysname uses the novel \lmysticeti consensus protocol at its core, which assumes that the total number of core validators $n_c$ satisfies $n_c \geq 5f_c + 1$, where $f_c$ is the maximum number of Byzantine core validators which may deviate from the protocol arbitrarily.
The remaining core validators are assumed to be honest and follow the protocol specification.

\lmysticeti exposes two client finality paths with different security--latency trade-offs, following the FlexibleBFT~\cite{malkhi2019flexible} paradigm.
\emph{Fast-path} clients finalize transactions upon observing a committed leader slot ($2\delta$ latency) and are safe as long as at most $f_c$ core validators are Byzantine.
\emph{Checkpoint-path} clients finalize only upon observing a checkpoint certificate ($4\delta$ latency) and tolerate an additional $\fabc$ \emph{alive-but-corrupt} (AbC) core validators, where $\fabc < 2f_c$.
AbC validators~\cite{malkhi2019flexible} participate in the protocol (so it does not stall) but may equivocate, i.e., sign conflicting messages.
Under the checkpoint path, safety is maintained as long as the total number of equivocating core validators satisfies $f_c + \fabc < 3f_c + 1$, i.e., fewer than $60\%$ of core validators equivocate.
The checkpoint mechanism is formally defined in \Cref{sec:checkpoints} and its safety is proven in \extref{Appendix D.1}{sec:checkpoint-proofs}.
We further assume that \lmysticeti operates in a partially synchronous~\cite{dwork1988consensus} network, where message delays are unbounded until a Global Stabilization Time (GST), after which they are bounded by a constant and known $\Delta$.
 We also provide a variant of \lmysticeti for asynchronous networks in \extref{Appendix G}{sec:async}, where message delays remain unbounded.
Since the threat model of the core layer is significantly weaker than the global assumption, it can fail. However, it will only fail accountably and, as a result, allow the system to detect the adversarial nodes and exclude them, slowly converging the threat model to the operational requirements of the core layer.
The process of monitoring for such accountable faults and initiating recovery is handled by the guard layer.

\para{Guard layer} The guard layer runs \gossip under a synchronous network model with known delay bound $\Delta$.
Crucially, \emph{all validators participate as guards}, including core validators;
the guard set is a superset of the core set, not a disjoint group.
This ensures that corrupting the guard layer requires corrupting the same validators who run the core—an adversary cannot separately target "guard-only" nodes to bypass recovery.
Under synchrony and the global $n = 2f+1$ assumption, \gossip detects equivocations or liveness failures in \lmysticeti, constructs a provable blameset of at least $f_c+1$ misbehaving core validators, and excludes them via Byzantine Broadcast.
While synchrony is a stronger assumption than partial synchrony, \sysname provides optimistic responsiveness: when \lmysticeti's assumptions hold, clients experience fast $2\delta$ or $4\delta$ finality without waiting for synchronous confirmation.

\para{Client Finality Guarantees}
The client finality guarantees are a local assumption. This is on par with designs around Flexible BFT~\cite{malkhi2019flexible, neu2024optimal}. Clients who believe the adversary is not actively compromising \lmysticeti can get the faster finality, whereas the rest can follow the checkpoint finality. Table~\ref{tab:comparison} presents the fault assumptions for clients.

\para{Stake distribution between validator groups} 
We show how to translate the quantitative assumptions from our threat model to stake assumptions and determine the stake distribution between core and guard validators.
We assume there is a total amount of stake $S$ available such that $S=2S_f+1$ where $S_f$ is the stake controlled by the adversary.
We denote the stake assigned to core validators by $S_c$.
Given our initial stake distribution assumptions, the security assumption of the core layer might be violated at some point.
However, in this case, we still want to achieve safety of checkpoints. 
To preserve checkpoint safety, the adversary is allowed to control no more than $60\%$ of $S_c$.
Let $S_{c_f}$ denote the core stake controlled by the adversary; we must maintain the invariant 
\begin{equation}\label{eq:invariant}
S_{c_f}< 3/5 \cdot S_c~.
\end{equation}
Since the adversary can concentrate its full corruption budget $S_f = (S-1)/2$ on the core validators, i.e., $S_{c_f}=(S-1)/2$, and considering that invariant~\ref{eq:invariant} must nevertheless hold, we obtain
\begin{equation}\label{eq:requirement}
    (S-1)/2< 3/5 \cdot S_c \Rightarrow S_c > 5(S-1)/6~.
\end{equation}
In other words, we require that the total stake contributed by the core validators $S_c$ is at least $\frac{5(S-1)}{6}$. 

Consequently, the stake contributed by the guard validators $S_g$ is $S-S_c< \frac{S+5}{6}$.
For simplicity, we can set $S_c = S\cdot5/6 $. 

\subsection{Design goals} \label{sec:goals}

\sysname has the following three primary design goals in terms of performance, security, and decentralization.

\begin{itemize}
    \item[\bf G1] {\bf High performance:} \sysname provides significantly lower latency and equally high throughput as state-of-the-art (DAG-based) consensus protocols when all \lmysticeti underlying assumptions hold. 
    \item[\bf G2] {\bf High security:} \sysname can tolerate up to $S_f$ stake being controlled by the adversary with the total amount of stake being $S = 2S_f + 1$.
    \item[\bf G3] {\bf High decentralization:} \sysname enables non-core-validator participants to contribute to the system's security via \gossip.
\end{itemize}

\subsection{Design overview} \label{sec:design}

We present an overview of \lmysticeti and \gossip.

\subsubsection{The core layer} \sysname introduces the novel consensus protocol \lmysticeti for its core layer. 
\lmysticeti builds atop an uncertified DAG~\cite{mysticeti,cordial-miners,jovanovic2025mahi}, and is the first $5f+1$ DAG-based consensus protocol. It trades off the security threshold from $n = 3f+1$ to $n = 5f+1$ to achieve finality in $2$ message delays instead of the original $3$.
We prove that safety and liveness hold for \lmysticeti in \extref{Appendix D}{sec:proofs}.
The \lmysticeti protocol follows similar ideas to Mysticeti with the following modifications to its decision rules:
\begin{itemize}
    \item {\bf Direct decision rule:} Commit (skip) a leader in round $R$, if it has $4f_c+1$ votes (blames) from round $R+1$.
    \item {\bf Indirect decision rule:} Commit an undecided leader in round $R$, if there is a committed leader in round $R+2$ that supports the round-$R$ leader with at least $2f_c+1$ round-$R+1$ votes.
\end{itemize}

\subsubsection{The guard layer}

\sysname introduces a novel guard protocol, \gossip, which serves three goals:

\para{Fault detection}
This is the common-case execution of \gossip. 
Guard validators synchronously gossip the blocks output by \lmysticeti as a commit sequence and make sure that there is no equivocated block proposal. 
If this is true and no equivocation is detected within the timeout, the \lmysticeti continues uninterrupted. Otherwise, \gossip has to proceed to accountability assignment.


\para{Accountability assignment}
There are two types of accountable faults in \sysname: safety and liveness faults. 
Safety faults are directly detected through the previous mechanism and, once detected, are flagged to all connected peers. 
Liveness faults are only detected when a core validator flags a set of $f_c+1$ core validators as non-responsive and sends this through the gossip network for notarization.
At this point, the guard validators start a timer to see if the DAG building has halted.
If the timer expires, they locally decide that there are liveness faults.

\para{Synchronous recovery} 
When a core validator detects a fault, it starts the synchronous recovery protocol. 
For this, it first decides on an accountable set of faulty validators it wishes to exclude. 
This set is unique and the core validator cannot change it afterwards.
The accountable set comes from either safety faults (i.e., double-spending from the core), or liveness faults as described above. 
In either event, an honest validator (core or guard) can deduce at least an $f_c+1$-sized accountable set.
The validator then Byzantine Broadcasts this set and all validators sign the set if they believe that the set indeed constitutes Byzantine validators. 
The first Byzantine Broadcast that outputs a valid set is considered the last message of the view. 
The validators exclude the blamed stakeholders from the new view and return to regular operation, optionally deciding on the fork they want to adopt.

\section{The \lmysticeti Protocol} \label{sec:lmysticeti}

\lmysticeti is the first $5f+1$ DAG-based consensus protocol. It is built on top of an uncertified DAG, similarly to Mysticeti~\cite{mysticeti} and Cordial Miners~\cite{cordial-miners}. It commits in two message delays by lowering the fault threshold to $f_c < n_c/5$ ($\approx 20\%$). 
We state \lmysticeti's safety and liveness theorems inline with the protocol and defer the full proofs to \extref{Appendix D}{sec:proofs}.
\Cref{alg:main} provides the main entry point and is invoked whenever a validator receives a new block. \Cref{alg:decider} specifies the decision process, with each validator instantiating one \emph{Decider} instance per leader slot. Finally, \Cref{alg:helper} defines utility procedures that support common operations used throughout the protocol. The orange-colored lines of the algorithm refer to the asynchronous version of \lmysticeti which we prove in the  \extref{Appendix G}{sec:async}. This is of independent interest and can be ignored.
Finally, \Cref{alg:checkpoint} defines the checkpointing mechanisms of \lmysticeti.

\begin{algorithm}[t]
    \caption{\lmysticeti}
    \label{alg:main}
    \scriptsize
    \begin{algorithmic}
        \State \texttt{leadersPerRound} \Comment{A number between 1 and $4f_c+1$}
        \State \texttt{waveLength} \Comment{Set to $2$ for \lmysticeti, \altcode{$3$ for \asynclmysticeti}}
        \Statex

        \Procedure{TryDecide}{$r_{committed}, r_{highest}$}
        \State $S \gets [ \; ]$ \Comment{Holds decisions}
        \For{$r \gets r_{highest}$ \textbf{down to} $r_{committed} + 1$}
        \For{$l \gets \texttt{leadersPerRound} - 1$ \textbf{down to} $0$}
        \State $i \gets r \; \bmod$ \texttt{waveLength}
        \State $D \gets$ \Call{Decider}{$i, l$}
        \State $w \gets D.$\Call{WaveNumber}{$r$}
        \State $s \gets D.$\Call{TryDirectDecide}{$w$}
        \If{$s = \bot$} $s \gets D.$\Call{TryIndirectDecide}{$w, S$}
        \EndIf
        \State $S \gets s \parallel S$
        \EndFor
        \EndFor
        \State \Return $S$
        \EndProcedure
        \Statex

        \Procedure{ExtendCommitSeq}{$r_{committed}, r_{highest}$}
        \State $S \gets$ \Call{TryDecide}{$r_{committed}, r_{highest}$}
        \State $S_{commit} \gets [ \; ]$ \Comment{Holds committed blocks}
        \For{$s \in S$}
        \If{$s = \bot$} \textbf{break}
        \EndIf
        \If{$s = \texttt{Commit}(b_{leader})$} $S_{commit} \gets S_{commit} \parallel b_{leader}$
        \EndIf
        \EndFor
        \State \Return \Call{LinearizeSubDags}{$S_{commit}$} \Comment{Same as DAG-Rider~\cite{dag-rider}}
        \EndProcedure

    \end{algorithmic}
\end{algorithm}
\begin{algorithm}[ht]
    \caption{Decider Instance}
    \scriptsize
    \label{alg:decider}
    \begin{algorithmic}

        \State \texttt{waveOffset} $= i$ \Comment{The first parameter of the Decider (i)}
        \State \texttt{leaderOffset} $= l$ \Comment{The second parameter of the Decider (l)}
        \State \texttt{waveLength} \Comment{Set to $2$ for \lmysticeti, \altcode{$3$ for \asynclmysticeti}}
        \Statex

        \Procedure{WaveNumber}{$r$}
        \State \Return $(r - \texttt{waveOffset}) / \texttt{waveLength}$
        \EndProcedure
        \Statex

            \Procedure{ProposeRound}{$w$}
        \State \Return $(w * \texttt{waveLength}) + \texttt{waveOffset}$
        \EndProcedure
        \Statex

        \Procedure{DecisionRound}{$w$}
        \State \Return \Call{ProposeRound}{$w$}$ + (\texttt{waveLength} - 1)$
        \EndProcedure
        \Statex

        \Procedure{StronglyCertifiedLeader}{$w, b_{leader}$}
        \State $B_{decision} \gets$ \Call{GetDecisionBlocks}{$w$}
        \State \Return $|\{ b' \in B_{decision} : $ \Call{IsVote}{$b', b_{leader}$}$ \}| \geq 4f_c + 1$
        \EndProcedure
        \Statex

        \Procedure{SkippedLeader}{$w, b_{leader}$}
        \State $B_{decision} \gets$ \Call{GetDecisionBlocks}{$w$}
        \State \Return $|\{ b' \in B_{decision} : \neg$\Call{IsVote}{$b', b_{leader}$}$ \}| \geq 4f_c + 1$
        \EndProcedure
        \Statex

        \Procedure{TryDirectDecide}{$w$}
        \State $B_{leader} \gets$ \Call{GetLeaderBlocks}{$w$, \texttt{leaderOffset}}
        \For{$b_{leader} \in B_{leader}$}
        \If{\Call{SkippedLeader}{$w, b_{leader}$}} \Return \texttt{Skip}
        \EndIf
        \If{\Call{StronglyCertifiedLeader}{$w, b_{leader}$}} \Return \texttt{Commit}$(b_{leader})$
        \EndIf
        \EndFor
        \State \Return $\bot$
        \EndProcedure
        \Statex

        \Procedure{WeaklyCertifiedLeader}{$b_{anchor}, b_{leader}$}
        \State $w \gets$ \Call{WaveNumber}{$b_{leader}.round$}
        \State $B_{decision} \gets$ \Call{GetDecisionBlocks}{$w$}
        \State \Return $|\{ b \in B_{decision} : $ \Call{IsVote}{$b, b_{leader}$} $\land$ \Call{Link}{$b, b_{anchor}$}$ \}| \geq 2f_c + 1$
        \EndProcedure
        \Statex

        \Procedure{TryIndirectDecide}{$w, S$}
        \State $r_{decision} \gets $\Call{DecisionRound}{$w$}
        \State $s_{anchor} \gets$ first $s \in S$ s.t. $r_{decision} < s.round \land s \neq$ \texttt{Skip}
        \If{$s_{anchor} = \texttt{Commit}(b_{anchor})$}
        \State $B_{leader} \gets$ \Call{GetLeaderBlocks}{$w$, \texttt{leaderOffset}}
        \If{$\exists \; b_{leader} \in B_{leader}$ s.t. \Call{WeaklyCertifiedLeader}{$b_{anchor}, b_{leader}$}} \Return \texttt{Commit}$(b_{leader})$
        \Else \; \Return \texttt{Skip}
        \EndIf
        \EndIf
        \State \Return $\bot$
        \EndProcedure

    \end{algorithmic}
\end{algorithm}
\begin{algorithm}[ht]
    \caption{Helper Functions}
    \scriptsize
    \label{alg:helper}
    \begin{algorithmic}

        \State \texttt{validators} \Comment{The set of validators}
        \State \altcode{\texttt{async} \Comment{Whether the protocol is asynchronous}}
        \Statex

        \Procedure{GetDecisionBlocks}{$w$}
        \State $r_{decision} \gets $\Call{DecisionRound}{$w$}
        \State \Return $DAG[r_{decision}]$
        \EndProcedure
        \Statex

        \Procedure{GetLeaderBlocks}{$w, rank$} \Comment{Validators may equivocate}
        \State $r_{propose} \gets \Call{ProposeRound}{w}$
        \If{\altcode{$\texttt{async}$}}
        \State \altcode{$r_{decision} \gets $\Call{DecisionRound}{$w$}}
        \State \altcode{$s \gets \Call{CombineCoinShares}{\{ b.share \text{ s.t. } b \in DAG[r_{decision}] \}}$}
        \Else
        \State $s \gets r_{propose}$
        \EndIf
        \State $leader \gets \texttt{validators}[(s + rank) \bmod |\texttt{validators}|]$
        \State \Return $\{ b \in DAG[r_{propose}] : b.author = leader\}$
        \EndProcedure
        \Statex

        \Procedure{IsVote}{$b_{support}, b_{leader}$}
        \State /* Note: If \texttt{waveLength} $=2$, equivalent to $\Call{Link}{b_{leader}, b_{support}}$ */
        \Function{VotedBlock}{$b, id, r$} \Comment{Depth-first search}
        \If{$r \geq b.round$} \Return $\perp$ \EndIf
        \For{$b' \in b.parents$}
        \If{$(b'.author, b'.round) = (id, r)$} \Return $b'$ \EndIf
        \State $res \gets \Call{VotedBlock}{b', id, r}$
        \If{$res \neq \perp$} \Return $res$ \EndIf
        \EndFor
        \State \Return $\perp$
        \EndFunction
        \State $(id, r) \gets (b_{leader}.author, b_{leader}.round)$
        \State \Return $\Call{VotedBlock}{b_{support}, id, r} = b_{leader}$
        \EndProcedure
        \Statex

        \Procedure{Link}{$b_{old}, b_{new}$}
        \State \Return $\exists$ a sequence of $k \in \mathbb{N}$ blocks $b_1, \ldots, b_k$ s.t.
        \Statex \hspace{2em} $b_1 = b_{old} \land b_k = b_{new} \land \forall j \in [2, k] : b_j \in \bigcup_{r \geq 1} DAG[r] \land b_{j-1} \in b_j.parents$
        \EndProcedure

    \end{algorithmic}
\end{algorithm}
\begin{algorithm}[ht]
    \caption{Checkpoint Protocol}
    \scriptsize
    \label{alg:checkpoint}
    \begin{algorithmic}
        \State \texttt{committed} $\gets \emptyset$ \Comment{Heights for which a commit was observed}
        \State \texttt{proposed} $\gets \emptyset$ \Comment{Heights for which a ChkProp was broadcast}
        \State \texttt{witnessed} $\gets \emptyset$ \Comment{Heights for which a ChkWitness was broadcast}
        \Statex

        \Procedure{OnCommit}{$B, s$} \Comment{Leader $B$ committed at height $s$}
        \If{$s \notin \texttt{committed}$}
        \State $\texttt{committed} \gets \texttt{committed} \cup \{s\}$
        \State $\sigma \gets \Call{Apply}{B}$ \Comment{Deterministic state transition}
        \State \textbf{sign and broadcast} $\langle \textsc{ChkProp}, s, H(B), \sigma \rangle$
        \EndIf
        \EndProcedure
        \Statex

        \Procedure{OnChkProp}{$\langle \textsc{ChkProp}, s, H(B), \sigma \rangle$}
        \State $\sigma' \gets \Call{Apply}{B}$
        \If{$\sigma' \neq \sigma$ \textbf{or} $H(B)$ malformed} \Return \EndIf
        \State $\textsc{ChkProp} \leftarrow \textsc{ChkProp} \; \cup \; \texttt{proposed}[s]$
        \If{$|\{ p \in \texttt{proposed}[s] : p.\sigma = \sigma \}| \geq 4f_c{+}1$}
        \State $\mathit{CheckpointQC} \gets$ the $4f_c{+}1$ matching proposals
        \State \textbf{sign and broadcast} $\langle \textsc{ChkWitness}, s, H(B), \sigma, H(\mathit{CheckpointQC}) \rangle$
        \EndIf
        \EndProcedure
        \Statex

        \Procedure{OnChkWitness}{$\langle \textsc{ChkWitness}, s, H(B), \sigma, H(\mathit{CheckpointQC}) \rangle$}
        \If{$\textsc{ChkWitness}$ malformed} \Return \EndIf
        \State $\textsc{ChkWitness} \leftarrow \textsc{ChkWitness} \; \cup \; \texttt{witnessed}[s]$
        \If{$|\texttt{witnessed}[s]| \geq 4f_c{+}1$}
        \State $\mathit{FinalityQC} \gets$ the $4f_c{+}1$ matching witnesses
        \State \textbf{return} $\mathit{FinalityQC}$ \Comment{finalise height $s$ with state $\sigma$}
        \EndIf
        \EndProcedure
    \end{algorithmic}
\end{algorithm}

\subsection{The directed acyclic graph} \label{sec:dag}

\lmysticeti builds a \emph{directed acyclic graph (DAG)} of \emph{blocks} that reference each other via cryptographic hashes, similar to Mysticeti.
This DAG enables validators to decide (locally) which blocks to commit and in what order.
The \lmysticeti protocol proceeds in a sequence of logical \emph{rounds}.
In each round, every honest validator proposes exactly one block.

\para{Block creation and validation}
A valid block must include at least the following elements: (1) the author $A$ of the block together with their valid cryptographic signature on the block's contents; (2) a round number $R$; (3) a list of transactions; and (4) at least $4f_c+1$ distinct hashes of valid blocks from the previous round $R-1$.
%

Honest validators store only valid blocks in their local DAG and discard invalid ones. Moreover, they reference a block's hash in their proposals only if the block is valid and after downloading and verifying its entire causal history, thereby ensuring the correctness of the block's lineage. In particular, honest validators always include a reference to their own block from the previous round.

\para{Rounds and waves}
As already mentioned, \lmysticeti operates in \emph{rounds} and two subsequent rounds $R$ and $R+1$ form a \emph{wave}. \Cref{fig:rounds} (left) illustrates an example of a single wave with six validators $(v_0, v_1, v_2, v_3, v_4, v_5)$.
The first round $R$ (\propose) of a wave contains the blocks that the wave attempts to commit ($P_0, P_1, P_2, P_3, P_4, P_5$) along with the equivocating block $P_1'$.
In the second round $R+1$ (\decision), each block serves as a \emph{vote} for the \propose blocks it references.
In the example of \Cref{fig:rounds}, blocks $V_0, V_1, V_2, V_3,$ and $V_4$ vote for $P_0, P_1, P_2, P_3,$ and $P_4$ (but not for $P_1'$ or $P_5$), whereas block $V_5$ votes for $P_1', P_2, P_3, P_4,$ and $P_5$ (but not for $P_0$ or $P_1$). The procedure $\Call{IsVote}{\cdot}$ in \Cref{alg:helper} 
formally defines a vote.
A \propose block is considered \emph{strongly certified} if it has at least $4f_c+1$ votes.
We say that the voting blocks form a \emph{strong certificate} for the proposed block.
In the example, blocks $P_0, P_1, P_2, P_3,$ and $P_4$ are strongly certified, while $P_1'$ and $P_5$ are not.
If a block receives at least $2f_c+1$ votes but fewer than $4f_c+1$, it is considered \emph{weakly certified}.
As before, we also say that the voting blocks form a \emph{weak certificate} for the proposed block.
As shown in \Cref{fig:rounds} (right), \lmysticeti  initiates a new wave every round: round $R$ is a \propose round for wave $1$, round $R+1$ is a \decision round for wave $1$ and a \propose round for wave $2$, etc.
\Cref{alg:decider} 
formally defines a wave.

\begin{figure*}[t]
    \centering
    \includegraphics[width=0.70\textwidth]{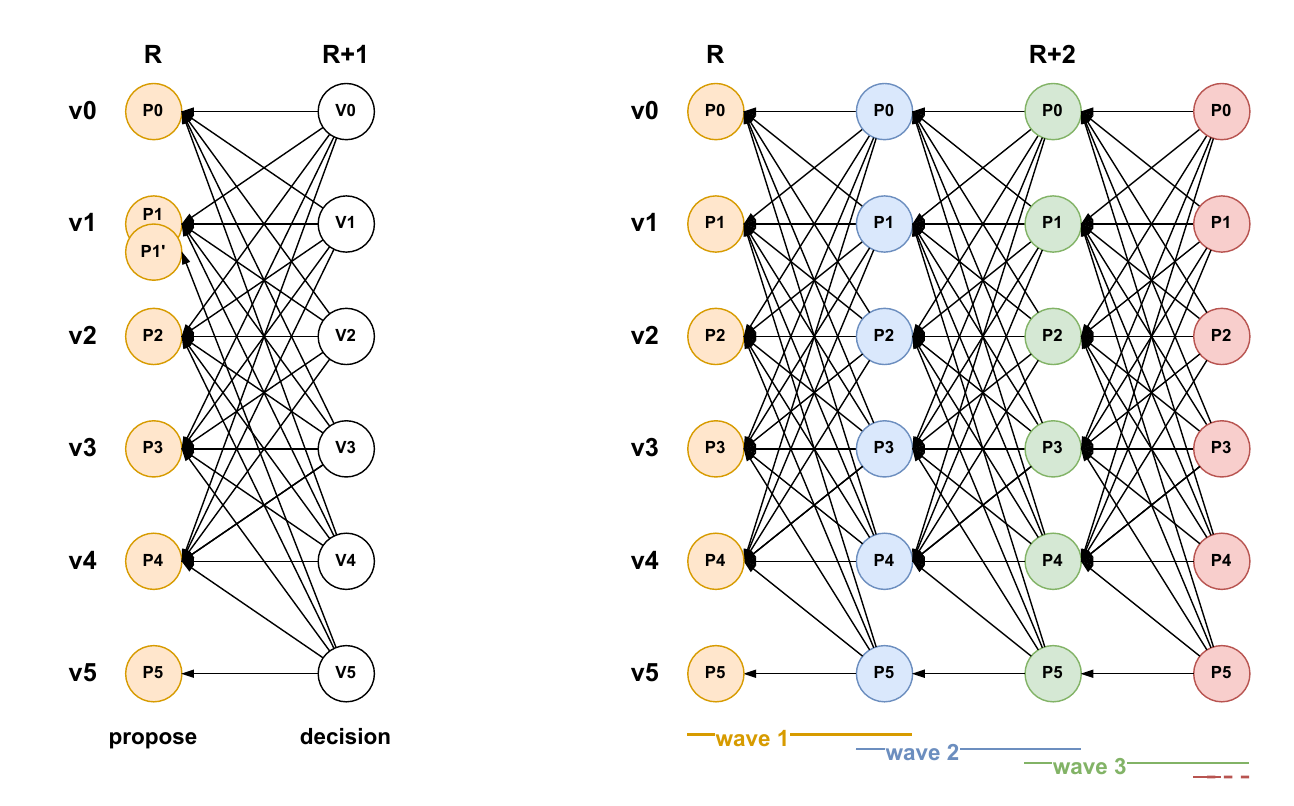}
    \caption{ \footnotesize
        The structure of the \lmysticeti DAG. Left: A wave consisting of two rounds (\propose and \decision). Right: Wave patterns in the \lmysticeti protocol (each round initiates a new overlapping wave).
    }
    \label{fig:rounds}
\end{figure*}

\subsection{Proposers and anchors}

\lmysticeti's leader slots represents a pair (validator, round) and can be either empty or contain the validator's proposal for the respective round. If the validator is Byzantine, they may have equivocated, in which case the slot would contain more than one block.

Multiple leader slots can be instantiated per round, enabling parallel leader block proposals. Each slot is in one of three states: \scommit, \sskip, or \sundecided. All slots begin in the \sundecided state, and the protocol's objective is to classify them as either \scommit or \sskip. The commit state indicates that the slot's block should be included in the total ordering, while the \sskip state allows the protocol to exclude slots from crashed or Byzantine validators. Crucially, the \sundecided state forces subsequent leader slots to wait, preventing unsafe commitments. Similarly to related work \cite{mysticeti,shoal++,jovanovic2025mahi,sailfish}, the number of leader slots per round is a system parameter. Before advancing round, validators wait up to $2 \Delta$ time to receive the blocks from the leaders slots of the previous round, if they have not already.

\subsection{The decision rule} \label{sec:decision-rule}

We present the decision rule of \lmysticeti through an example protocol run.
\Cref{fig:example} shows the local view of a \lmysticeti validator in a system with six validators $(v_0, v_1, v_2, v_3, v_4, v_5)$, parameterized with two leader slots per round. We denote a block as $B_{(v_i, R)}$, where $v_i$ is the issuing validator and $R$ is the block's round.
Initially, all proposer slots are in the \sundecided state. The validator examines the portion of the DAG shown in \Cref{fig:example} and attempts to classify as many leader-slot blocks as possible into either \scommit or \sskip.

\begin{figure}[t]
    \centering
    \includegraphics[width=0.70\columnwidth]{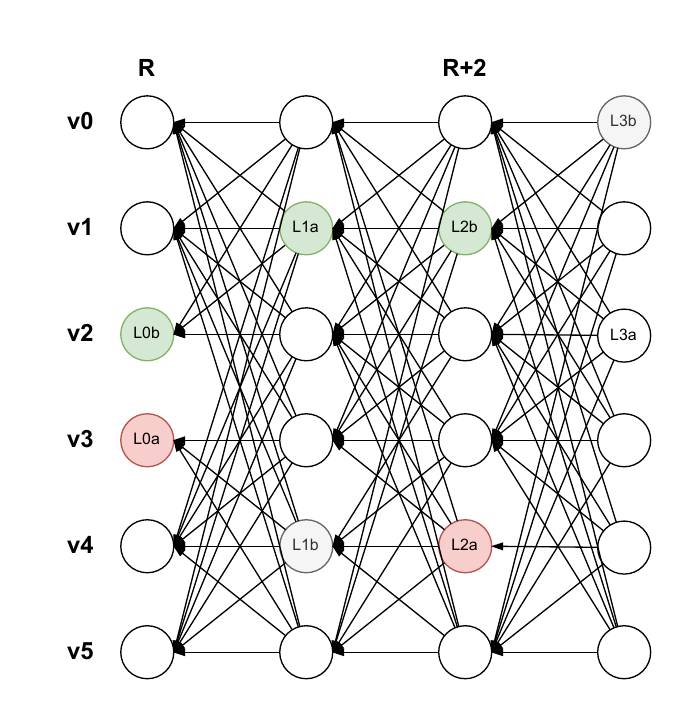}
    \caption{\footnotesize Example with six validators and two leader slots per round.}
    \label{fig:example}
\end{figure}

\para{Step 1: Determine the leader slots}
The sequence of leader slots for each round is predefined and known to all validators, following a deterministic round-robin rotation. In the example of \Cref{fig:example}, every validator knows in advance that the leader slot sequence is [$L_{0a}$, $L_{0b}$, $L_{1a}$, $L_{1b}$, $L_{2a}$, $L_{2b}$, $L_{3a}$, $L_{3b}$].
This deterministic mechanism ensures that even if validators have divergent DAG views, they still agree on the leader slots (and their order) for a given round—regardless of whether a block has been observed in that slot. This design also enables low latency by allowing multiple leaders per round, using these slots to order the causal history.

\para{Step 2: Direct decision rule}
The validator attempts to classify each slot (even those without a block) as either \scommit or \sskip. To do so, it processes each slot individually, starting with the lowest slot that is part of a complete wave ($L_{2b}$), applying the \lmysticeti \emph{direct decision rule}. The validator classifies a block $B$ in a slot as \sskip if it observes $4f_c+1$ blocks from the subsequent \decision round that do not vote for it, and as \scommit if it is strongly certified. As discussed in \Cref{sec:dag}, a block $B$ is strongly certified if there are $4f_c+1$ blocks voting for it. Otherwise, the validator leaves the slot as \sundecided.

In this example, the validator first analyzes $L_{2b}$ and observes that $B_{(v_0, R+3)}$, $B_{(v_1, R+3)}$, $B_{(v_2, R+3)}$, $B_{(v_3, R+3)}$, and $B_{(v_5, R+3)}$ strongly certify it.
Therefore, it classifies $L_{2b}$ as \scommit.
\Cref{sec:evaluation} shows that this scenario is the most common (in the absence of an adversary) and results in the lowest latency.
The validator then analyzes $L_{2a}$ and observes that $B_{(v_0, R+3)}$, $B_{(v_1, R+3)}$, $B_{(v_2, R+3)}$, $B_{(v_3, R+3)}$, and $B_{(v_5, R+3)}$ do not vote for it.
Therefore, it classifies $L_{2b}$ as \sskip.
The presence of $4f_c+1$ blocks from the \decision round that do not vote for a block ensures that it will never be certified, and will thus never be committed by other validators with a potentially different local view of the DAG.
\Cref{sec:evaluation} shows that this rule allows \lmysticeti to promptly skip (benign) crashed leaders to minimize their impact on the protocol's performance.

\para{Step 3: Indirect decision rule}
In the case where the direct decision rule cannot classify a slot, the validator uses the \lmysticeti \emph{indirect decision rule}. This rule looks at future slots to decide about the current one. First, it finds an \emph{anchor}. This is the earliest leader slot with a round number $R' > R + 1$ that is either still classified as \sundecided or already classified as \scommit. If the anchor is \sundecided, the validator marks the current slot as \sundecided. If the anchor is \scommit, the validator checks if it references at least one weak certificate over the current slot. If it does, the validator marks the current slot as \scommit. If it does not, the validator marks the current slot as \sskip. We show in \extref{Appendix D}{sec:proofs} that the direct and indirect decision rules are consistent, namely if one validator direct commits a block no honest validators will indirect skip it and vice versa.

In this example, the validator fails to classify $L_{0b}$ using the direct decision rule as it is neither strongly certified nor are there enough blocks from the subsequent \decision round that do not reference it to classify it as \sskip. It thus searches for its anchor. Since $L_{2a}$ has been classified as \sskip, it cannot serve as an anchor; therefore, $L_{2b}$ is the anchor for $L_{0b}$. Given that $L_{2b}$ references $B_{(v_0, R+1)}$, $B_{(v_1, R+1)}$, and $B_{(v_2, R+1)}$, which form a weak certificate for $L_{0b}$, the validator classifies $L_{0b}$ as \scommit.
Finally, $L_{0a}$ cannot be classified using the direct decision rule as well. Its anchor is again $L_{2b}$, but it does not reference a weak certificate over $L_{0a}$. Therefore, the validator classifies $L_{0a}$ as \sskip.

\para{Step 4: Commit sequence}
After processing all slots, the validator derives an ordered sequence of the leader-slot blocks. It then iterates over this sequence, committing all slots marked as \scommit and skipping all slots marked as \sskip, until it encounters the first \sundecided slot. As shown in \extref{Appendix D}{sec:proofs}, this commit sequence is safe, and eventually, every slot is classified as either \scommit or \sskip.
In the example of \Cref{fig:example}, the leader sequence is [$L_{0b}$, $L_{1a}$].

Following the approach introduced by DagRider~\cite{dag-rider}, the validator derives the final commit sequence by linearizing the blocks within the sub-DAG defined by each leader block using a depth-first search. If a block has already been included by a previous leader slot, it is not re-linearized. Leader slots are processed sequentially, ensuring that all blocks appear in the final commit sequence exactly once and in an order consistent with their causal dependencies. The procedure $\Call{LinearizeSubDags}{\cdot}$ (\Cref{alg:main}) formalizes this step.
In the running example, the commit sequence is [$L_{0b}$, $B_{(v_0, R)}$, $B_{(v_1, R)}$, $B_{(v_4, R)}$, $B_{(v_5, R)}$, $L_{1a}$].

\subsection{Client finality paths and checkpoints} \label{sec:checkpoints}

\lmysticeti exposes two finality paths to clients, each offering a different trade-off between latency and fault tolerance.
Both paths use the same quorum threshold of $4f_c+1$. \Cref{alg:checkpoint} formalizes the checkpoint protocol.

\para{Fast path ($2\delta$ finality)}
A fast-path client finalizes a transaction as soon as the leader slot containing it is committed via the direct or indirect decision rule described above.
This requires a single wave (two message delays) in the common case when the leader is directly committed.
Fast-path safety holds as long as at most $f_c$ of the $n_c = 5f_c+1$ core validators are Byzantine\iflong\ (\Cref{thm:safety}); liveness after GST is established by \Cref{thm:liveness} and agreement by \Cref{thm:agreement} (see \Cref{sec:proofs})\else; fast-path safety, liveness after GST, and agreement are established in \longversion\fi.

\para{Checkpoint path ($4\delta$ finality)}
Checkpoint-path clients trade latency for stronger fault tolerance by waiting for a \emph{checkpoint certificate}.
The checkpoint mechanism chains two additional certificate steps on top of the commit path, as follows.

\para{Step 5: Checkpoint proposal}
After a leader slot at height $s$ is committed (i.e., a strong certificate exists for the leader block~$B$ at round~$R$), each validator computes the deterministic state transition $\sigma = \textsc{Apply}(B)$ and then broadcasts a \emph{checkpoint proposal} $\langle \textsc{ChkProp}, s, H(B), \sigma \rangle$.
Crucially, \textbf{an honest validator broadcasts at most one checkpoint proposal per slot height}, even if it later observes a competing committed branch. This single-proposal rule is the key invariant that prevents conflicting checkpoints.
When $4f_c+1$ matching checkpoint proposals for the same $(s, H(B), \sigma)$ are collected, they form a \emph{CheckpointQC}.

\para{Step 6: Checkpoint witness}
If a validator observes a valid CheckpointQC for height $s$, it broadcasts a \emph{checkpoint witness}\\ $\langle \textsc{ChkWitness}, s, H(B), \sigma, H(\text{CheckpointQC}) \rangle$.
An honest validator broadcasts a witness as long as the checkpoint's state $\sigma$ matches its own deterministic execution of the payload, even if its single permitted checkpoint proposal for that height was already spent on a different, unfinalized branch.
When $4f_c+1$ matching witnesses are collected, they form a \emph{FinalityQC}.
A checkpoint-path client finalizes height $s$ upon observing the FinalityQC together with its underlying CheckpointQC.

\para{Latency derivation}
The four message delays of this path are:
\begin{enumerate}
    \item The leader proposes a block (round $R$).
    \item Validators vote; $4f_c+1$ votes yield a strong certificate, committing the leader (round $R+1$). \emph{Fast-path clients finalize here.}
    \item Validators broadcast checkpoint proposals; $4f_c+1$ matching proposals form a CheckpointQC (round $R+2$).
    \item Validators broadcast checkpoint witnesses; $4f_c+1$ matching witnesses form a FinalityQC (round $R+3$). \emph{Checkpoint-path clients finalize here.}
\end{enumerate}
In practice, checkpoint proposals and witnesses are embedded as metadata in subsequent DAG blocks (rounds $R+2$ and $R+3$) rather than sent as separate messages, keeping the mechanism integrated into the DAG structure.

\para{Checkpoint-path fault tolerance}
The checkpoint path tolerates an additional $\fabc < 2f_c$ alive-but-corrupt (AbC) validators~\cite{malkhi2019flexible}---validators that participate in the protocol but may equivocate.
Under this extended threat model, the total number of equivocating validators can reach up to $f_c + \fabc < 3f_c + 1$, i.e., fewer than $60\%$ of the $5f_c+1$ core validators.
Safety is guaranteed because the single-proposal rule ensures that at most one CheckpointQC can exist per height\iflong\ (\Cref{lem:unique-checkpoint}, \Cref{sec:proofs})\else\ (shown in \longversion)\fi, even when up to $3f_c$ validators equivocate\iflong; the end-to-end checkpoint-path safety statement is \Cref{thm:resilient-safety}\fi.
If the fast-path safety threshold is exceeded (i.e., more than $f_c$ validators equivocate) and a fork occurs at the commit level, the checkpoint path remains safe: honest replicas halt and rely on \gossip's recovery mechanism (\Cref{sec:gossip}) to resolve the fork, using the unique CheckpointQC chain as the canonical history.

\section{The \gossip protocol} \label{sec:gossip}

While the previous section focuses on how \lmysticeti achieves low latency, this holds under the assumption of fewer corruptions and medium-level of decentralization, 
which introduces the practical risk of a weaker resilience threshold. 
We now present an approach to mitigate any security concerns, by introducing an additional (slow) layer of higher resilience and decentralization.
That is \gossip, a protocol that monitors the operation of \lmysticeti in a synchronous pace, and corrects potential violations via two guarantees; 
(i) any safety or liveness violation will be caught and 
(ii) at least $f_c+1$ violating core validators will be provably identified by all participating validators.
In practice, \gossip will operate in a synchronous, slower pace than \lmysticeti, while verifying that the main protocol is operating correctly, 
and recovering liveness and safety in case any of the two fails due to more than $f_c$ corruptions. All \lmysticeti validators are also \gossip validators.
We present the main \gossip functionality in Algorithm~\ref{alg:guard} and some helper functions in Algorithm~\ref{alg:guard-helper}.
While \gossip is specifically designed for \lmysticeti, its framework can be generalized to be deployed along different consensus protocols.
The existence of \gossip allows \sysname to have a checkpoint path ($4\delta$) resilient for up to $f=\frac{n}{2}$ malicious validators since its liveness is also guranteed through recovery (see~\Cref{tab:comparison}).

\subsection{Motivation}
The \gossip protocol ensures safety and liveness of \sysname when the system is under attack. Clients can opt to utilize \gossip and exchange latency to gain in resilience in scenarios  with higher decentralization.
We argue via two constructive examples. 

\emph{Example 1: Liveness failure.} 
Assume that during the operation of \lmysticeti, the resilience assumption breaks, and an adversary \A takes control of $f_c+1$ validators.
In such a case, it is simple to see that our protocol loses liveness indefinitely, even via crash-faults only.
The protocol requires $4f_c+1$ votes for each round's slots to be determined and for it to move forward to the next round.
With fewer than $4f_c+1$ honest validators participating, the protocol could be skipping rounds or make no progress indefinitely.

\emph{Example 2: Safety failure.} 
Again, assume that during the operation of \lmysticeti, the resilience assumption breaks, and an adversary \A takes control of $3f_c$ core validators.
The adversary is then able to present to some validators a directly committed leader block $B_L$ ($4f_c+1$ support), whereas for others it will be undecided ($f_c+1$ support). The next leader will also be presented the latter case, so in its view $B_{L+1}$ will not have enough support to indirectly commit $B_L$.
When $B_{L+1}$ is directly committed, the validators will diverge views (some will already have $B_L$ committed whereas others will skip it) and safety will break for clients on the fast-path\footnote{Notice that checkpoint path is still safe and anyone finalizing there will be secure.}. 


Both examples showcase the importance of maintaining the resilience assumption for \lmysticeti to operate correctly.
However, in practice, it is possible that an adversary could temporarily break the resilience assumption.
In such a case, we would like to ensure that our protocol can recover both safety and liveness,
which is our main goal with \gossip.
\change{\gossip addresses both failure modes by providing \emph{accountable safety} and \emph{accountable liveness} (Definitions~\ref{def:accountable-safety}--\ref{def:accountable-liveness}): within bounded time after any violation, all honest guards agree on a provable blameset of $\ge f_c+1$ faulty validators, enabling the recovery procedure of \Cref{thm:recovery}.}


\subsection{\gossip setting}

Before diving into the protocol details, we recall the overall setting:
We assume a set of $n$ validators who in total hold stake $S$ such that 
each validator maintains stake equivalent to $S/n$
\footnote{This model can be easily simulated even in a setting where validators hold unequal units of stake, by having each validator simulate a separate Sybil validator per each unit of stake it holds.}.
We assume that there exists a polynomially bounded adversary \A who can corrupt up to a total of $S_f\le (S-1)/2$ stake in a static fashion, 
i.e., \A picks which validators to corrupt (up to the corruption threshold) before the protocol starts, and cannot change corruptions after.

For this part of our construction, we assume that validators communicate over a synchronous, point-to-point protocol, with a known network delay $\Delta_{net}$. We also denote by $\Delta$, the \emph{network delay for reliable broadcast}. This is the theoretical delay with which we will analyze our guard protocol\footnote{In practice, we will set $\Delta$ to a sufficiently large time so that all messages are guaranteed to be delivered by then.
Since this construction is going to be part of the slow path, we can set $\Delta$ to a large number.}.
\change{We write $\Call{BC}{i, m, \sigma}$ for the synchronous \emph{Byzantine Broadcast} primitive used by guard $i$ to disseminate message $m$ with signature $\sigma$; under this synchrony assumption, all honest parties receive $m$ within $\Delta$ (agreement, validity, termination).}
This is also the $\Delta$ timeout used for the liveness proofs of \lmysticeti, however, in the practical deployment we optimistically wait a much smaller timeout before blaming a leader. This practice has been introduced by Shoal~\cite{shoal}, where the real $\Delta$ timeout is only used when many consecutive leaders are skipped. It balances well the practical needs with the theoretical requirements and is what most production systems follow.


\subsection{Monitoring failures} 

\para{Block handling} 
Upon receiving a block, each guard verifies that the block is valid with respect to \lmysticeti, 
by executing the predetermined validity checks defined by the \lmysticeti protocol.
Furthermore, guards check for equivocations with respect to every received block. 
They do so, by maintaining state of what blocks they have received from which core validators for each round of \lmysticeti.
If a block is equivocated by its authoring validator, this is proof of misbehavior for that party.
In the current construction, guards simply do not consider any such blocks at all.
However, more sophisticated designs could allow such blocks to contribute to the provable equivocations of the system, so that malicious parties are identified via such equivocations as well.

\para{Blamesets}
Blamesets are the main gadget of our construction.
A \emph{valid blameset} is a set of at least $f_c+1$ core validators, that have provably misbehaved. 
There are two types of blamesets, namely \emph{liveness} and \emph{safety} blamesets, according to the type of misbehavior the validator has been proven to have performed. 
Each valid blameset is accompanied by a valid proof of misbehavior for each validator in it.
Once an honest guard has established a new valid blameset, it can request a recovery process, via which, guards will agree on one blameset, and slash the validators in that blameset.

\para{Liveness failure detection}
Guards are also tasked with restoring the core protocol's liveness.
To check for potential liveness failures, for each new round, they maintain a timer $\lived$ and a set $\asleep$ that contains inactive core validators and is initialized to $\core$, the set of all core validators.
Upon receiving a valid block from some core validator $v_i$ for round $r$, the guard will remove $v_i$ from $\asleep(r)$, indicating that the validator is active for the round.
After the timer expires, any validator remaining in $\asleep$, will be blamed by the guard. 
After the blaming phase, validators have a chance to respond within a grace period $\graced$.
During the grace period, \change{any validator can exonerate a blamed party by relaying} a valid block for the round produced by the blamed validator, to convince all others that the blamed party is live.
After the grace period, if a majority of guard stake has attested blame for a validator (satisfying the $\Call{Maj}{\cdot}$ condition), that validator is included in the liveness blameset and recovery is triggered.

\para{Safety failure detection}
Honest validators are guaranteed to not equivocate, and to not vote for equivocating blocks. 
However, if the safety assumption of the \lmysticeti protocol breaks, i.e., the adversary can corrupt more than $f_c$ validators, then equivocating blocks can be committed (still, no two equivocating blocks can both be directly committed, as long as the corruptions remain $\le 3f_c$). However, as we described before there could be an equivocation through the indirect skip path. 
In this case, \gossip will come in play; every time a guard observes a new set of blocks that are being committed to the DAG, they compare existing already committed, leader blocks to the new ones.
In case a guard finds two equivocating blocks, it compares their support, to find the set of overlapping parties supporting the two. 
This overlapping set, which as we show will be provable and will contain at least $f_c+1$ misbehaving parties, can then be used for recovery/slashing.
Equivocation is meant in a broad sense and while we focus on \lmysticeti this generalizes to any consensus protocol where equivocating blocks or votes can exist.


\subsection{Recovery}
In this construction, we show how guards can catch liveness/safety issues, and in such cases, agree on a blameset of at least $f_c+1$ Byzantine parties.
Honest guards can agree simply by running Byzantine agreement on the recovery set of each guard, which they can do since they maintain honest majority.
After that, they can deterministically choose the first (in order) valid blameset, as the set of core validators to be removed.

After honest guards agree on such a blameset, they can provably notify the core validators.
They in turn can then disregard the participants proven to be malicious (or in practice, slash their stake, which leads to future research towards cryptoeconomic/incentive-based security of such systems).
The core validators can, at that point, execute an honest-majority agreement protocol, since they will have regained majority (by removing $f_c+1$ malicious out of $5f_c+1$ total, the new split is $2f_c+1$ honest out of $4f_c$ total).
For this reason, we focus on showing how to regain honest majority in the core validators in cases of misbehaviour, and not on the specifics of how to utilize the honest majority to recover the protocol afterwards.

\change{\para{Recovery protocol, fork-choice rule, and client guarantees}
Once all honest guards agree on blameset $\mathcal{B}$ (Theorems~\ref{thm:guard-accountable-safety}--\ref{thm:guard-accountable-liveness}), recovery proceeds in three steps: \emph{(R1)} the canonical chain is the unique CheckpointQC of highest height with a FinalityQC (see~\extref{Theorem~8}{thm:resilient-safety}); \emph{(R2)} removing $\mathcal{B}$ yields $\mathcal{V}'=\mathcal{V}\setminus\mathcal{B}$ with at most $4f_c$ validators and a strict honest majority ($\geq 2f_c+1$, enabling a synchronous honest-majority protocol~\cite{abraham2020synchotstuff}); \emph{(R3)} all state finalized by a checkpoint-path FinalityQC is preserved.
More details and comparison with prior recovery work~\cite{civit2025recover,gong2025recovery,sheng2023bft} appear in \extref{Appendix E}{sec:guard-proofs}.}




\begin{algorithm}[t]
\caption{\gossip Main Functions (code for guard $i$)}\label{alg:guard}
\begin{algorithmic}
\scriptsize
\Statex \change{\textbf{Local state:}}
\State \change{$\asleep(r)$: core validators not yet seen active in round $r$; initialized to \core}
\State \change{$blames(k,r)$: guards that blamed $\core_k$ for round $r$; initialized to $\emptyset$}
\State \change{$\lblamed(r)$: liveness-blamed core validators for round $r$, with proofs $\pi_r$; initialized to $\emptyset$}
\State \change{$S_i$: blameset $i$ has committed to in recovery; initialized to $\bot$}
\State \change{$now$: highest round whose liveness $i$ is currently monitoring}
\Statex

\Procedure{OnRound}{$r$}
    \State $now \gets r-1$; set timers $\lived =4\Delta$, $\leaderd=2\Delta$
    \If{$\leaderd$ fires}
        \State $now \gets r$
        \For{$l\in \Call{GetLeaders}{r-1}:l\in\asleep(r-1)$}
            \State $\Call{BC}{\mathsf{LBlame},i, l,r-1}$
        \EndFor
    \EndIf
    \If{$\lived$ fires}
        \For{$\core_j\in\asleep(r)$}
            \State \Call{BC}{$\mathsf{LBlame},i, \core_j, r$}
        \EndFor
        \For{$l\in \Call{GetLeaders}{r-1}:l\notin\asleep(r-1)$}
            \For{$\core_j: \neg \Call{IsVote}{b_{(j,r)}, b_{(l,r-1)}}$}
                \State $\Call{BC}{\mathsf{LBlame},i, \core_j,r}$
            \EndFor
        \EndFor
        \State \change{set timer $\graced =2\Delta$} \Comment{\change{grace period; $\graced$ expires $6\Delta$ after round start}}
    \EndIf
    \If{$\graced$ fires and $\lvert\lblamed(r)\rvert\ge f_c+1$}
        \State\change{\Call{Recover}{$\lblamed(r), \pi_{r}$}}
    \EndIf
\EndProcedure
\Statex

\Procedure{OnBlock}{$b:=B_{(u,r)}$}
      \If{$\Valid(b) \land \change{\Call{Equivocates}{b, \texttt{LocState()}}=\bot} \land now \le r$}
        \State $\asleep(r)\gets\asleep(r)\setminus \{u\}$
        \State \Call{BC}{$i, b,  \sigma_i(b)$}
      \EndIf
\EndProcedure
\Statex

\Procedure{OnCoreUpdate}{$S$} \Comment{$\bot \neq S\gets\Call{TryDecide}{\dots}$}
    \State $(Set,\pi)\gets\Call{CheckEquivocation}{S}$
    \If{$Set\neq\bot$}
        \State \change{\Call{Recover}{$Set, \pi$}}
    \Else
        \For{$b \in S$}
            \State \Call{BC}{$i, b,  \sigma_i(b)$}
        \EndFor
    \EndIf
\EndProcedure
\Statex

\Procedure{OnLBlame}{$j, \core_k,r$}
    \State $blames(k,r)\gets blames(k,r)\cup\{j\}$
    \If{$\Call{Maj}{blames(k,r)}$}
        \State $ \lblamed(r)\gets  \lblamed(r)\cup\{k\}$; $\pi_r\gets\pi_r\Vert blames(k,r)$
    \EndIf
\EndProcedure
\Statex

\Procedure{OnRecover}{$j,S_B,\pi$}
    \If{\change{$\mathsf{isValidBlameSet}(S_B, \pi)$}}
        \State \change{\Call{Recover}{$S_B, \pi$}}
    \EndIf
\EndProcedure

\end{algorithmic}
\end{algorithm}

\begin{algorithm}[t]
    \caption{\gossip Helper Functions}
    \scriptsize
    \label{alg:guard-helper}
    \begin{algorithmic}

        \State $\texttt{validators}$ \Comment{The set of validators}
        \State \change{$\texttt{leadersPerRound}$} \Comment{\change{Number of leader slots per round}}
        \Statex

\Procedure{GetLeaders}{$r$}
    \State Let $l = \texttt{leadersPerRound}$, $V = |\texttt{validators}|$
    \State \Return \{$\texttt{validators}[(r + d) \bmod V]: d = 0,\dots, l-1\}$
\EndProcedure
\Statex

\Procedure{\change{IsVote}}{$\change{b,\, b'}$}
    \State \change{\Return\ $b' \in \mathsf{parents}(b)$} \Comment{\change{$b$ votes for $b'$ iff $b'$ is a direct causal parent of $b$ in the DAG}}
\EndProcedure
\Statex

\Procedure{ResolveEquivocation}{$b,b'$}
  \State  $Set\gets \{v\in \texttt{validators}: \text{ supporting }b\text{ and }b'\}$
  \State $\pi\gets\{\text{respective blocks showing support from }Set\}$
  \State\Return $(Set,\pi)$
\EndProcedure
\Statex

\Procedure{CheckEquivocation}{$S$}
  \If{$\exists b\in S: B\gets\Call{Equivocates}{b, \texttt{LocState()}}\land B\neq\bot$}
      \State $(Set,\pi)\gets \Call{ResolveEquivocation}{B,b}$
      \State \Return $(Set,\pi)$
  \EndIf
  \State \Return $\bot$
\EndProcedure
\Statex

\Procedure{Recover}{\change{$BS, \pi$}}
    \If{\change{$S_i \neq \bot$}} \Return \Comment{\change{committed to a blameset already; ignore duplicate triggers}}
    \EndIf
    \State \change{$S_i \gets BS$}
    \State \Call{BC}{$i, BS, \sigma_i(BS, \textsf{recover})$}
    \State Set $\mathsf{RCVec}_i = \langle S_k\rangle_{k\in[n]}$, \change{where}
    \[ \change{S_k = \begin{cases}
        S_k, & \text{if a unique }(k, S_k, \sigma_k(S_k,\textsf{recover}))\text{ arrived within }\Delta,\\
        \bot, & \text{otherwise.}
       \end{cases}}
    \]
    \State $\Call{BA}{k, S_k} \to SR_k$, for all $k\in[n]$
    \State $\Call{Order}{\{SR_k\}_{k\in [n]}} \to \mathbf{O}$
    \State \Return the first valid $ SR_k\in \mathbf{O}$
\EndProcedure
\end{algorithmic}
\end{algorithm}



\change{%
\subsection{Accountability Definitions}

We now state the formal properties that \gossip is designed to provide.
These properties extend the classical safety and liveness of \lmysticeti with \emph{accountability}: when a violation occurs, it is detected and attributed to a set of provably misbehaving validators.

\begin{definition}[Accountable Safety]\label{def:accountable-safety}
    A guard protocol $\Pi_g$ provides \emph{accountable safety} for \lmysticeti if the following holds:
    if two honest core validators ever disagree on the commit/skip decision for any leader slot
    (i.e., a safety violation occurs in \lmysticeti), then within $2\Delta + \Delta_{BA}$ of the moment the first honest party observes the violation, all honest \gossip validators agree on a \emph{safety blameset} $\mathcal{B}$ with $|\mathcal{B}| \geq f_c+1$, where every $v \in \mathcal{B}$ is accompanied by cryptographic proof of equivocation \change{(and hence $\mathcal{B}$ contains no honest validator, by \extref{Lemma~16}{lem:guard-honest-safe})}.
\end{definition}

\begin{definition}[Accountable Liveness]\label{def:accountable-liveness}
    A guard protocol $\Pi_g$ provides \emph{accountable liveness} for \lmysticeti if the following holds:
    if \lmysticeti fails to progress a round within $6\Delta$ \change{(as per Lemma~\ref{lem:guard-liveness}; this corresponds to GST in \lmysticeti's partial-synchrony model)}, then within \change{$2\Delta+$}$\Delta_{BA}$ thereafter, all honest \gossip validators agree on a \emph{liveness blameset} $\mathcal{B}$ with $|\mathcal{B}| \geq f_c+1$, where every $v \in \mathcal{B}$ is accompanied by sufficient evidence of non-responsiveness attested by a majority of guards.
\end{definition}
}

\subsection{\gossip Safety and Liveness}

\change{We now prove the main safety and liveness properties of \gossip. Proofs of supporting lemmas, \Cref{lem:guard-liveness}, and \Cref{thm:recovery} are deferred to \extref{Appendix E}{sec:guard-proofs}.}

\begin{lemma}[Safety Violations]\label{lem:guard-safety-bound}
    Any safety violation (equivocation) on \lmysticeti will be \change{detected and traced to a blameset of size $\ge f_c+1$} within $2\Delta+\Delta_{BA}$ after the first honest party observes it.
\end{lemma}
\begin{proof}
    Assume that an honest party (validator or guard) observes two equivocating blocks (according to its view) at time $t$. \change{The observer broadcasts the equivocating blocks to all guards via reliable broadcast. Within $\Delta$, all honest guards have received the equivocation; each independently applies \extref{Lemma~15}{lem:guard-safety5} to extract a valid safety blameset of size $\ge f_c+1$, then calls \textsc{Recover}, broadcasting its blameset to all guards. Within a further $\Delta$, all guards have collected each other's recovery messages. Thus by time $t+2\Delta$ all honest guards enter} $BA$. Within another $\Delta_{BA}$, all guards will agree on a blameset of at least $f_c+1$ validators \change{(at least one valid blameset exists as input, so BA produces a valid output). This establishes agreement on a valid safety blameset (and the recovery steps R1--R3 then restore the protocol).}
\end{proof}

\begin{lemma}[Liveness]\label{lem:guard-liveness}
    For every round $r$, \lmysticeti progresses from $r$ within $6\Delta$ of entering $r$, or at least $f_c+1$ parties are blamed in \gossip.
\end{lemma}

\change{%
\begin{theorem}[\gossip provides Accountable Safety]\label{thm:guard-accountable-safety}
    Under \gossip's synchrony assumption, \gossip satisfies \Cref{def:accountable-safety}: any safety violation in \lmysticeti results in all honest guards agreeing on a safety blameset of size $\geq f_c+1$ within $2\Delta+\Delta_{BA}$.
\end{theorem}
\begin{proof}
    By \extref{Lemma~15}{lem:guard-safety5}, any safety violation implies at least $f_c+1$ equivocating validators with cryptographic proof.
    \change{Each honest guard that observes the equivocation calls \textsc{Recover}, broadcasting its locally-constructed safety blameset; all guards collect these within $\Delta$ and use them as input to $BA$.}
    By Lemma~\ref{lem:guard-safety-bound}, the equivocation reaches all honest guards within $2\Delta$ via synchronous broadcast, and all guards run Byzantine Agreement on the resulting blameset, terminating within $\Delta_{BA}$.
    By \extref{Lemma~16}{lem:guard-honest-safe}, no honest validator appears in a valid safety blameset.
    The output of Byzantine Agreement is therefore a valid safety blameset of size $\geq f_c+1$.
\end{proof}

\begin{theorem}[\gossip provides Accountable Liveness]\label{thm:guard-accountable-liveness}
    Under \gossip's synchrony assumption, \gossip satisfies \Cref{def:accountable-liveness}: any liveness violation in \lmysticeti results in all honest guards agreeing on a liveness blameset of size $\geq f_c+1$ within \change{$2\Delta+$}$\Delta_{BA}$ after the liveness failure is confirmed.
\end{theorem}
\begin{proof}
    By Lemma~\ref{lem:guard-liveness}, if \lmysticeti fails to progress round $r$ within $6\Delta$, then at least $f_c+1$ core validators are in the intersection of all honest guards' $\asleep$ sets, and all honest guards hold $\geq S_f+1$ blame attestations for each such validator \change{(constituting a majority of guard stake, satisfying the $\Call{Maj}{\cdot}$ condition)}.
    By \extref{Lemma~17}{lem:guard-honest-live}, no honest validator is in any valid liveness blameset.
    \change{Each honest guard's $\lblamed(r)$ set therefore contains a valid liveness blameset of size $\ge f_c+1$; upon the $\graced$ timer firing, each guard calls \textsc{Recover}, which broadcasts its blameset via BC ($\Delta$), collects $\mathsf{RCVec}$ ($\Delta$), and then runs BA ($\Delta_{BA}$).}
    All honest guards thus have a valid liveness blameset of size $\geq f_c+1$ as input to Byzantine Agreement, which terminates within \change{$2\Delta+$}$\Delta_{BA}$.
\end{proof}
}

\change{%
\begin{theorem}[Recovery correctness]\label{thm:recovery}
    Under \gossip's synchrony assumption \change{(honest-majority BA over the guard set~\cite{abraham2020synchotstuff})}, following any safety or liveness violation of \lmysticeti:
    (i) all honest guards agree on a single blameset $\mathcal{B}$ with $|\mathcal{B}|\ge f_c+1$ containing only faulty validators;
    (ii) the remaining $4f_c$ core validators retain strict honest majority ($\ge 2f_c+1$) and resume consensus; and
    (iii) every state finalized by a checkpoint-path FinalityQC is preserved.
\end{theorem}
}

\section{Evaluation} \label{sec:evaluation}

\begin{figure*}[t]
    \vskip -1em
    \centering
    \captionsetup{aboveskip=4pt, belowskip=2pt}
    \begin{subfigure}{0.49\textwidth}
        \centering
        \includegraphics[width=0.9\linewidth]{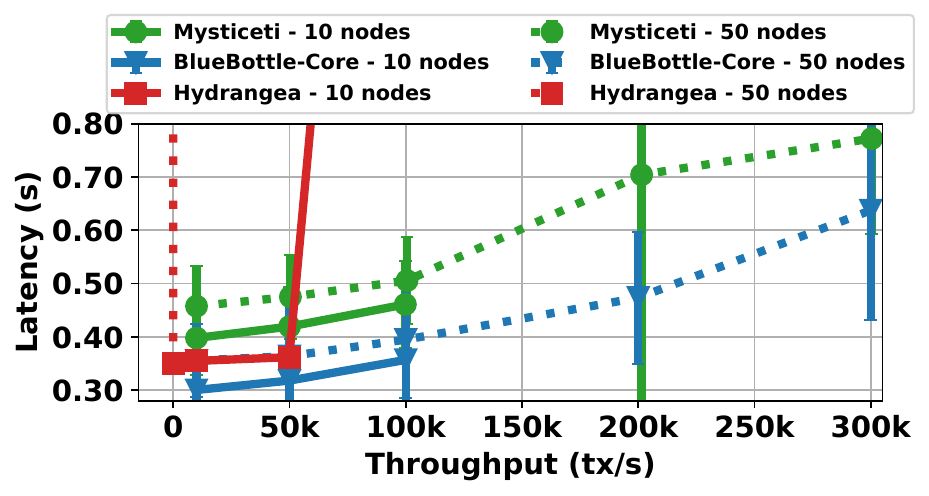}
        \caption{Committees with $10$ and $50$ validators, no validator faults.}
        \label{fig:evaluation-common}
    \end{subfigure}\hfill
    \begin{subfigure}{0.49\textwidth}
        \centering
        \includegraphics[width=0.9\linewidth]{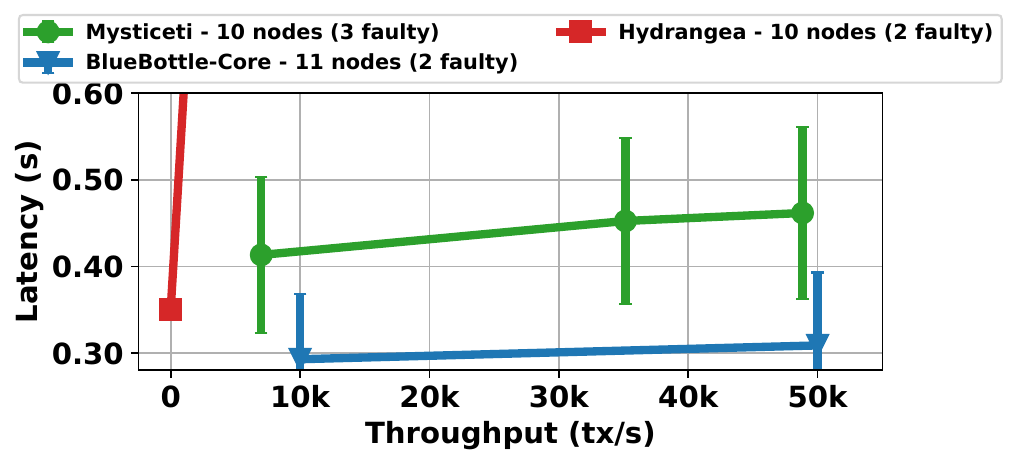}
        \caption{With $3$ or $2$ crash faults; committees with $10$ and $11$ validators (minimum to tolerate $3$ or $2$ faults).}
        \label{fig:evaluation-faults}
    \end{subfigure}
    \caption{ \footnotesize
        WAN throughput-latency performance comparison of \lmysticeti ($n=5f+1$) and Mysticeti ($n=3f+1$) with a $512$B transaction size. The y-axis starts at $300$\,ms to zoom in on the latency difference between the systems.
    }
    \label{fig:evaluation}
    \vskip -1em
\end{figure*}

We implement a \lmysticeti validator\footnote{\codelink}
in Rust by forking the Mysticeti codebase~\cite{mysticeti-code}. We provide more details about our implementation and testing methodology in \extref{Appendix F.1}{sec:implementation}.
We evaluate the throughput and latency of \lmysticeti through experiments conducted on Amazon Web Services (AWS) on a geo-distributed testbed.
We detail the experimental setup in \extref{Appendix F.2}{sec:experimental-setup}.

%

\para{Evaluation scope}
Our evaluation has one key goal: demonstrate that fault tolerance can be traded for lower latency. We do not claim that \lmysticeti is the best or most performant protocol in the $n=5f+1$ design space.
As a result, our primary baseline is Mysticeti~\cite{mysticeti}, one of the closest $n=3f+1$ consensus protocols to \lmysticeti in terms of design and implementation.

Specifically, we demonstrate the following claims:
\begin{enumerate}[label={\bf C\arabic*}]
    \item\label{claim:c1} \lmysticeti has similar throughput and lower latency than Mysticeti when operating in fault-free, synchronous networks.
    \item\label{claim:c2} \lmysticeti scales as well as Mysticeti and maintains high throughput and low latency as the number of validators increases.
    \item\label{claim:c3} \lmysticeti has a similar throughput to and lower latency than Mysticeti in the presence of (benign) crash faults.
\end{enumerate}

Note that evaluating the performance of BFT protocols in the presence of Byzantine faults is an open research question~\cite{twins}, and state-of-the-art evidence relies on formal proofs (presented in \extref{Appendix D}{sec:proofs}).

\para{Baseline protocols}
Recent work has proposed several consensus protocols that explore similar designs to the $n = 5f + 1$ setting (\Cref{sec:related}), including Minimmit~\cite{minimmit}, Kudzu~\cite{shoup2025kudzu}, Alpenglow~\cite{kniep2025alpenglow}.
None of these protocols provide deployable implementations with networking code; their codebases are intended for simulation only.
%
%

Hydrangea~\cite{shrestha2025hydrangea} operates with $n=3f+2c+k+1$, where $c$ and $k$ are additional parameters that allow trading fault tolerance for latency. Hydrangea can commit in a single round-trip and is, to our knowledge, the only protocol in this design space with a deployable implementation; we therefore include it as a secondary baseline.
Following the authors' specifications~\cite{shrestha2025hydrangea}, we set $f=1$, $c=2$, $k=2$ for experiments with 10 validators and $f=9$, $c=10$, $k=2$ for experiments with 50 validators.

\subsection{Benchmark under ideal conditions}
We evaluate the performance of \lmysticeti under normal, failure-free conditions in a wide-area network (WAN). \Cref{fig:evaluation} (Left) reports results from a geo-replicated deployment with a small committee of $10$ validators and a larger committee of $50$ validators. For cost reasons, we cap the input load at $100{,}000$\,tx/s ($10$ validators) and $300{,}000$\,tx/s ($50$ validators). These rates are two orders of magnitude above the observed peak throughput of any existing blockchain and we believe sufficient to assess system behavior under load.

The figure shows that \lmysticeti successfully trades some fault tolerance for lower latency. \lmysticeti lowers the commit path to two message delays. Each round of \lmysticeti must, however, wait for a larger parent quorum ($80\%$ of validators) than Mysticeti ($67\%$). Despite this larger quorum, \lmysticeti reduces end-to-end latency by about $20$--$25\%$ in the WAN setting across all input loads, confirming claim~\ref{claim:c1}. Concretely, with $10$ validators at $100{,}000$\,tx/s, \lmysticeti attains about $357$\,ms latency versus about $461$\,ms for Mysticeti; with $50$ validators at $100{,}000$\,tx/s, \lmysticeti reaches about $395$\,ms versus about $505$\,ms, and at $300{,}000$\,tx/s, \lmysticeti reaches about $637$~ms versus about $772$\,ms.
We observe a consistent proportional latency reduction at both committee sizes, confirming claim~\ref{claim:c2}.

Hydrangea's performance is significantly worse than both Mysticeti and \lmysticeti (red lines in \Cref{fig:evaluation}). With $10$ validators under no faults, Hydrangea sustains competitive latency (${\sim}355$\,ms) up to around $50{,}000$\,tx/s but degrades sharply at higher loads. With $50$ validators or under crash faults, Hydrangea's latency exceeds the visible scale of the plot even at minimal offered load, indicating that the protocol does not scale gracefully to larger committees in its current form. Hydrangea is a fork of HotStuff~\cite{hotstuff-code} that embeds data directly in the leader's proposal; its limited scalability is thus likely attributable to the data-dissemination mechanism inherited from HotStuff, consistent with the findings of Danezis et al.~\cite{narwhal}.

\subsection{Benchmark under faults}\label{sec:faults}
\Cref{fig:evaluation} (Right) shows \lmysticeti and Mysticeti under 3 crash faults in committees of 10 and 11 validators, (the minimum sizes used for this fault level in our experiments). We cap the offered load at $50{,}000$\,tx/s for cost reasons.
As expected, both systems sustain the input load with no material latency inflation relative to the fault-free runs. Mysticeti's latency lies between $400$ and $500$~ms, whereas \lmysticeti remains between $300$ and $350$\,ms, preserving a $20$--$25\%$ latency advantage (again at the cost of lowering fault tolerance) and confirming claim~\ref{claim:c3}. Both systems handle benign crashes gracefully by skipping faulty leaders via the direct skip rule described in \Cref{sec:lmysticeti}.
Hydrangea's latency again exceeds the visible scale of the plot, confirming its limited ability to handle faults gracefully, inline with the findings of Danezis et al.~\cite{narwhal}.

\section{Making the Core Asynchronous}

We study $\lmysticeti$ under partial synchrony and also introduce an asynchronous variant. The asynchronous algorithm mirrors the partially synchronous one, except for the orange-highlighted additions, with formal correctness proofs in \extref{Appendix G}{sec:async}. Using the asynchronous $\lmysticeti$ in $\sysname$ introduces a subtle effect: $\gossip$ must detect liveness faults via a different mechanism. 

In the partially synchronous design, liveness monitoring first checks whether the round leader is faulty and then verifies that the received blocks correctly reject a faulty leader.
\change{This procedure completes within the $6\Delta$ bound of \Cref{lem:guard-liveness}, where $\Delta$ denotes the network delay bound; of this, a $2\Delta$ component is the wait for the round's leaders. The asynchronous variant removes leader timeouts, eliminating this $2\Delta$ leader wait and reducing the liveness-detection bound to $4\Delta$.}
This yields two implications. The first one is straightforward: the partially synchronous protocol trades one round of optimistic-case latency for faster liveness recovery in the asynchronous protocol. The second one is more significant: the asynchronous variant removes the dependence on the network delay $\Delta$ between $\lmysticeti$ and $\gossip$. This decoupling lets operators choose a conservative $\Delta$ (even tens of seconds) without harming $\lmysticeti$’s performance under crash faults. Operators can adjust this value at every epoch. Moreover, a faulty-leader detection and exclusion mechanism, such as HammerHead~\cite{hammerhead}, further mitigates the impact of faulty leaders.

\section{Related Work} \label{sec:related}
Byzantine fault-tolerant state machine replication (BFT-SMR) is a foundational abstraction in distributed systems, and its latency limits have been studied extensively~\cite{martin2005fast,abraham2020synchotstuff, abraham2020byzantine,pbft,gueta2019sbft,kotla2007zyzzyva}. Classical work emphasizes worst-case latency, with state-of-the-art protocols committing in three communication steps while tolerating up to $f < n/3$ Byzantine faults~\cite{pbft,buchman2019latest,chan2023simplex}.

Motivated by practical deployments (e.g., blockchains), recent work increasingly targets good-case latency under partial synchrony, i.e., the latency to commit when the designated leader is correct and the network satisfies partial synchrony assumptions~\cite{abraham2022goodcase,kuznetsov2021revisiting}. In this setting, protocols aspire to two-round commitment. SBFT~\cite{gueta2019sbft} achieves two rounds in the absence of Byzantine faults and reverts to a slower path otherwise. FAB~\cite{martin2005fast} removes this restriction but was later shown to suffer from a liveness issue~\cite{abraham2018revisiting}. Kudzu~\cite{shoup2025kudzu} commits in three rounds in general and in two rounds when the number of Byzantine faults is small ($f < n/5$). Minimmit~\cite{minimmit} is a recently proposed $2$-round variant of Simplex~\cite{chan2023simplex,shoup2024song,decentralized-5f-1} with $n=5f+1$, and reports, similarly to Kudzu, latency results only through simulation based on the protocol's finalization rule, i.e., the number of communication steps and threshold required before a block can be committed. Minimmit and Kudzu's fast path use the same finalization structure, requiring one round of communication with a $4f+1$ threshold, and therefore produce identical block latency (see Tables 3, 5, 6, 7 of \cite{minimmit}). Hydrangea~\cite{shrestha2025hydrangea} adopts a generalized fault model that distinguishes Byzantine from crash faults; while this model is weaker than a purely Byzantine model for a fixed total fault budget, it requires careful parameterization for practical deployment. Alpenglow~\cite{kniep2025alpenglow} claims two-round commitment while simultaneously tolerating less than $20\%$ Byzantine and less than $20\%$ crash faults; however, this guarantee this is impossible according to~\cite{shrestha2025hydrangea} and after careful analysis it is actually a 3$\delta$ protocol as Rotor requires 2$\delta$ and Votor an extra one.
Optimistic fast paths have also been explored in synchronous and asynchronous settings~\cite{aublin2015next,brasileiro2001consensus,friedman2005simple,kursawe2002optimistic,pass2018thunderella,shrestha2020optimality,shrestha2025optimistic}.

For comparison, \sysname commits in two rounds under partial synchrony while tolerating up to $f_c$ Byzantine faults with $n_c = 5f_c+1$ replicas. Two-round finality with $n = 5f-1$ replicas is achievable when the protocol explicitly identifies and ignores faulty leaders~\cite{kuznetsov2021revisiting,decentralized-5f-1}. \sysname could be extended with such leader-exclusion mechanisms to attain the $5f-1$ bound; however, we deliberately avoid this design point because the incremental gain in fault tolerance is modest relative to the additional protocol design and engineering complexity. \change{FlexibleBFT~\cite{malkhi2019flexible} and Optimal Flexible Consensus~\cite{neu2024optimal} formalize heterogeneous client finality, giving clients with different fault-tolerance assumptions optimal safety/liveness guarantees; \sysname's fast and checkpoint paths serve the same client-choice goal, while \gossip adds the reconfiguration guarantee that flexible-consensus constructions do not address.}

Accountable safety~\cite{buterin2019casper,neu2022availability,sheng2021bft, neu2023accountable} and accountable liveness~\cite{lewispye2025accountable,tas2023bitcoin} strengthen classical safety and liveness via crypto-economic mechanisms. In addition to preserving agreement (safety) and eventual decision (liveness) below a fault threshold, these notions enable attribution of provable misbehavior, e.g., through slashing conditions, when safety or liveness is violated, thereby deterring equivocation or censorship in permissioned and permissionless settings. \change{\gossip instantiates these notions as an accountability gadget~\cite{buterin2019casper} over \lmysticeti: a synchronous overlay that monitors the core protocol and, upon detecting a violation, identifies a provable blameset and reconfigures the validator set. Recent work on recovery from excessive faults~\cite{civit2025recover,gong2025recovery,sheng2023bft} addresses how protocols can resume once the corruption threshold is exceeded; unlike these works, \gossip actively reduces the effective corruption count by removing provably faulty validators before resuming, tying fork-choice to the FinalityQC chain for an unambiguous recovery rule.}

\ifpublish
\begin{acks}
    This work is partially supported by Mysten Labs.
\end{acks}

\fi

\bibliographystyle{ACM-Reference-Format}
\bibliography{references}

\iflong
\appendix

\section{Open Science}
\label{sec:open-science}

We are committed to open science.
All artifacts required to reproduce the results of this paper are available at anonymous hosting services
for double-blind review:
\begin{itemize}
  \item \textbf{Core protocol implementation} (\lmysticeti): \codelink
  \item \textbf{Asynchronous variant} (\asynclmysticeti): \asynccodelink
  \item \textbf{Experimental dashboard and raw data}: \dashboardlink
\end{itemize}
The repositories include the source code, build scripts, orchestration tooling, and configuration files used to run the experiments reported
in \Cref{sec:evaluation}, along with documentation for reproducing the
benchmarks.
No artifact is withheld.
We provide a tutorial for reproducing the experiments in \Cref{sec:tutorial}.
Upon acceptance, we will de-anonymize the repositories and release a tagged artifact for the ACM Artifact Evaluation process.

\section{Generative AI Usage}
\label{sec:gen_ai}

Various AI tools, such as Claude, Gemini, ChatGPT/Codex, and Grammarly, were used to assist with spell-checking, grammar, and general English writing.
All changes made by the AI model were reviewed and approved by the authors.
All content of the paper was written by the authors themselves, and the authors retain full responsibility for the accuracy, originality, and integrity of the work.

\section{Ethical Considerations}
\label{sec:ethics}

This work is a systems and protocol paper that does not involve human subjects, user data, personally identifiable information, or the exploitation of deployed systems.
All experiments are run on validator nodes under our own control (cloud VMs) and do not interact with any third-party or production network.
The protocols described here target permissionless and permissioned blockchain deployments where safety and liveness are the explicit goals; we see no realistic dual-use concerns beyond those common to consensus research.
No responsible disclosure process was required.

\section{\lmysticeti Safety and Liveness} \label{sec:proofs}

We prove that \lmysticeti implements Byzantine Atomic Broadcast (BAB)~\cite{cristian1995}, which formalizes reliable message delivery with total ordering in the presence of Byzantine faults. In BAB, \emph{broadcast} refers to a validator proposing a message for inclusion in the final sequence, while \emph{deliver} refers to outputting a message from the total ordering for local processing. A BAB protocol must satisfy four properties:

\begin{itemize}
    \item \textbf{Validity:} If an honest validator broadcasts a message $m$, then all honest validators eventually deliver $m$.
    \item \textbf{Agreement:} If an honest validator delivers a message $m$, then all honest validators eventually deliver $m$.
    \item \textbf{Integrity:} For any message $m$, an honest validator delivers $m$ at most once, and only if $m$ was previously broadcast by some validator.
    \item \textbf{Total Order:} If two honest validators both deliver messages $m_1$ and $m_2$, they deliver them in the same order.
\end{itemize}

\begin{lemma}\label{lem:safety1}
    There will never be a block that an honest validator directly commits while another honest validator directly skips.
\end{lemma}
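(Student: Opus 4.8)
The plan is a proof by contradiction built on quorum intersection, with the extra ingredient that ``voting'' is a deterministic, view-independent predicate on blocks. Suppose some block $B$ in a leader slot at round $R$ is directly committed by an honest authority $a_1$ and directly skipped by an honest authority $a_2$. By the direct decision rule of \Cref{alg:decider} (since $\texttt{waveLength}=2$, the decision round of the relevant wave is $R+1$), the local DAG of $a_1$ contains $4f+1$ round-$(R+1)$ blocks, from distinct authors, each of which votes for $B$ (procedure $\Call{StronglyCertifiedLeader}{\cdot}$), while the local DAG of $a_2$ contains $4f+1$ round-$(R+1)$ blocks, from distinct authors, each of which does \emph{not} vote for $B$ (procedure $\Call{SkippedLeader}{\cdot}$).

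The first step is to intersect the two sets of authors. Since there are $n \geq 5f+1$ core authors in total, the two quorums share at least $2(4f+1) - n \geq 2(4f+1) - (5f+1) = 3f+1 > f$ authors, so at least one shared author $h$ is honest. An honest authority signs and proposes exactly one block per round, so there is a unique round-$(R+1)$ block $b_h$ authored by $h$; because Byzantine parties cannot forge $h$'s signature, both $a_1$ and $a_2$ are necessarily referring to this same $b_h$. Hence $a_1$ evaluated $\Call{IsVote}{b_h, B}$ to \texttt{true} while $a_2$ evaluated the very same call to \texttt{false}.

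The crux --- and the step I expect to require the most care --- is ruling this out by showing that $\Call{IsVote}{b_h, B}$ cannot depend on who evaluates it. Here I would invoke two facts stated earlier: blocks reference their parents by cryptographic hash, so the entire causal history of $b_h$ is uniquely pinned down by $b_h$ itself; and an honest validator adds a block to its DAG (and counts it toward a certificate) only after downloading and verifying that whole causal history. Therefore $a_1$ and $a_2$ run the same depth-first search $\Call{VotedBlock}{\cdot}$ starting from $b_h$ and searching for a block in $B$'s slot, on identical input, and must return the same result; in particular they agree on whether it equals $B$. This contradicts the previous paragraph, so no such block $B$ exists.

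Two bookkeeping points deserve to be spelled out in the write-up. First, ``$b_h$ does not vote for $B$'' subsumes both the case where $b_h$'s causal history contains no block at all in $B$'s slot and the case where it contains an equivocating block $B' \neq B$ in that slot; in either case the search from $b_h$ lands on something other than $B$, so $\Call{IsVote}{b_h, B}$ has a single well-defined value and the argument is unaffected. (Equivocation by $B$'s own author merely produces several distinct blocks $B, B', \dots$; the lemma fixes one $B$, and the quorum reasoning is applied to that $B$.) Second, the ``distinct authors'' qualifier on the $4f+1$ counts should be justified from the block-validity rule together with the convention that honest validators count at most one block per author toward a certificate.
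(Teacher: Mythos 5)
Your proof is correct and follows essentially the same quorum-intersection counting as the paper: the paper notes the $3f+1$ honest supporters and $3f+1$ honest blamers would have to be disjoint, yielding $7f+2 > 5f+1$ authorities, while you equivalently intersect the two $4f+1$ quorums to exhibit an honest author whose unique round-$(R+1)$ block would have to both vote and not vote for $B$. Your added justification that \textsc{IsVote} is a deterministic function of the block's hash-pinned causal history makes explicit a step the paper leaves implicit, but the underlying argument is the same.
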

\begin{proof}
    Assume for the sake of contradiction that such a block exists and call it $B$. Thus, there are $4f_c + 1$ validators which support $B$ and $4f_c + 1$ validators which blame $B$. Since $f_c$ validators are Byzantine, there are $3f_c + 1$ honest validators that support $B$ and a distinct set of $3f_c + 1$ honest validators that blame $B$. This means there are $7f_c + 2$ validators overall contradicting our assumption of $n_c=5f_c+1$.
\end{proof}

\begin{lemma}\label{lem:safety2}
    There will never be a block that an honest validator directly skips while another honest validator indirectly commits.
\end{lemma}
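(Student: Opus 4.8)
The plan is to argue by contradiction with a quorum-intersection count, in exactly the style of \Cref{lem:safety1}. Suppose some block $B$ in a leader slot of round $R$ is directly skipped by an honest authority $A_1$ and indirectly committed by another honest authority $A_2$. From each of these two events I extract a large set of round-$(R+1)$ authorities: one set whose blocks ``blame'' $B$ and one set whose blocks ``vote'' for $B$. I then show these sets overlap in more than $f$ authorities, and observe that every authority in the overlap would have to be Byzantine, which is the contradiction.

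First I would unpack the direct skip. By the definition of the procedure \textsc{SkippedLeader}, the fact that $A_1$ directly skips $B$ means that in $A_1$'s local DAG there is a set $Q_{\mathrm{skip}}$ of at least $4f+1$ authorities whose round-$(R+1)$ blocks do \emph{not} vote for $B$ (equivalently, do not reference $B$). Next I would unpack the indirect commit. Since $A_2$ applied the indirect rule to the slot of $B$ and returned \scommit, its anchor satisfies \textsc{WeaklyCertifiedLeader}$(b_{anchor}, B)$; and the decision round used there is \textsc{DecisionRound} of \textsc{WaveNumber}$(R)$, which for $\texttt{waveLength}=2$ is precisely round $R+1$ --- the same round examined by the direct skip. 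Hence there is a set $Q_{\mathrm{cert}}$ of at least $2f+1$ authorities whose round-$(R+1)$ blocks vote for $B$ (and are moreover linked by $b_{anchor}$, though I only need the vote).

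Now I would combine the two. Both $Q_{\mathrm{skip}}$ and $Q_{\mathrm{cert}}$ are sets of authorities, so under $n = 5f+1$ we get $|Q_{\mathrm{skip}} \cap Q_{\mathrm{cert}}| \geq (4f+1) + (2f+1) - n = f+1$. On the other hand, any \emph{honest} authority $u \in Q_{\mathrm{skip}} \cap Q_{\mathrm{cert}}$ produces exactly one block in round $R+1$, and by unforgeability of signatures everyone who holds a round-$(R+1)$ block attributed to $u$ holds that same block; membership in $Q_{\mathrm{skip}}$ says this block does not reference $B$, while membership in $Q_{\mathrm{cert}}$ says it does --- impossible. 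So $Q_{\mathrm{skip}} \cap Q_{\mathrm{cert}}$ consists only of Byzantine authorities and has size at most $f$, contradicting $|Q_{\mathrm{skip}} \cap Q_{\mathrm{cert}}| \geq f+1$.

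I expect the only delicate point to be the bookkeeping around equivocation: a Byzantine validator may hold several round-$(R+1)$ blocks, so the cardinalities in \textsc{SkippedLeader} ($\geq 4f+1$) and \textsc{WeaklyCertifiedLeader} ($\geq 2f+1$) must be read as bounds on the number of distinct \emph{authorities} rather than blocks --- the same reading already used in \Cref{lem:safety1}. With that convention the statement is a pure counting argument, and the only other ingredient (that honest authorities behave identically in $A_1$'s and $A_2$'s views) is immediate from uniqueness of honest proposals per round and signature unforgeability. Note that, unlike later agreement lemmas, no induction on the commit sequence is needed here: I never use \emph{how} $b_{anchor}$ came to be committed, only that a weak certificate for $B$ over $b_{anchor}$ exists.
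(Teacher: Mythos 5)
Your proposal is correct and uses essentially the same argument as the paper: a quorum-intersection count showing the $4f+1$ blamers and $2f+1$ supporters must overlap in at least $f+1$ authorities, who would all have to be Byzantine, contradicting the bound of $f$. Your extra bookkeeping (identifying round $R+1$ as the common decision round and counting distinct authorities to handle equivocation) just makes explicit what the paper's proof leaves implicit.
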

\begin{proof}
    Assume for the sake of contradiction that such a block exists and call it $B$. Thus, there are $4f_c + 1$ validators which blame $B$ and $2f_c + 1$ validators which support $B$. Since $f_c$ validators in the network are Byzantine, there are $3f_c + 1$ honest validators that blame $B$ and a distinct set of $f_c + 1$ honest validators that support $B$. This means there are $5f_c + 2$ validators overall contradicting our assumption of $n_c=5f_c+1$.
\end{proof}

\begin{lemma}\label{lem:safety3}
    If at a round $R$, $4f_c + 1$ blocks from distinct validators support a block $B$, then all blocks at future rounds $R' > R$ will link to $2f_c + 1$ supports for $B$ from round $R$.
\end{lemma}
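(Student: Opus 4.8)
The plan is to induct on the round $R' > R$. All the content sits in the base case $R' = R+1$, which is a quorum-intersection argument; the inductive step is immediate from the transitivity of the \textsc{Link} relation together with the fact that every valid block has at least $4f+1$ (hence at least one) parent in the immediately preceding round.

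For the base case, let $Q$ be the set of $4f+1$ round-$R$ blocks from distinct authorities that support $B$. Since at most $f$ authorities are Byzantine, at least $3f+1$ of the authorities contributing a block to $Q$ are honest; write $H$ for that set of honest authorities. The reason for passing to $H$ is that an honest authority proposes exactly one block per round (\Cref{sec:dag}), so for $a \in H$ the round-$R$ block authored by $a$ is uniquely determined, and it is exactly the block in $Q$ that supports $B$. Now take any valid block $b'$ in round $R+1$. By block validity, $b'$ directly references blocks from at least $4f+1$ distinct round-$R$ authorities; call that author set $P'$. Then $|H \cap P'| \ge |H| + |P'| - n \ge (3f+1) + (4f+1) - (5f+1) = 2f+1$, and for each $a \in H \cap P'$ the round-$R$ block of $a$ referenced by $b'$ must be $a$'s unique block, which lies in $Q$. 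Hence at least $2f+1$ supports of $B$ are parents of $b'$, so $b'$ links to each of them.

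For the inductive step, assume the claim for some round $R' \ge R+1$ and let $b''$ be a valid block in round $R'+1$. It references at least one valid block $b'$ from round $R'$; since $b'$ is a parent of $b''$ and \textsc{Link} is transitive (concatenate the two witnessing parent-chains), $b''$ links to everything $b'$ links to, in particular to the $\ge 2f+1$ round-$R$ supports of $B$ guaranteed by the induction hypothesis. This closes the induction.

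I expect the only delicate point to be the base case: a plain author-intersection gives $3f+1$ common authorities, but up to $f$ of them may be Byzantine, and a Byzantine author's round-$R$ block referenced by $b'$ need not be the one that supports $B$ — it could be an equivocating sibling. Restricting to the honest sub-quorum $H$ is precisely what forces the referenced block to coincide with the supporting block, and it is why the bound drops from $3f+1$ to the claimed $2f+1$. A secondary bookkeeping point is to line up the informal notion of ``support'' with the formal \textsc{IsVote}/\textsc{Link} definitions and with the block-validity rule: because $R$ and $R+1$ are adjacent rounds, a vote is just a parent edge, so at the base case ``links to'' holds trivially and no depth-first-search reasoning is needed, and the validity rule's $4f+1$ distinct parent hashes are used in their standard form as $4f+1$ distinct parent authorities.
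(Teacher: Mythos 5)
Your proof is correct, and the base case is essentially the paper's: you intersect the $4f+1$ round-$R$ supporters with the $4f+1$ round-$R$ parents of a valid round-$(R+1)$ block and discount Byzantine equivocation — you do this by restricting to the honest sub-quorum $H$ (hence the bound $2f+1$ rather than the raw author-overlap $3f+1$), while the paper phrases the same point by contradiction (``the only way \dots is if an honest validator equivocated''). Where you genuinely diverge is the extension to rounds $R' > R+1$. The paper does not induct round by round; it takes the $4f+1$ round-$(R'-1)$ parents of $B'$, notes at least $3f+1$ are honest, and uses the fact that honest authorities always link to their own earlier blocks to reach their round-$(R+1)$ blocks, then applies the base case to those. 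Your step is a plain induction: any valid round-$(R'+1)$ block has at least one valid round-$R'$ parent, and transitivity of \textsc{Link} carries the $2f+1$ supports forward. Your version is more elementary — it needs only the validity rule (parents are valid blocks of the preceding round) and avoids invoking the self-referencing behaviour of honest proposers that the paper leans on — and it applies uniformly to Byzantine-authored valid blocks without any extra care. The paper's route, in exchange, identifies concretely \emph{which} paths reach the supports (honest self-chains through round $R+1$), but that extra precision is not needed for \Cref{cor:safety4} or the downstream lemmas, so nothing is lost by your simpler step.
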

\begin{proof}
    Each block links to $4f_c + 1$ blocks from the previous round. For the sake of contradiction, assume that a block $B'$ in round $R' > R$ does not link to $2f_c + 1$ supports for $B$ from round $R$.
    \case{$R' = R + 1$. $B'$ should link to $4f_c + 1$ blocks from round $R$ and also $4f_c + 1$ blocks in round $R$ should support $B$. By quorum intersection, any two subsets of size $4f_c+1$ from a set of size $5f_c+1$ must share at least $3f_c+1$ elements. Since $f_c$ validators are Byzantine, the intersection contains at least $2f_c+1$ correct validators. Thus, the only way for $B'$ to not link to $2f_c + 1$ supports is if an honest validator equivocated in round $R$. This is a contradiction.}
    \case{$R' > R + 1$. $B'$ links to $4f_c + 1$ blocks from round $R' - 1$. At least $3f_c + 1$ of these blocks are produced by honest validators. Honest validators always link to their own blocks, which means they will eventually link to their block from round $R + 1$. The above case proves how these blocks from round $R + 1$ link to $2f_c + 1$ supports for $B$. Thus, the only way for $B'$ to not link to $2f_c + 1$ supports for $B$ is if all of these honest validators do not link to their own block in round $R + 1$. This is a contradiction.}
\end{proof}

As a result of Lemma \ref{lem:safety3}, we have the following corollary.

\begin{corollary}\label{cor:safety4}
    There will never be a block which an honest validator directly commits while another honest validator indirectly skips.
\end{corollary}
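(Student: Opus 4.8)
The plan is to derive a contradiction by combining the quantitative guarantee of Lemma~\ref{lem:safety3} with the definition of the indirect decision rule. Suppose that an honest authority $h_1$ directly commits a block $B$ at round $R_B$ while an honest authority $h_2$ indirectly skips $B$. Because $h_1$ directly commits $B$, the block is strongly certified in $h_1$'s view: there are $4f+1$ blocks from distinct authorities in round $R_B+1$ that vote for (i.e., support) $B$. Applying Lemma~\ref{lem:safety3} to this set of round-$(R_B+1)$ supports, every block at a round strictly greater than $R_B+1$ links to at least $2f+1$ of these supports for $B$.

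Next I would unpack what ``$h_2$ indirectly skips $B$'' means through \textsc{TryIndirectDecide}: $h_2$ must have found an anchor $s_{anchor}=\texttt{Commit}(b_{anchor})$ with $b_{anchor}.round > r_{decision} = R_B+1$, hence $b_{anchor}.round \geq R_B+2$, and \textsc{WeaklyCertifiedLeader}$(b_{anchor},b)$ must have returned false for \emph{every} block $b$ in the leader set $B_{leader}$ of that slot. I then argue that $B$ is itself one of those leader blocks in $h_2$'s local DAG: by the previous paragraph $b_{anchor}$ links to $2f+1$ round-$(R_B+1)$ blocks that vote for $B$, and since for $\texttt{waveLength}=2$ a vote for $B$ means the voting block has $B$ in its causal history, $B$ lies in the causal history of $b_{anchor}$. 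As honest authorities download and verify the entire causal history of a block before referencing it, $h_2$ has $b_{anchor}$ and therefore $B$ in its DAG, and since the leader schedule is deterministic, $B \in B_{leader}$ for that slot in $h_2$'s view.

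It then remains to show that \textsc{WeaklyCertifiedLeader}$(b_{anchor},B)$ is in fact true for $h_2$, contradicting the skip. The $2f+1$ round-$(R_B+1)$ supports that $b_{anchor}$ links to all lie in $b_{anchor}$'s causal history, hence are present in $h_2$'s $DAG[R_B+1]$ (the set $B_{decision}$); each of them votes for $B$ (so \textsc{IsVote} holds) and each links to $b_{anchor}$ (so \textsc{Link} holds, since $b_{anchor}$ is reachable from it). That is at least $2f+1$ blocks satisfying the weak-certificate condition, so \textsc{WeaklyCertifiedLeader}$(b_{anchor},B)$ returns true, and \textsc{TryIndirectDecide} returns $\texttt{Commit}(B)$ rather than $\texttt{Skip}$ --- the desired contradiction. (Together with Lemma~\ref{lem:safety2}, this gives full consistency of the direct and indirect rules.)

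The step I expect to be the main obstacle is the bookkeeping about local DAG views: making precise that $h_2$ actually holds the relevant round-$(R_B+1)$ blocks and $B$ itself, and that the orientation of the \textsc{Link} predicate used inside \textsc{WeaklyCertifiedLeader} is exactly ``being in the anchor's causal past.'' Once those points are pinned down, the counting is immediate, and the $n = 5f+1$ bound is used only indirectly, through Lemma~\ref{lem:safety3}.
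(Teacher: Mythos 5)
Your proof is correct and follows the same route as the paper, which states Corollary~\ref{cor:safety4} as an immediate consequence of Lemma~\ref{lem:safety3} without writing out the details: a direct commit yields $4f+1$ round-$(R+1)$ supports, Lemma~\ref{lem:safety3} forces any later anchor to link to $2f+1$ of them, so \textsc{WeaklyCertifiedLeader} succeeds and an indirect skip is impossible. Your write-up simply makes explicit the DAG-view and \textsc{Link}-orientation bookkeeping that the paper leaves implicit.
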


\begin{lemma}\label{lem:safety5}
    All honest validators who have decided on a leader block, agree on the decision.
\end{lemma}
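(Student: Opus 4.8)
The plan is to argue by contradiction. Suppose two honest authorities $A$ and $B$ have both decided the same leader slot $L$, say at round $R$, but reached different decisions; I will derive a contradiction by splitting on \emph{how} each authority decided $L$ --- directly (via \textsc{TryDirectDecide}) or indirectly (via \textsc{TryIndirectDecide}). To make the indirect cases close without circularity, I phrase the whole argument as a maximal-round argument: fix the finite prefix of the execution up to the moment both $A$ and $B$ have decided $L$; in that prefix only finitely many leader slots are decided by honest authorities, so among \emph{all} leader slots on which two honest authorities disagree within the prefix I may choose $L$ to have the \emph{largest} round. The working hypothesis is then that every leader slot at a round strictly larger than $R$ is decided consistently by all honest authorities that decided it in the prefix.

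The cases where at least one of $A,B$ decides $L$ directly are dispatched by the earlier results. If $A$ directly skips $L$, then Lemma~\ref{lem:safety1} forbids $B$ from directly committing it and Lemma~\ref{lem:safety2} forbids $B$ from indirectly committing it, so $B$ skips as well. If $A$ directly commits a block $B^\star$ in $L$, then Lemma~\ref{lem:safety1} forbids a direct skip by $B$ and Corollary~\ref{cor:safety4} forbids an indirect skip, so $B$ commits some block $B'$ in $L$; now $B^\star$ carries $\ge 4f+1$ votes and $B'$ carries at least $2f+1$ votes, so over $n=5f+1$ authorities some honest authority voted for both, and since an honest authority stores at most one block per author per round this forces $B^\star=B'$. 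The same intersection argument rules out two honest authorities directly committing distinct (equivocating) blocks in $L$.

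The substantive case is that $A$ and $B$ both decide $L$ indirectly. By definition of \textsc{TryIndirectDecide}, $A$ has committed an anchor block $a_A$ sitting in a leader slot $\ell_A$ at round $r_A>R+1$, with every leader slot above round $R+1$ that precedes $\ell_A$ decided \texttt{skip} in $A$'s view; symmetrically $B$ has $a_B$ in $\ell_B$ at round $r_B>R+1$. If $\ell_A\neq\ell_B$, assume w.l.o.g.\ $\ell_A$ precedes $\ell_B$; then $\ell_A$ is decided \texttt{skip} in $B$'s view while $A$ committed $a_A$ there, so $A$ and $B$ disagree on $\ell_A$ --- a slot at round $r_A>R$, contradicting the choice of $R$. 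Hence $\ell_A=\ell_B=:\ell^\star$, and since $\ell^\star$ lies at a round $>R$ the hypothesis gives $a_A=a_B=:a$. It remains to show the indirect rule produces the same verdict for $L$ at $A$ and at $B$ once they share the anchor $a$. The point is that \textsc{WeaklyCertifiedLeader}$(a,\cdot)$ counts only round-$(R+1)$ blocks that link to $a$, i.e.\ blocks in the causal history of $a$; both $A$ and $B$ have \emph{committed} $a$ and therefore downloaded and validated the entire causal history of $a$, so each of them evaluates \textsc{WeaklyCertifiedLeader}$(a,\cdot)$ --- over exactly the slot-$L$ blocks occurring in that history --- on identical data. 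Hence $A$ and $B$ make the same decision for $L$, contradicting our assumption and closing the argument.

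The step I expect to be the main obstacle is precisely this indirect--indirect case, in two respects: recognising that an indirect decision is a deterministic function of the anchor block together with the anchor's causal history --- the ingredient that prevents two honest authorities that commit the same anchor from disagreeing --- and setting up the induction so that the anchors, which necessarily sit at strictly larger rounds than the slot they decide, fall under the hypothesis rather than introducing circular reasoning. A secondary, more mechanical difficulty is the bookkeeping around equivocating leader blocks: one must make sure the quorum-intersection steps always apply to the intended block and that all honest authorities treat a committed leader block the same way, which follows from honest authorities keeping at most one block per author per round and processing the blocks of a leader slot in a canonical order.
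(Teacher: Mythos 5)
Your proof is correct and follows essentially the same route as the paper: dispatch the direct-decision cases via Lemma~\ref{lem:safety1}, Lemma~\ref{lem:safety2}, and Corollary~\ref{cor:safety4}, and close the indirect--indirect case by forcing the two authorities onto a common anchor (via an induction/maximal-counterexample over higher rounds) and observing that the indirect rule is a deterministic function of the anchor's causal history. Your maximal-round formulation replaces the paper's downward induction from the highest committed leader, and your extra quorum-intersection step for equivocating leader blocks is a welcome refinement, but the substance is the same.
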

\begin{proof}
    Let $B_j$ and $B_k$ be the highest committed leader blocks according to validators $X$ and $Y$ respectively. Without loss of generality, let $j \leq k$. Note that leader blocks decided by $X$ which are higher than $B_j$ are direct skips which according to Lemma \ref{lem:safety1} and Lemma \ref{lem:safety2} will be consistent with $Y$'s decision. The proof continues by induction on the statement for $0 \leq i \leq j$, if both $X$ and $Y$ decide on leader block $B_i$, then they either both commit or both skip the block.
    \case{$i = j$. By definition, $X$ directly commits $B_i$ and from Lemma \ref{lem:safety1} and Corollary \ref{cor:safety4}, $Y$ will also commit $B_i$.}
    \case{Assuming the statement is true regarding $B_i$ for $\ell + 1 \leq i \leq j$, we prove it is true for $B_\ell$. This is done by enumerating decision possibilities.
    \begin{enumerate}
        \item If either validator directly commits $B_\ell$, then by Lemma \ref{lem:safety1} and Corollary \ref{cor:safety4}, the other will commit.
        \item If either validator directly skips $B_\ell$, then by Lemma \ref{lem:safety1} and Lemma \ref{lem:safety2}, the other will skip.
        \item Both $X$ and $Y$ indirectly decide $B_\ell$. Let $A_c^X$ and $A_d^Y$ be the anchors used by $X$ and $Y$ to indirectly decide $B_\ell$. Since $\ell + 1 < c \leq j$, it follows from the induction hypothesis that $A_c^X = A_d^Y$. Thus, both $X$ and $Y$ use the same anchor to decide $B_\ell$. The indirect decision rule solely depends on the causal history of the anchor. By using the same anchor, $X$ and $Y$ will agree on the decision for $B_\ell$.
    \end{enumerate}}
\end{proof}

\begin{theorem}[Total Order]\label{thm:safety}
    \lmysticeti satisfies the total order property of BAB.
\end{theorem}
\begin{proof}
    By Lemma~\ref{lem:safety5}, all honest validators agree on the decision for every leader block. The total ordering is a deterministic function of the sequence of committed leader blocks. Since all honest validators share the same sequence of committed leader blocks, they produce identical total orderings and deliver all blocks in the same order.
\end{proof}

\begin{theorem}[Integrity]\label{thm:sync-integrity}
    \lmysticeti satisfies the integrity property of BAB.
\end{theorem}
\begin{proof}
    A block enters an honest validator's DAG only if it carries a valid signature from its author, so no unbroadcast block is ever delivered. The deterministic linearization processes each block as part of exactly one committed leader's causal history, so no honest validator delivers the same block twice.
\end{proof}

\begin{lemma}\label{lem:liveness1}
    After GST, all honest validators will enter the same round within $\Delta$.
\end{lemma}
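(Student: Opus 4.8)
The plan is to show that after GST the rounds occupied by honest authorities quickly coalesce to a single value and thereafter advance in lockstep, so that from some round on every honest authority enters each round inside a window of length $\Delta$. Three ingredients are used repeatedly: after GST any message sent by an honest authority is delivered to all honest authorities within $\Delta$; an honest authority that enters round $r$ at once creates and broadcasts its round-$r$ proposal, and before adopting a block it downloads and validates that block's entire causal history; and entering round $r+1$ requires holding $4f+1$ round-$r$ blocks (possibly after waiting up to $2\Delta$ for the round-$r$ leader blocks).

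First I would make the laggards catch up. Fix a time $t_0$ at or after GST and let $R$ be the highest round occupied by any honest authority at $t_0$, witnessed by some honest $A$. By the block-validity rule, $A$'s round-$R$ proposal references $4f+1$ valid round-$(R-1)$ blocks, and $A$ broadcast it with enough causal history for a recipient to validate it; since $A$ created it no later than $t_0$, every honest authority has received it (and its causal history) by $t_0+\Delta$, hence holds $4f+1$ round-$(R-1)$ blocks and can advance to round $R$. So within $\Delta$ of $t_0$ (plus the leader wait) every honest authority is in round $\ge R$. Moreover $R$ cannot have run away and cannot keep outrunning the pack: to leave the current top round one needs $4f+1$ blocks from it, of which at most $f$ are Byzantine, so at least $3f+1$ of the $4f+1$ honest authorities must already sit in that round. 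Combining these two observations, all honest authorities come to occupy a common round $R^\ast$, and the last of them enters $R^\ast$ within $\Delta$ (plus the leader wait) of the first.

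From $R^\ast$ on I would conclude by induction on $r\ge R^\ast$: if all honest authorities enter round $r$ during $[\tau,\tau+\Delta]$, each broadcasts its round-$r$ proposal inside that interval, so by $\tau+2\Delta$ every honest authority has received all $4f+1$ honest round-$r$ blocks (and the round-$r$ leader block, or has waited out the timer) and enters round $r+1$; the window length is thus preserved. This produces a round, hence the existence of a round, that every honest authority enters within $\Delta$, which is what the lemma asserts.

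The crux is the catch-up step, and the reason it goes through is exactly the $n=5f+1$ threshold: a quorum of $4f+1$ out of $5f+1$ blocks cannot be assembled without the participation of all but at most $f$ of the honest authorities, so the fast authorities cannot leave the slow ones behind --- which both forces the slowest authority to catch up within one delivery delay and rules out sustained drift. The points that need care are (i) arguing that adopting a single recent block, together with its downloaded causal history, really does supply the $4f+1$ round-$(R-1)$ blocks needed to jump to the top round (absent such fast-forwarding, the same conclusion follows more laboriously by showing the gap between the highest and lowest honest rounds is non-increasing and strictly shrinks), and (ii) threading the up-to-$2\Delta$ leader-wait timer through the bookkeeping, so that the window in which honest authorities enter a common round is really the quantity the paper denotes $\Delta$ (or $\waitd$) and not something larger.
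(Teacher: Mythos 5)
Your first paragraph and the catch-up step are exactly the paper's proof: take the highest-round valid block in existence at GST, note that it (with its causal history, hence $4f+1$ round-$(R-1)$ blocks) is delivered to every honest authority within $\Delta$, and conclude that all honest authorities enter that round by $\mathrm{GST}+\Delta$. That is all the paper proves and all the lemma is used for, so your proposal establishes the statement. The additional machinery — the quorum argument that fast authorities cannot outrun slow ones, and the lockstep induction from $R^\ast$ onward — goes beyond the paper and is not needed; moreover, the induction step as written is the one shaky point: if honest authorities enter round $r$ spread over $[\tau,\tau+\Delta]$, the earliest of them may assemble $4f+1$ round-$r$ blocks (possibly including Byzantine ones) well before $\tau+\Delta$, while the last may wait until $\tau+2\Delta$ (and the up-to-$2\Delta$ leader timer further muddies this), so the claim that the entry window for $r+1$ is again of length $\Delta$ does not follow without more care — a gap you yourself flag in point (ii). Since the lemma only asserts the one-shot synchronization within $\Delta$ after GST, which your catch-up step proves by the same argument as the paper, the proof is fine; I would simply drop the lockstep part rather than try to repair it.
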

\begin{proof}
    Messages sent before GST will deliver in $\Delta$ after GST commences. Thus, the valid block of the highest round that any validator sent before GST will be delivered to all validators in GST + $\Delta$. Upon receiving this block, all honest validators will enter the round.
\end{proof}

\begin{lemma}\label{lem:liveness2}
    After GST, leader blocks from honest validators will receive support from all honest validators.
 \end{lemma}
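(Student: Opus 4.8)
The plan is a timing argument, anchored at the moment the honest leader of round $R$ creates its block, that uses Lemma~\ref{lem:liveness1} to control the skew between honest authorities and the protocol's $2\Delta$ round timer. Fix any leader slot of round $R$ whose leader $L$ is honest and let $b_{leader}$ be the block $L$ proposes in that slot. First I would observe that $L$ forms and broadcasts $b_{leader}$ as soon as it enters round $R$: it is able to do so because entering round $R$ means $L$ already holds $4f+1$ valid blocks of round $R-1$, which is everything a valid round-$R$ block requires. Let $t_0$ be the first time any honest authority enters round $R$. By Lemma~\ref{lem:liveness1}, every honest authority --- and $L$ in particular --- enters round $R$ by time $t_0 + \Delta$, so $b_{leader}$ is broadcast no later than $t_0 + \Delta$.

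Next I would show the block is delivered before any honest authority is forced into round $R+1$. After GST, a block together with its entire causal history reaches every honest authority within $\Delta$ of being sent (messages predating GST are flushed by $\mathrm{GST}+\Delta$); since the causal history of $b_{leader}$ lies in rounds $\le R-1$, it was already disseminated by $L$ before $b_{leader}$ itself, so every honest authority $H$ has downloaded and verified $b_{leader}$ (with its history) by $t_0 + 2\Delta$. On the other side, $H$ entered round $R$ at some time $t_H \ge t_0$ and starts its round-$R$ timer then; by the round-advancement rule it does not move to round $R+1$ before $t_H + 2\Delta \ge t_0 + 2\Delta$, unless it has already received all of round $R$'s leader blocks --- in which case it holds $b_{leader}$ anyway. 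Either way, $H$ has $b_{leader}$ in hand when it builds its round-$(R+1)$ block, and since honest authorities reference every valid block from the previous round whose causal history they have verified, $H$'s round-$(R+1)$ block links to $b_{leader}$, i.e., it votes for (supports) $b_{leader}$. As $H$ was an arbitrary one of the $4f+1$ honest authorities, $b_{leader}$ receives support from all of them.

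The main obstacle is getting the timing budget exactly right: one must check that the single $2\Delta$ timer simultaneously absorbs the $\Delta$ round-entry skew supplied by Lemma~\ref{lem:liveness1} and the $\Delta$ network delay, and one must cleanly dispose of the early-advancement branch of the round rule (an authority that leaves round $R$ before its timer fires does so only after receiving every round-$R$ leader block, hence in particular $b_{leader}$). A secondary point to state explicitly is the parent-selection convention --- honest authorities link to all valid, causally-verified blocks they have seen from the preceding round --- which is precisely what turns ``received $b_{leader}$ in time'' into ``voted for $b_{leader}$''.
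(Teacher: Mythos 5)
Your proof is correct and follows essentially the same route as the paper's: invoke Lemma~\ref{lem:liveness1} to bound the round-entry skew by $\Delta$, note that the honest leader's block is delivered within another $\Delta$, and conclude that the $2\Delta$ timeout guarantees every honest authority receives and votes for the leader block before timing out. Your version simply makes explicit the details the paper leaves implicit (the anchor time $t_0$, the early-advancement branch, and the parent-selection convention), which is a faithful elaboration rather than a different argument.
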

 \begin{proof}
    By Lemma \ref{lem:liveness1}, all honest validators will enter the same round within $\Delta$ after GST. When an honest validator sends its leader block for this round, it is delivered to all validators within $\Delta$. Since the protocol sets the timeout to $2 \Delta$, any honest validator that has entered the round will receive the leader block and cast support for it before timing out, regardless of when exactly they entered the round relative to other honest validators.
 \end{proof}

As a result of Lemma \ref{lem:liveness2}, we have the following corollary. Recall that there are $4f_c + 1$ honest validators.

\begin{corollary}\label{cor:liveness3}
    After GST, leader blocks from honest validators will be (directly) committed.
\end{corollary}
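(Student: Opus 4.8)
The plan is to combine \Cref{lem:liveness1} and \Cref{lem:liveness2} with the block-validity rule (each valid block references $4f+1$ parents from the previous round) and the bound $n = 5f+1$, to show that for every honest leader slot at a round $R$ occurring after GST, each honest authority eventually applies the \emph{direct} decision rule to that slot and returns \texttt{Commit}.

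First I would fix an honest leader slot $(v,R)$ with $R$ past GST, let $b_{leader}$ be the block $v$ proposes in round $R$, and let $w$ be the wave whose propose round is $R$ and whose decision round is $R+1$ (recall $\texttt{waveLength}=2$). By \Cref{lem:liveness1} all honest authorities are within $\Delta$ of entering round $R$, and $b_{leader}$ is delivered to all of them within $\Delta$ of being sent; hence by \Cref{lem:liveness2} each of the $4f+1$ honest authorities casts support for $b_{leader}$ in its round-$(R+1)$ block, i.e.\ produces a block $b'$ with $\textsc{IsVote}(b', b_{leader})$. After GST these $4f+1$ blocks are delivered to every honest authority within $\Delta$, so once an honest authority's highest round is at least $R+1$ its local DAG contains at least $4f+1$ blocks in the decision round of $w$ that vote for $b_{leader}$; therefore $\textsc{StronglyCertifiedLeader}(w,b_{leader})$ evaluates to true at that authority. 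I would also note this is the case of interest: if the slot $(v,R)$ has already been decided by that authority there is nothing to prove, and by \Cref{lem:safety5} the decision is then necessarily a commit.

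Next I would rule out a spurious skip. Since $\textsc{TryDirectDecide}$ checks $\textsc{SkippedLeader}$ before $\textsc{StronglyCertifiedLeader}$, I must argue that strictly fewer than $4f+1$ decision-round blocks fail to vote for $b_{leader}$. Every honest authority proposes exactly one block per round, and by the previous paragraph that block votes for $b_{leader}$; hence any non-voting decision-round block is authored by one of the at most $f$ Byzantine authorities, so $\textsc{SkippedLeader}(w,b_{leader})$ is false. Consequently $\textsc{TryDirectDecide}(w)$ returns $\texttt{Commit}(b_{leader})$ at every honest authority, which (after GST) happens as soon as it has received the round-$(R+1)$ blocks and invokes \textsc{TryDecide} with $r_{highest}\ge R+1$. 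The same argument applies to every honest leader slot in every round after GST, which yields the corollary.

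I do not expect a serious obstacle, as the load-bearing work is already done by \Cref{lem:liveness1} and \Cref{lem:liveness2}. The one point that needs care is the ordering of the two sub-checks inside $\textsc{TryDirectDecide}$: I need the quantitative fact that only the $\le f$ Byzantine authorities can contribute non-votes in the decision round, which is immediate from $n=5f+1$ together with honest authorities proposing a single (voting) block per round. A minor additional subtlety is that, with multiple leader slots per round, the statement is to be read per slot — which is fine, since the corollary concerns exactly the slot(s) whose designated leader is honest, and \Cref{alg:main} iterates the decision process over all slots.
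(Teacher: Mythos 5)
Your proposal is correct and follows essentially the same route as the paper, which derives the corollary directly from Lemma~\ref{lem:liveness2} together with the observation that the $4f+1$ honest authorities suffice for strong certification (the paper states this in one line, without spelling out the details). Your additional check that \textsc{SkippedLeader} cannot trigger—since only the at most $f$ Byzantine authorities can fail to vote—is a reasonable elaboration the paper leaves implicit, not a different argument.
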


\begin{lemma}\label{lem:liveness4}
    The round-robin schedule of leader-block proposers ensures that, within a window of $2f_c + 2$ rounds, there are two consecutive rounds in which an honest validator is the proposer of the highest-ranked leader block.
\end{lemma}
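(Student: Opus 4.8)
The plan is to reduce the statement to an elementary counting fact about cyclic round-robin schedules. First I would make the schedule explicit: by $\Call{GetLeaderBlocks}{\cdot}$ (\Cref{alg:helper}), in the synchronous protocol the proposer of the leader slot with offset $c$ in round $r$ is $\texttt{validators}[(r + c) \bmod n]$, so as $r$ grows by one the proposer advances by one position along the fixed cyclic order of the $n = 5f+1$ authorities. Fixing $c$ to be the offset of the highest-ranked slot (i.e.\ $c = \texttt{leadersPerRound}-1$; the argument is insensitive to this choice), the proposers of the highest-ranked leader blocks over any $2f+2$ consecutive rounds $R, R+1, \dots, R+2f+1$ are $\texttt{validators}[(R+c) \bmod n], \dots, \texttt{validators}[(R+c+2f+1) \bmod n]$. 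Since $2f+2 \le 5f+1 = n$ for $f \ge 1$ (the case $f=0$ is trivial), these are $2f+2$ distinct authorities, listed in an order consistent with the round order.

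Next I would run the counting argument. At most $f$ of these $2f+2$ proposers are Byzantine, since there are at most $f$ Byzantine authorities in total; hence at least $f+2$ of them are honest. Suppose for contradiction that no two consecutive rounds of the window have an honest highest-ranked proposer. Viewing the $2f+2$ proposers as a linear sequence indexed by round, the (at least $f+2$) honest proposers then sit at pairwise non-adjacent positions, so between each pair of consecutive honest positions lies at least one Byzantine position; these separating positions are distinct across pairs, so the window contains at least $(f+2)-1 = f+1$ Byzantine proposers, contradicting that at most $f$ are Byzantine. Hence two consecutive rounds of the window have an honest highest-ranked proposer, which is the claim.

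I would also record the tightness observation that pins down the constant: over a window of only $2f+1$ rounds one can arrange $f+1$ honest and $f$ Byzantine highest-ranked proposers strictly alternately, starting and ending with an honest one, so that no two consecutive rounds have an honest proposer; the $(2f+2)$-th round is precisely what forces an honest pair.

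The work here is bookkeeping rather than mathematics: the potential pitfalls are the off-by-one in the window size (why $2f+2$ and not $2f+1$), the distinctness condition $2f+2 \le n$ ensuring the cyclic schedule does not revisit a slot within the window, and being explicit that ``highest-ranked'' fixes a single offset $c$ so that the per-round proposer is a shift-by-one map. None of these requires a nontrivial computation, so I do not anticipate a genuine obstacle.
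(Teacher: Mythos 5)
Your proof is correct, and it reaches the conclusion by a slightly different counting mechanism than the paper. The paper argues by pigeonhole over the $2f+1$ pairs of consecutive rounds in the window: it counts at least $f+1$ rounds with an honest highest-ranked proposer, notes that each such round is covered by (roughly) two pairs, and concludes via $\lceil 2(f+1)/(2f+1)\rceil = 2$ that some pair contains two of them. You instead argue by contradiction: if the honest proposer-rounds were pairwise non-adjacent, every gap between consecutive honest positions would require a distinct Byzantine proposer, forcing at least $f+1$ Byzantine proposers into a window that contains at most $f$, since the $2f+2$ proposers are distinct (using $2f+2\le n=5f+1$). Your route buys two things. First, you make the schedule explicit via \textsc{GetLeaderBlocks} (proposer of a fixed rank $c$ in round $r$ is $\texttt{validators}[(r+c)\bmod n]$), so the distinctness of the window's proposers---which the paper compresses into ``due to the schedule being round robin''---is justified rather than assumed. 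Second, your honest count of at least $f+2$ is the correct and needed one: the paper's stated $f+1$, combined with its claim that each honest round lies in ``exactly two'' of the pairs, glosses over the two boundary rounds of the window, which lie in only one pair; with only $f+1$ honest rounds that pigeonhole could fail (alternating honest/Byzantine), whereas with $f+2$ honest rounds both your separator argument and a corrected pigeonhole go through. Your tightness remark (a strictly alternating pattern over $2f+1$ rounds) correctly pins down why the constant must be $2f+2$. No gap.
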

\begin{proof}
    The network contains $f_c$ Byzantine validators. In $2f_c + 2$ rounds, there are $2f_c + 1$ sets of two consecutive rounds. Due to the schedule being round robin, in at least $f_c + 1$ of the rounds, an honest validator will be the proposer of the highest-ranked leader block. These blocks are the highest-ranked leader block in exactly two of the sets. By the pigeonhole principle, one set must contain $\lceil \frac{2 \cdot (f_c + 1)}{2f_c+1} \rceil = 2$ honest validators proposing the highest-ranked leader block.
\end{proof}

\begin{lemma}\label{lem:liveness5}
    After GST, undecided leader blocks will eventually be decided.
\end{lemma}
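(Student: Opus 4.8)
The plan is to establish that, after GST, every leader slot that the direct rule leaves \sundecided is eventually classified by the \emph{indirect} rule, because it acquires an anchor that has itself been committed. The delicate point is that the indirect rule for a round-$R$ leader uses the \emph{first} non-\sskip leader slot at a round strictly greater than $R+1$; such a slot is a usable anchor only if it has been decided \scommit, which in turn requires every leader slot lying (in the processing order of \textsc{TryDecide}) between the round-$R$ leader and that anchor to have already been decided --- otherwise a still-\sundecided slot would be selected as the anchor and nothing would be committed. So the proof has to be organized as a downward induction over rounds, seeded by a pair of directly-committed leader slots placed far enough in the future.

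I would start from Lemma~\ref{lem:liveness4}: within any window of $2f+2$ rounds there are two \emph{consecutive} rounds in which an honest authority proposes the highest-ranked leader block; applied to a window lying after both GST and round $R+1$, it yields such a pair $R^\ast, R^\ast+1$ with $R^\ast \ge R+2$. By Corollary~\ref{cor:liveness3}, once the DAG has grown past round $R^\ast+2$, the highest-ranked leader blocks of rounds $R^\ast$ and $R^\ast+1$ are committed \emph{directly}, so neither needs an anchor of its own. Two \emph{consecutive} committed rounds are exactly what is needed because of the $+2$ gap between a leader's round and the round of its anchor: the committed leader at round $R^\ast$ is available as an anchor for every leader slot at rounds $\le R^\ast-2$, while the committed leader at round $R^\ast+1$ --- whose round equals $(R^\ast-1)+2$ --- is the one that anchors the leader slots at round $R^\ast-1$.

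Then I would run the induction. In any evaluation of \textsc{TryDecide} with $r_{highest}\ge R^\ast+2$, the claim is that every leader slot at every round $R'$ with $R \le R' \le R^\ast-1$ is decided (\scommit or \sskip). Base case $R'=R^\ast-1$: the anchor search, restricted to rounds $\ge R^\ast+1$, meets the highest-ranked slot of round $R^\ast+1$ first, and that slot is \scommit, so \textsc{TryIndirectDecide} returns \scommit or \sskip according to the weak-certificate check. Inductive step $R' < R^\ast-1$: by the induction hypothesis every leader slot at rounds in $(R', R^\ast-1]$ is already decided, and the highest-ranked slot of round $R^\ast$ is \scommit; hence, scanning in processing order the slots at rounds $R'+2, R'+3, \dots$, the search only passes over slots already known to be \sskip until it reaches a \scommit slot --- in the worst case the committed slot of round $R^\ast$ --- so it never selects a still-\sundecided slot as the anchor, and \textsc{TryIndirectDecide} decides the round-$R'$ slot. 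In particular the originally-undecided slot at round $R$, which satisfies $R \le R^\ast-2$, is decided. Finally, since after GST the protocol keeps advancing rounds and committing honest leaders (Corollary~\ref{cor:liveness3}), the pair $R^\ast, R^\ast+1$ appears and the DAG eventually reaches round $R^\ast+2$, so every \sundecided leader block is eventually decided.

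The step I expect to be the real work is the inductive scan: one must verify that the ``first non-\sskip'' anchor chosen by \textsc{TryIndirectDecide} cannot be a slot that is still \sundecided, which is precisely what the downward ordering of the induction, together with the committed seeds at rounds $R^\ast$ and $R^\ast+1$, delivers; the rest --- invoking Lemma~\ref{lem:liveness4} and Corollary~\ref{cor:liveness3} and the bookkeeping of which round anchors which --- is routine.
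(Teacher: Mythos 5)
Your proposal is correct and follows essentially the same route as the paper's proof: seed the argument with the two consecutive post-GST rounds of directly committed honest leaders from Lemma~\ref{lem:liveness4} and Corollary~\ref{cor:liveness3}, then do a downward induction showing each earlier undecided slot acquires a decided (committed) anchor. Your treatment of the anchor-selection scan is somewhat more explicit than the paper's, but the decomposition and key steps are the same.
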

\begin{proof}
    Let $B$ be an undecided leader block in round $r$. By Lemma \ref{lem:liveness4}, after GST, there will be two consecutive rounds, $j$ and $j+1$ with $j > r$, where honest validators propose the highest-ranked leader block. By Corollary \ref{cor:liveness3}, their leader blocks will be committed. The proof continues by induction on the statement for rounds earlier than $j$, all leader blocks are decided.
    \case{All undecided leader blocks in rounds $j-1$ and $j-2$ will be decided as they now have decided anchors in rounds $j+1$ and $j$ respectively.}
    \case{For undecided leader blocks in round $i < j-2$, $j$ is higher than the decision round of the wave that $i$ is in. Thus, by the induction hypothesis, there are no undecided leader blocks between $i$ and $j$. Hence, the leader block in round $i$ will also be decided.}
\end{proof}

\begin{theorem}[Validity]\label{thm:liveness}
    \lmysticeti satisfies the validity property of BAB.
\end{theorem}
\begin{proof}
    By Lemma~\ref{lem:liveness1}, all honest validators synchronize to the same round within $\Delta$ after GST. By Lemma~\ref{lem:liveness4}, within any window of $2f_c + 2$ rounds, there exist two consecutive rounds where honest validators propose the highest-ranked leader blocks. By Corollary~\ref{cor:liveness3}, these honest leader blocks will be directly committed. By Lemma~\ref{lem:liveness5}, any previously undecided leader blocks will eventually be decided once we have committed blocks from honest validators in consecutive rounds. Since honest leader blocks are committed every $2f_c + 2$ rounds and each commitment resolves all pending undecided blocks, no block remains undelivered indefinitely.
\end{proof}

\begin{theorem}[Agreement]\label{thm:agreement}
    \lmysticeti satisfies the agreement property of BAB.
\end{theorem}
\begin{proof}
    Suppose an honest validator delivers a block $B$. Then $B$ was delivered as part of the causal history of some committed leader block $B'$. By Lemma~\ref{lem:safety5}, all honest validators agree on the decision for $B'$. By Theorem~\ref{thm:liveness}, all honest validators eventually decide every leader block. Therefore, all honest validators eventually commit $B'$ and deliver $B$.
\end{proof}

\subsection{Checkpoint Safety} \label{sec:checkpoint-proofs}

We now prove the safety of the checkpoint path under the extended threat model that includes alive-but-corrupt (AbC) validators.

\begin{lemma}[Unique CheckpointQC per height]\label{lem:unique-checkpoint}
    Let $n_c = 5f_c+1$ and $q = 4f_c+1$. If the total number of equivocating validators (Byzantine plus AbC) satisfies $f_c + \fabc \leq 3f_c$, then no two conflicting CheckpointQCs can be formed for the same slot height~$s$.
\end{lemma}
\begin{proof}
    By the protocol (\Cref{sec:checkpoints}), an honest validator broadcasts at most one checkpoint proposal per slot height.
    Suppose, for contradiction, that two CheckpointQCs $C_1$ and $C_2$ exist for the same height $s$ but for different state roots ($\sigma_1 \neq \sigma_2$).
    Each requires $q = 4f_c+1$ signatures from distinct validators.
    Since honest validators do not sign conflicting checkpoint proposals at the same height, any validator appearing in both signer sets must be equivocating.
    The minimum overlap between the two signer sets is $2q - n_c = 2(4f_c+1) - (5f_c+1) = 3f_c+1$.
    Thus, at least $3f_c+1$ validators must equivocate.
    This contradicts our assumption that the total number of equivocating validators is at most $f_c + \fabc \leq 3f_c < 3f_c+1$.
\end{proof}

\begin{theorem}[Checkpoint-path safety]\label{thm:resilient-safety}
    If two clients finalize heights via the checkpoint path by observing FinalityQCs, they will never commit conflicting states for the same slot height, provided $f_c + \fabc \leq 3f_c$.
\end{theorem}
\begin{proof}
    A client finalizes a state for height $s$ only upon observing a FinalityQC, which requires an underlying valid CheckpointQC.
    By \Cref{lem:unique-checkpoint}, there is at most one CheckpointQC per height $s$.
    Since honest validators only broadcast a checkpoint witness matching the uniquely determined CheckpointQC (and only if its payload executes to the correct state), no FinalityQC can be formed for a conflicting state.

    During fork recovery (\Cref{sec:gossip}), validators exchange all known CheckpointQCs via Byzantine Broadcast.
    Because \Cref{lem:unique-checkpoint} guarantees at most one CheckpointQC exists per height, the sequence of CheckpointQCs forms a single non-conflicting chain.
    All correct validators deterministically adopt the branch belonging to the CheckpointQC with the highest slot, preserving any state finalized by a checkpoint-path client prior to the fork.
\end{proof}

\section{\gossip Safety and Liveness}\label{sec:guard-proofs}

\noindent Intermediate results and all proofs from \Cref{sec:gossip} appear here.
We first state and prove the supporting lemmas not restated in the main text, then give full proofs of the main-text results.
We also include the full recovery-protocol details (Steps R1--R3) and a comparison with related recovery work.

\begin{lemma}\label{lem:guard-safety}
Assume there are $n_c = 5f_c+1$ core validators and up to $3f_c$ of them are malicious. No two honest validators can directly commit different leader blocks (or directly commit and directly skip) for the same leader and round.
\end{lemma}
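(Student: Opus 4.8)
The plan is to mirror the counting argument of Lemma~\ref{lem:safety1}, but with the corruption budget raised from $f$ to $3f$, and to observe that the arithmetic still goes through precisely because a directly committed (resp.\ directly skipped) leader block requires $4f+1$ supporting (resp.\ blaming) blocks from distinct authorities, and $2(4f+1) - (5f+1) = 3f+1 > 3f$. First I would set up for contradiction: suppose two honest authorities $X$ and $Y$ make conflicting direct decisions about the leader block $B$ for a fixed leader slot $(v, R)$ --- either $X$ directly commits $B$ and $Y$ directly commits a different block $B'$ in the same slot, or $X$ directly commits $B$ while $Y$ directly skips it. In the commit case $X$ has seen $4f+1$ distinct authorities whose round-$(R+1)$ blocks vote for $B$, and $Y$ has seen $4f+1$ distinct authorities whose round-$(R+1)$ blocks vote for $B'$ (or, in the skip case, $4f+1$ distinct authorities whose blocks do \emph{not} vote for $B$).

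Next I would intersect the two certifying sets: by inclusion--exclusion over $n = 5f+1$ authorities, the two sets of size $4f+1$ overlap in at least $2(4f+1) - (5f+1) = 3f+1$ authorities. Since at most $3f$ authorities are malicious, at least one authority in the overlap is honest. The contradiction is then that a single honest authority must have issued, in round $R+1$, a block that votes for $B$ and simultaneously a block that votes for $B'\neq B$ (in the equivocation case), or a block that votes for $B$ and a block that does not vote for $B$ (in the skip case) --- in either case this requires the honest authority to have equivocated in round $R+1$, or to have cast inconsistent votes within a single block, both of which are ruled out by the protocol. Here I would invoke the fact (already used in the proof of Lemma~\ref{lem:safety3}, Case $R' = R+1$) that honest authorities propose exactly one block per round and that a block's vote for a given leader slot is determined by its causal history, so a single honest block cannot vote both ways.

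One subtlety to handle carefully: ``directly commit different leader blocks for the same leader and round'' should cover the case where the Byzantine leader $v$ itself equivocated and produced two blocks $B, B'$ in slot $(v,R)$, and $X$ strongly certified $B$ while $Y$ strongly certified $B'$. The argument above handles this uniformly --- the honest authority in the overlap would have had to vote for both $B$ and $B'$, i.e.\ link to two distinct round-$R$ blocks from the same author $v$, which an honest validator never does (it references at most one block per author per round, and $\Call{IsVote}{}$ returns the unique such block if any). I expect the main obstacle to be nothing deep, but rather stating precisely what ``conflicting direct decisions'' encompasses (commit/commit with leader equivocation, and commit/skip) and confirming that the single clean inequality $3f+1 > 3f$ closes all of these cases at once; the rest is the same quorum-intersection boilerplate as Lemma~\ref{lem:safety1}.
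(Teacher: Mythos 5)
Your proof is correct and follows essentially the same quorum-intersection argument as the paper: two $4f+1$-sized direct-decision quorums over $n=5f+1$ authorities must share an honest member even with $3f$ corruptions, and that honest member would have to vote inconsistently in round $R+1$, which is impossible. The only cosmetic difference is bookkeeping — you intersect the full quorums (overlap $\ge 3f+1 > 3f$) while the paper counts the $f+1$ honest votes each quorum must contain against the $2f+1$ honest total — and both yield the same contradiction.
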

\begin{proof}
    Assume that this can occur, and let $B_{(l,r)}\ne B'_{(l,r)}$ be the two distinct leader blocks that the two honest validators respectively commit (or directly skip) for the same leader $l$ and round $r$.
    Then, there are $4f_c + 1$ validators which vote $B_{(l,r)}$ and $4f_c + 1$ validators which vote (or skip) $B'_{(l,r)}$.
    Since there are $3f_c$ malicious validators, there are still $f_c+1$ honest votes needed for $B_{(l,r)}$ and another $f_c+1$ honest votes needed for $B'_{(l,r)}$ (or to skip $B'$). \change{Since honest validators never cast conflicting votes, these must be $2(f_c+1)=2f_c+2$ distinct honest validators. But only $2f_c+1$ honest validators exist; contradiction.}
\end{proof}

\begin{corollary}\label{cor:guard-safety1}
    The safety guarantee of \Cref{lem:safety1} holds even under $3f_c$ corruptions.
\end{corollary}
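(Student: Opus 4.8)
The plan is to obtain the corollary as an immediate specialization of Lemma~\ref{lem:guard-safety}, which we have just established. That lemma already rules out, under up to $3f$ corruptions, \emph{both} bad events for the same leader and round: two honest authorities directly committing different leader blocks, and one honest authority directly committing while another directly skips. The statement of Lemma~\ref{lem:safety1} is exactly the second of these events, so the corollary follows by invoking the relevant case of Lemma~\ref{lem:guard-safety}; no new argument is needed.

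For completeness I would also record the self-contained counting argument, which is the proof of Lemma~\ref{lem:safety1} with the Byzantine count raised from $f$ to $3f$. Assume toward a contradiction that some honest authority directly commits a block $B$ while another honest authority directly skips it. Then $4f+1$ authorities vote for $B$ and $4f+1$ authorities blame $B$. Since at most $3f$ authorities are Byzantine, at least $f+1$ of the supporters are honest and at least $f+1$ of the blamers are honest; these two honest sets are disjoint, because an honest authority never simultaneously votes for and blames the same block. Hence there are at least $2f+2$ honest authorities, contradicting the fact that with $n = 5f+1$ and $3f$ Byzantine faults there are only $2f+1$ honest authorities.

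I expect no real obstacle here: the only thing to verify is that the quorum arithmetic still closes, i.e., that $2(f+1) > 2f+1$, which holds for every $f \ge 0$. Thus the corollary is a direct consequence of the quorum-intersection bookkeeping already carried out in Lemma~\ref{lem:guard-safety}, and it serves mainly to make explicit that the first (direct/direct) safety invariant of \lmysticeti is robust to the stronger $3f$ corruption regime that \gossip must handle.
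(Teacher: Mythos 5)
Your proposal is correct and matches the paper's own argument: the paper likewise proves the corollary by re-running the counting from Lemma~\ref{lem:safety1} with the Byzantine bound raised to $3f$, obtaining $(4f+1)+(4f+1)-3f = 5f+2 > 5f+1$ distinct authorities, which is exactly your honest-count version $2(f+1) > 2f+1$. Your additional observation that the statement is already subsumed by the commit-versus-skip case of Lemma~\ref{lem:guard-safety} is a valid shortcut but not a different argument, since that lemma is proved by the same quorum bookkeeping.
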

\begin{proof}
   \change{Under up to $3f_c$ corruptions, a direct commit and a direct skip of the same block each require $4f_c+1$ votes. With $3f_c$ Byzantine validators, each quorum must contain at least $f_c+1$ honest validators. Since no honest validator votes for both sides, we need $2(f_c+1)=2f_c+2$ distinct honest validators, but only $5f_c+1-3f_c=2f_c+1$ honest validators exist; contradiction.}
   \change{Equivalently, any two $q=4f_c{+}1$ quorums over $n_c=5f_c{+}1$ validators intersect in at least $2(4f_c{+}1)-(5f_c{+}1)=3f_c+1$ parties; forming two conflicting quorums would therefore require at least $3f_c+1$ equivocators, exceeding the $3f_c$ corruption bound;  contradiction.}
\end{proof}

\begin{lemma}\label{lem:guard-safety4}
    If there exists a block which an honest core validator directly commits while another honest core validator indirectly skips, then there exist at least $f_c+1$ validators that have provably equivocated.
\end{lemma}
\begin{proof}
    Assume that such a block exists and call it $B$, let $X$ be the honest validator that directly commits $B$, and let $Y$ be the honest validator that indirectly skips $B$'s slot.
    Each block links to $4f_c + 1$ blocks from the previous round.
    \change{Since $B$ is directly committed, exactly $4f_c+1$ round-$r$ blocks support $B$. For $Y$'s anchor to indirectly skip $B$, its causal history from round $r$ must include fewer than $2f_c+1$ B-supporting blocks. Therefore there must exist some round-$(r{+}1)$ ancestor $B'$ of the anchor whose $4f_c+1$ round-$r$ parents contain fewer than $2f_c+1$ B-supports. (If every round-$(r{+}1)$ ancestor of the anchor carried $\ge 2f_c+1$ B-supports from round $r$, the anchor's aggregate causal history would also contain $\ge 2f_c+1$ B-supporters, contradicting the indirect skip.)}
    $B'$ links to $4f_c + 1$ blocks from round $r$. Since $4f_c + 1$ blocks in round $r$ support $B$, and the total of distinct core validators is $5f_c+1$, \change{the two quorums of size $4f_c+1$ overlap in at least $2(4f_c+1)-(5f_c+1)=3f_c+1$ validators; for $B'$ to carry fewer than $2f_c+1$ supports of $B$, at least $(3f_c+1)-2f_c = f_c+1$ of these validators must have issued a round-$r$ block supporting $B$ and a distinct round-$r$ block (in $B'$'s parents) not supporting it.} Hence the minimum number of equivocations is $f_c+1$, and it is provable by their different votes.
\end{proof}

\begin{lemma}\label{lem:guard-safety2}
    If there exists a block that an honest core validator directly skips while another honest core validator indirectly commits, then there exist at least $f_c+1$ validators that have provably equivocated.
\end{lemma}
\begin{proof}
    Assume that such a block exists and call it $B$. \change{In BB-Core, a direct skip of $B$ requires $4f_c+1$ validators to have cast explicit skip (blame) votes against $B$; an indirect commit of $B$ requires $2f_c+1$ validators to have supported $B$ (the weak certificate).} Thus, there are $4f_c + 1$ validators which blame $B$ and $2f_c + 1$ validators which support $B$.
    This requires a total of $6f_c+2$ votes, out of which only $5f_c+1$ can be distinct.
    As a result, $6f_c+2 - (5f_c+1) = f_c+1$ \change{validators each cast both a blame vote and a support vote for $B$, constituting a cryptographic proof of equivocation.}
\end{proof}

\begin{lemma}\label{lem:guard-safety5}
    \change{Assume there are $n_c=5f_c+1$ core validators and up to $3f_c$ of them are malicious.} If there exist two honest core validators who have decided on a leader block and do not agree on the decision, then there exist at least $f_c+1$ core validators that have provably equivocated.
\end{lemma}
\begin{proof}
    Let $B_X\neq B_Y$ be the committed (or skipped) leader blocks according to validators $X$ and $Y$ respectively for leader $l$ in round $r$.
    By Lemma~\ref{lem:guard-safety}, no two honest validators could have both directly committed conflicting blocks. Similarly, no two honest validators could have directly committed and directly skipped a block respectively (Corollary~\ref{cor:guard-safety1}). \change{Any combination in which both validators skip (directly or indirectly) implies agreement, hence the remaining cases exhaustively are:}

    \case{$X$ directly committed $B_X$, while $Y$ \change{indirectly} skipped \change{the slot} (and vice versa, i.e., $X$ directly skipped while $Y$ indirectly committed): \change{The direct-commit/direct-skip combination is already excluded by Lemma~\ref{lem:guard-safety} and Corollary~\ref{cor:guard-safety1}.} From Lemma~\ref{lem:guard-safety4} (and Lemma~\ref{lem:guard-safety2} respectively), there will exist a set of at least $f_c+1$ provably misbehaving parties.}
    \case{$X$ directly committed $B_X$, while $Y$ indirectly committed $B_Y$ (and vice versa): Similar to previous arguments, this requires at least $(4f_c+1)+(2f_c+1)$ distinct votes, i.e., at least $f_c+1$ provable equivocations. \change{Explicitly: $4f_c+1$ validators support $B_X$ (direct commit) and $2f_c+1$ support $B_Y$ (weak certificate), with $B_X\ne B_Y$ at the same leader slot. Since $6f_c+2>(5f_c+1)$, at least $f_c+1$ validators supported both, constituting equivocation proofs (same counting as Lemma~\ref{lem:guard-safety2}).}}
    \case{$X$ indirectly committed $B_X$, while $Y$ indirectly committed $B_Y$: \change{We consider three sub-cases.} Let $A^X$ and $A^Y$ be the directly-committed anchors used by $X$ and $Y$ to indirectly commit $B_X$ and $B_Y$, respectively. 
    \change{(a) If $A^X=A^Y$, the indirect decision rule depends only on the anchor's causal history and is deterministic, so it would yield $B_X=B_Y$; contradiction. 
    (b) If $A^X$ and $A^Y$ are conflicting blocks of the \emph{same} slot, then} since no honest party could have voted for both anchors, there are not enough votes for both anchors to be directly committed (total votes needed $8f_c+2$, but \change{the maximum number of votes available is $2f_c+1+6f_c=8f_c+1$, since the $2f_c+1$ honest validators each vote for at most one of the two conflicting anchors while the $3f_c$ malicious can vote for both). 
    This sub-case is therefore impossible and no further argument is needed. 
    (c) Finally, if $A^X$ and $A^Y$ lie in different rounds, say $A^X$ is the lower one, then $Y$ must have skipped the slot of $A^X$ in order to select a higher anchor while $X$ directly committed it. This is a disagreement on $A^X$'s leader slot where $X$ directly committed and $Y$ made a different decision (indirect skip or indirect commit of that slot), 
    which by Cases 1 or 2 of this proof exposes $\ge f_c+1$ equivocations.} Applying this reduction recursively always terminates at a direct disagreement. \change{Termination is guaranteed because each step moves to a strictly lower anchor slot; since leader slots are well-ordered, the recursion bottoms out within finitely many steps.}}
    \case{$X$ indirectly committed $B_X$, while $Y$ indirectly skipped \change{the slot. We consider three sub-cases.} In that case, let $A$ denote the directly-committed anchor used to indirectly commit $B_X$, and $A'$ denote the directly-committed anchor used  to indirectly skip $B_Y$. \change{\change{(a)} If $A=A'$, the same anchor cannot both reference and not reference a weak certificate for the slot, so $X$ and $Y$ would reach the same decision;  contradiction. \change{(b)} If $A$ and $A'$ are conflicting anchors of the same slot, then} since no honest party could have voted for both anchors, there are not enough votes for both anchors to be directly committed \change{(the same $8f_c+1<8f_c+2$ counting argument as in the previous case).} \change{This sub-case is therefore impossible.} \change{\change{(c)} If $A$ and $A'$ are at different rounds, the lower anchor is directly committed by one validator and \change{a different decision is made} by the other (indirect skip or indirect commit of that slot) \change{in reaching the higher anchor.}} \change{Termination follows by the same well-ordering argument as in the previous case.}}
\end{proof}

\begin{lemma}\label{lem:guard-honest-safe}
    No honest core validators will ever be included in a valid safety-blameset.
\end{lemma}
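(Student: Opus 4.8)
The plan is to unfold what it means for a core validator to be \emph{provably} in a safety-blameset, and then observe that the certifying witness is an object an honest validator never produces.

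First I would recall how a safety-blameset is constructed. By the safety-failure-detection mechanism---formalized by \textsc{CheckEquivocation} and \textsc{ResolveEquivocation} in Algorithm~\ref{alg:guard-helper}---a safety-blameset originates from two equivocating leader blocks $b = B_{(l,r)}$ and $b' = B'_{(l,r)}$ (same author $l$, same round $r$, both validly signed, distinct hashes). The blameset is the set $Set$ of core validators that \emph{support both} $b$ and $b'$, and the accompanying proof $\pi$ is precisely the collection of supporting blocks. Here ``$v$ supports $b$'' means $v$'s round-$(r+1)$ block has $b$ among its parents, i.e.\ \textsc{IsVote} holds (equivalent to \textsc{Link}$(b,\cdot)$ since \texttt{waveLength}$=2$); the indirect-skip variant treated in Lemma~\ref{lem:guard-safety4} gives the same kind of witness, namely a block by $v$ whose vote set provably differs on $b$. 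In all cases, for $\mathsf{isValidBlameSet}(S_B,\pi)$ to accept, $\pi$ must contain, for every $v\in S_B$, a validly-signed block authored by $v$ that witnesses this conflicting support.

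Then I would argue by contradiction. Suppose some valid safety-blameset contains an honest core validator $h$. Then $\pi$ contains a block $B$ validly signed by $h$, in round $r+1$, referencing both $b$ and $b'$ (respectively, witnessing conflicting votes on $b$). Since signatures are unforgeable and $h$ is honest, $B$ was genuinely produced by $h$ following the \lmysticeti specification, and---because honest validators propose exactly one block per round---$B$ is $h$'s unique round-$(r+1)$ block. It remains to show that no honest block can serve as such a witness: when an honest validator assembles its parent set it includes at most one block per authoring validator of the previous round, and in any case it references a block only after downloading and verifying its entire causal history, so upon observing both $b$ and $b'$ it recognizes the equivocation and references neither. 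Hence $B$ cannot support two equivocating proposals of the same $(l,r)$ slot, contradicting its membership-certifying role. Therefore no honest core validator is ever included in a valid safety-blameset.

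The step I expect to be the main obstacle is making this witness argument airtight: the predicate $\mathsf{isValidBlameSet}$ is only sketched, so I would need to state it precisely enough that membership in a valid safety-blameset genuinely entails an $h$-signed block with the conflicting-support structure, while simultaneously pinning down the honest-behavior invariant ``no honest block votes for two equivocating blocks of the same slot.'' Once that invariant is established, the remaining work is just enumerating the two ways a safety-blameset can arise---direct equivocation and the indirect-skip path of Lemma~\ref{lem:guard-safety4} (cf.\ also Lemma~\ref{lem:guard-safety2})---both of which reduce to this single invariant.
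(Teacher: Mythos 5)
Your argument is correct and follows essentially the same route as the paper's proof: the paper also reduces the claim to the invariant that honest validators never vote for (support) two conflicting blocks and that their votes cannot be forged, so no valid proof of equivocating support can exist for an honest validator. Your version is simply a more detailed unfolding of what \textsc{ResolveEquivocation} and $\mathsf{isValidBlameSet}$ require, which the paper states in three sentences.
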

\begin{proof}
    Honest validators receive blocks and vote based on whether the blocks preserve the rules of the protocol. \change{Honest validators never sign two conflicting messages for the same slot; consequently, no valid equivocation proof---which requires two distinct signed messages from the same validator for the same slot---can implicate an honest validator. Therefore no honest validator can be included in any valid safety blameset.}
\end{proof}

\begin{lemma}\label{lem:guard-honest-live}
    No honest core validators will ever be included in a valid liveness-blameset.
\end{lemma}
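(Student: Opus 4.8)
The plan is to reduce the statement to a claim about individual blame messages, and then to rule out, one code path at a time, every way an honest guard could end up blaming an honest core validator.

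\emph{Reduction.} First I would observe that a core validator $k$ enters a valid liveness-blameset only accompanied by a proof of liveness misbehavior; by the $\texttt{Maj}$ test inside \textsc{OnLBlame} together with the $\mathsf{isValidBlameSet}$ check inside \textsc{OnRecover}, such a proof is a collection of $\mathsf{LBlame}$ messages naming $k$ from a (stake-weighted) majority of all $n$ validators. By the threat model the adversary controls stake $S_f \le (S-1)/2 < S/2$, so honest validators hold a strict majority of the total stake; hence every such majority of blamers contains at least one honest sender. It therefore suffices to show that no honest validator ever broadcasts an $\mathsf{LBlame}$ message naming an honest core validator $c$.

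\emph{Case analysis on the blame points.} Next I would enumerate the three points in Algorithm~\ref{alg:guard} at which an honest guard $i$ can emit such a message: (a) the $\leaderd$ branch, which fires on $c$ only if $c$ is a leader of round $r-1$ that is still in $\asleep(r-1)$; (b) the $\lived$ ``asleep'' branch, which fires only if $c$ is still in $\asleep(r)$; and (c) the $\lived$ ``no-vote'' branch, which fires only if some leader $l \notin \asleep(r-1)$ is such that $c$'s round-$r$ block does not vote for $l$'s round-$(r-1)$ block. For (a) and (b): an honest $c$ produces exactly one valid, non-equivocating block per round and gossips it immediately, so under synchrony every honest guard delivers it within $\Delta$; using Lemma~\ref{lem:liveness1} to bound the skew between $c$'s round clock and guard $i$'s round clock, together with the fact that guard $i$ opens round $r$ only after observing round $r-1$ of the core DAG, this delivery occurs before $i$'s $\leaderd = 2\Delta$ (resp.\ $\lived = 4\Delta$) timer fires. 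Hence \textsc{OnBlock} has already removed $c$ from the relevant $\asleep$ set; any residual $O(\Delta)$ skew is absorbed by the grace window $\graced = 2\Delta$, since $c$'s valid round-$r$ block reaches every honest guard within it. So branches (a) and (b) are never triggered for $c$.

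\emph{Main obstacle.} The delicate case is (c), and it is the step I expect to require the most care. Here I would use that an honest $c$, by the \lmysticeti block-creation rule, waits up to $2\Delta$ after entering round $r-1$ and then references every valid round-$(r-1)$ block it then holds, in particular every round-$(r-1)$ leader block it has received, so $c$'s round-$r$ block votes for $l$'s block whenever $c$ received that block inside its window. It then remains to argue that $l \notin \asleep(r-1)$ in an honest guard's view forces $c$ to have received $l$'s block in time: guard $i$ dropped $l$ from $\asleep(r-1)$ only upon delivering a valid round-$(r-1)$ block of $l$ while its clock was still at round $\le r-1$, that block is re-gossiped and reaches $c$ within $\Delta$, and the timers $\leaderd,\lived,\graced$ must be calibrated against the core $2\Delta$ wait so that this lands inside $c$'s window. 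Should the direct timing bound be too tight, the grace period is the fallback: during $\graced$ the valid round-$r$ block of $c$ (exhibiting its $4f+1$ round-$(r-1)$ parents, so $c$ is manifestly live and protocol-following) is disseminated to all honest guards, none of whom then cast the final vote against $c$. Making this clock/timer relationship precise, essentially a quantitative refinement of Lemma~\ref{lem:liveness1} that also tracks the grace-period rebuttal, is the crux; everything else is routine given synchrony and the block-validity rules. Combined with its safety counterpart, Lemma~\ref{lem:guard-honest-safe}, this guarantees that honest core validators are never excluded or slashed.
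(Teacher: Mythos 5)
Your overall strategy coincides with the paper's: reduce a valid liveness-blameset to the existence of at least one \emph{honest} blamer (majority intersection with honest stake), then rule out each code path of Algorithm~\ref{alg:guard} by which an honest guard could emit an $\mathsf{LBlame}$ against an honest core validator; your branches (a), (b), (c) are exactly the paper's cases (iii), (ii), (i). The problem is that you leave the decisive step unproven and your proposed substitute for it does not work. For the no-vote branch (c), the grace-period ``fallback'' is unsound: exhibiting $c$'s valid round-$r$ block cannot rebut a blame of the form ``$c$ did not vote for a leader the guard considers live,'' because that very block \emph{is} the evidence for the blame---if it lacks a vote for $l$, showing it to the guards confirms rather than refutes the accusation, and the grace-period mechanism (as described in the text) only lets a blamed party demonstrate that it is \emph{live}, not that its non-vote was justified. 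So the ``crux'' you defer---the quantitative timer calibration---is not optional; it is the proof. The paper closes it directly: guard $g$ counts $l$ as live only if $l$'s round-$(r{-}1)$ block reached $g$ within the $\leaderd=2\Delta$ window, and since honest $c$ lags any honest party by at most $\Delta$ and itself waits up to $2\Delta$ for the round-$(r{-}1)$ leaders before forming its round-$r$ block, $c$ must have received $l$'s block in time and therefore voted for it; that vote then reaches $g$ within the $\lived=4\Delta$ window, contradicting the blame. Your sketch gestures at this chain but explicitly declines to establish it, which is a genuine gap rather than a presentational one.

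A secondary imprecision: in branch (b) you argue that honest $c$ ``gossips its block immediately'' so it arrives well before $\lived$ fires, but the correct accounting must include the up-to-$2\Delta$ leader wait that $c$ performs before producing its round-$r$ block; the paper's bound is tight precisely because it sums $\Delta$ (round-entry skew) $+\,2\Delta$ (leader wait) $+\,\Delta$ (transit) $=4\Delta=\lived$. With the case-(c) timing argument actually carried out (and the grace-period detour removed), your write-up would match the paper's proof.
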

\begin{proof}
    \change{Assume that an honest core validator $v$ is correctly included in a liveness-blame for some round $r$, i.e., a majority of guard stake (more than $S/2$) has attested to blaming $v$ for round $r$, satisfying the $\Call{Maj}{\cdot}$ threshold of Algorithm~\ref{alg:guard}. Since corrupted stake is at most $S_f<S/2$, the attesting majority must include at least one honest guard; call it $g$.}
    Then, according to Algorithm~\ref{alg:guard}, either i) $v$ did not vote for a leader that $g$ considers live, or ii) $v$ did not vote on time, or iii) $v$ was a leader and did not send on time.

    For the first case, $g$ considers leaders on time, only if it receives a valid leader block for the previous round within $2\Delta$ after the round has started. Even if the previous round progressed instantly, within $\Delta$, $v$ must have also progressed and received the leader's block.
    If $v$ then voted for the leader, within another $\Delta$, $v$'s block voting for the leader would arrive to $g$, on time (within the $4\Delta$ duration of $\lived^g$).
    As such, $v$ must have seen a leader block on time, and still not voted for it, which contradicts the honest behavior.

    Similarly for the second case, $g$ considers on time any valid vote that arrives within $4\Delta$ after $g$ enters the round. Since $v$ will enter the round by at most a $\Delta$ delay after $g$, if $v$ were to follow the protocol, it would wait by at most another $2\Delta$ for the leader's block and then submit its vote. Since it takes at most an additional $\Delta$ for $v$'s vote to arrive at $g$, $v$ must have entered the round on time, and still not voted for it on time, which contradicts the honest behavior.

    Finally, if $v$ was a leader for a round, it enters the round within at most $\Delta$ after the first honest party. An honest leader proposes a block upon entering a round for which it is the leader. Since $g$ would wait for $2\Delta$ for the leader's block after entering the round, even if $v$ entered the round $\Delta$ time after $g$, it would still be on time to send its block to $g$. Thus, $v$ must have entered the round on time, but not have sent its block on time, which contradicts honest behavior.
\end{proof}

\begin{corollary}\label{cor:no-recover}
    If the number of corruptions is $\le f_c$, then no Recover call occurs.
\end{corollary}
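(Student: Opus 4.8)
The plan is to show that every line in Algorithms~\ref{alg:guard} and~\ref{alg:guard-helper} that invokes \textsc{Recover} is behind a precondition that cannot be satisfied once at most $f$ core validators are corrupted. Inspecting the two algorithms, there are exactly three such origination sites: (i) the \lived/\graced branch of \textsc{OnRound}, which calls \textsc{Recover} only if $\lvert\lblamed(r)\rvert\ge f+1$; (ii) \textsc{OnCoreUpdate}, which calls \textsc{Recover} only if \textsc{CheckEquivocation}$(S)\neq\bot$; and (iii) \textsc{OnRecover}, which calls \textsc{Recover} only if $\mathsf{isValidBlameSet}(S_B,\pi)$ holds (the recursive \textsc{BC} inside \textsc{Recover} itself does not originate a new call). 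I would handle the three sites in turn, reducing all of them to a single fact: under $\le f$ core corruptions, no valid blameset of size $\ge f+1$ exists.

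First I would establish that fact. A \emph{valid blameset} — liveness or safety — is a set of $\ge f+1$ core validators, each accompanied by a proof of misbehavior. By Lemma~\ref{lem:guard-honest-safe} no honest core validator can appear in a valid safety-blameset, and by Lemma~\ref{lem:guard-honest-live} (whose proof uses only the synchrony assumption that \gossip always enjoys) no honest core validator can appear in a valid liveness-blameset. Hence every member of a valid blameset is Byzantine; but with at most $f$ Byzantine core validators there is no set of $f+1$ of them, so no valid blameset exists. This immediately closes site (iii), since $\mathsf{isValidBlameSet}$ then always returns false. It also closes site (i): whenever $\lvert\lblamed(r)\rvert\ge f+1$, the pair $(\lblamed(r),\pi_r)$ accumulated by \textsc{OnLBlame} is a valid liveness-blameset — each $k\in\lblamed(r)$ was inserted only after $\texttt{Maj}(blames(k,r))$, i.e.\ it carries a majority of attestations, which is exactly the validity condition and (by the global honest-stake majority $S=2S_f+1$) forces at least one honest blamer — so Lemma~\ref{lem:guard-honest-live} applies and its members are all Byzantine core validators, a contradiction when corruptions are $\le f$.

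For site (ii) I would argue that \textsc{CheckEquivocation} never returns a non-$\bot$ value. By construction it returns non-$\bot$ only if some freshly decided $b\in S$ equivocates with the guard's local committed state, i.e.\ the guard's view contains two committed leader blocks $B_{(l,r)},B'_{(l,r)}$ for the same slot. But under $\le f$ corruptions \lmysticeti is safe (Theorem~\ref{thm:safety}, Lemma~\ref{lem:safety5}): a guard incorporates only valid, causally-verified, committed \lmysticeti blocks, and running the \lmysticeti decision rule on such a view produces a commit sequence with at most one block per leader slot. Concretely, committing both $B_{(l,r)}$ and $B'_{(l,r)}$ would need $4f+1$ supporters for each in the direct case, or $4f+1$ plus $2f+1$ in the mixed direct/indirect case — at least $6f+2$ votes among $5f+1$ authorities — forcing $\ge f+1$ equivocators; since honest validators never equivocate nor vote for equivocating blocks, this is impossible with $\le f$ corruptions. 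Hence no $b\in S$ equivocates with the local state, \textsc{CheckEquivocation} returns $\bot$, and \textsc{OnCoreUpdate} takes its else-branch. Having ruled out all three origination sites, no \textsc{Recover} is ever invoked.

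I expect the main obstacle to be the bookkeeping for site (ii): one must make sure that a \emph{single} honest guard's local commit sequence — rather than the cross-authority agreement phrased in Lemma~\ref{lem:safety5} — is itself free of leader-slot equivocation, and that equivocating \emph{non-leader} blocks are irrelevant here because guards discard equivocated blocks on receipt and \textsc{CheckEquivocation} only ever inspects decided leader blocks. The remaining cases are essentially immediate once the auxiliary ``no valid blameset'' fact is in hand.
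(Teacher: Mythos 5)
Your proposal is correct and its backbone is the same as the paper's: the paper's entire proof is the one-line observation that Lemmas~\ref{lem:guard-honest-safe} and~\ref{lem:guard-honest-live} leave only Byzantine core validators as candidates for any blameset, and with at most $f$ corruptions no set of $f+1$ provably misbehaving parties can ever be assembled, so no recovery is triggered. What you add is a site-by-site audit of Algorithm~\ref{alg:guard}, which is a useful refinement rather than a different decomposition; sites (i) and (iii) reduce exactly to the paper's argument. The one place where you genuinely diverge is site (ii): the paper covers the equivocation path with the same ``no valid blameset can be constructed'' reasoning (an equivocation-derived blameset is the overlap of supporters of two conflicting blocks, and since honest validators never support both, that overlap has size at most $f$), whereas you instead argue that \textsc{CheckEquivocation} never returns a non-$\bot$ value at all, via Theorem~\ref{thm:safety}. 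Be aware that this stronger claim depends on what \textsc{LocState()} is taken to contain: a single Byzantine leader can equivocate even with one corruption, and if a guard happens to have stored the copy that the core does \emph{not} commit, the literal equivocation check could fire even though \lmysticeti is perfectly safe --- exactly the bookkeeping worry you flag yourself. The paper's blameset-size argument is robust to that ambiguity (a detected conflict still cannot yield $f+1$ provable equivocators), so if you keep your structure, it is safer to fall back on that reasoning for site (ii) rather than on ``the check never fires.''
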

\begin{proof}
    \change{A valid blameset requires $\ge f_c+1$ provably misbehaving validators. With at most $f_c$ corruptions no honest validator appears in a safety blameset (Lemma~\ref{lem:guard-honest-safe}) or a liveness blameset (Lemma~\ref{lem:guard-honest-live}), so at most $f_c$ validators can be blamed. Hence no valid blameset of size $\ge f_c+1$ can be constructed, and no call to \textsc{Recover} occurs.}
\end{proof}

\begin{proof}[Proof of \Cref{lem:guard-liveness}]
    First, parties \emph{enter} round $r$ of \lmysticeti, once they receive $4f_c+1$ blocks (from distinct senders) from round $r-1$.
    Every guard will also enter rounds according to this logic.
    From synchrony and broadcast, it is guaranteed that once a guard enters round $r$, every participant will enter round $r$ at most within $\Delta$ as well (similar to \Cref{lem:liveness1}).
    Each guard sets a timer $\lived = 4\Delta$, by which it expects to receive blocks from at least $4f_c+1$ core validators; this is because in the worst-case, it would take at most an additional $\Delta$ for the slowest validator to enter the round, it would wait for the leaders of the past round for $2\Delta$ and would take an additional $\Delta$ for its block to return to the guard.
    Each guard expects to have received blocks from the leaders of the past round already before the current round starts, or within $2\Delta$ of the current round start; this is because in the worst-case, it would take an additional $\Delta$ for the slowest leader to enter the past round, and it would take another $\Delta$ for its block to arrive at the guard.

    Say $g^*$ is the first honest guard that enters round $r$, and let $\asleep_{g^*}(r)$ denote the set once $\lived^{g^*}=4\Delta$ expires.
    Within another $\Delta$ after $\lived^{g^*}$ ($5\Delta$ total), every other guard's (say $g$) timer will also expire and they will be blaming each of the asleep core validators $v\in\asleep_{g}(r)$.
    If $\lvert\asleep_{g}(r)\rvert\le f_c$ for any $g\in H$, with $H$ denoting the set of all honest guards, then it is guaranteed that all guards can progress the round after another $\Delta$ ($6\Delta$ total). \change{Specifically: since $|\asleep_g(r)|\le f_c$, guard $g$ received valid round-$r$ blocks from $\ge 4f_c+1$ validators by the time $\lived^g$ expired (at most $5\Delta$ after $g^*$ entered $r$). By synchrony, all other honest guards receive these same blocks within $\Delta$, giving each the $4f_c+1$-block quorum needed to advance to round $r+1$ by $6\Delta$.}
    The same holds if $\lvert\cap_{g\in H} \asleep_g(r)\rvert \le f_c$, since locally, every guard by that time will have blocks from all $v\notin\cap_{g\in H} \asleep_g(r)$, which are sufficient to progress the \lmysticeti round.
    Otherwise, by $6\Delta$ total, all guards will have at least $\lvert H\rvert = S_f+1$ blames\change{\footnote{Here $\lvert H\rvert$ denotes the honest \emph{stake}, which equals $S_f+1$ since $S=2S_f+1$; under the uniform-stake convention of \Cref{sec:gossip} (each validator holds $S/n$ stake) this is a strict majority of the guard stake, i.e., the honest-guard count $n-f=f+1$ when $n=2f+1$.}} for all $v\in\cap_{g\in H} \asleep_g(r)$, where $\lvert\cap_{g\in H} \asleep_g(r)\rvert \ge f_c+1$. \change{Since $S_f < S/2$, attesting stake of $S_f+1$ constitutes a strict majority of total guard stake, satisfying the $\Call{Maj}{\cdot}$ condition in \textsc{OnLBlame}; each such $v$ is therefore added to $\lblamed(r)$.}
\end{proof}

\begin{proof}[Proof of \Cref{thm:recovery}]
    We use a black-box synchronous BA primitive over the guard set that under the $n=2f+1$ honest-majority assumption provides \emph{agreement} (all honest guards output the same value), \emph{validity} (if all honest guards input the same valid blameset, that value is output), and \emph{termination} (within $\Delta_{BA}$).

    \noindent\emph{Part (i).} By \Cref{thm:guard-accountable-safety} (resp.\ \Cref{thm:guard-accountable-liveness}), every honest guard holds at least one valid blameset of size $\ge f_c+1$ within the stated time bound. \change{Each guard $i$ executes \textsc{Recover}, which broadcasts $S_i$ via $\Call{BC}{i, S_i, \cdot}$ and then collects $\mathsf{RCVec}_i = \langle S_k\rangle_{k\in[n]}$: entry $S_k$ is the unique blameset received from guard $k$'s BC instance within $\Delta$, or $\bot$ if no such message arrived. By the Agreement property of synchronous BC, all honest guards receive the same value from each guard $k$'s broadcast, so every honest guard assembles the same vector $\mathsf{RCVec}$. For each $k\in[n]$, all honest guards run the same BA instance $\mathsf{BA}(k, S_k)$ on identical input; by BA Agreement, all honest guards output the same $SR_k$. Since every honest guard applies the deterministic $\Call{Order}{\{SR_k\}_{k\in[n]}}$ to the same set of outputs, they obtain the same ordering $\mathbf{O}$ and select the same first valid entry as $\mathcal{B}$. At least one guard $k^*$ holds a valid blameset ($S_{k^*}\ne\bot$), so at least one $SR_{k^*}$ is valid in $\mathbf{O}$, ensuring $\mathcal{B}$ is well-defined.} By Lemmas~\ref{lem:guard-honest-safe} and~\ref{lem:guard-honest-live}, no honest validator belongs to any valid blameset, so $\mathcal{B}$ contains only faulty validators.

    \noindent\emph{Part (ii).} Since the up-to-$3f_c$ corruptions include the $\ge f_c+1$ removed validators, at most $3f_c-(f_c+1)=2f_c-1$ corruptions remain among the $4f_c$ survivors, leaving $\ge 2f_c+1$ honest (majority).

    \noindent\emph{Part (iii).} By the fork-choice rule and \Cref{thm:resilient-safety}, recovery adopts the unique CheckpointQC chain (\Cref{lem:unique-checkpoint}); since every FinalityQC-finalized height is backed by a CheckpointQC whose $4f_c+1$ signatories include $\ge f_c+1$ honest validators (with at most $3f_c$ corrupt), and whose existence is unique by \Cref{lem:unique-checkpoint}, it lies on the adopted chain and is preserved.
\end{proof}

\subsection{Recovery Protocol Details}\label{sec:guard-recovery-details}

Once all honest guards agree on blameset $\mathcal{B}$ (\Cref{thm:guard-accountable-safety,thm:guard-accountable-liveness}), the three recovery steps proceed as follows.

\emph{Step R1 (Fork-choice rule).}
In the case of a safety violation, the canonical chain is determined by the unique CheckpointQC of the highest height for which a FinalityQC exists (\Cref{thm:resilient-safety}).
All core validators not in $\mathcal{B}$ adopt this checkpoint as the new starting state.
If no FinalityQC exists yet (the fork happened before any checkpoint was finalized), validators adopt the most recent CheckpointQC whose underlying committed leader slot is agreed upon by at least $2f_c+1$ honest validators; this is well-defined since \Cref{lem:unique-checkpoint} guarantees at most one CheckpointQC per height.
In the case of a pure liveness violation (no safety fork), the chain continues from the last finalized checkpoint without rollback.

\emph{Step R2 (Reconfiguration).}
Guards broadcast the agreed blameset via Byzantine Broadcast to all validators.
Every core validator not in $\mathcal{B}$ records the new validator set $\mathcal{V}' = \mathcal{V} \setminus \mathcal{B}$, where $|\mathcal{V}'| \le 4f_c$.
Because the blameset contains only provably misbehaving validators (Lemmas~\ref{lem:guard-honest-safe},~\ref{lem:guard-honest-live}) and removes at least $f_c+1$ of the up-to-$3f_c$ corrupt validators, the remaining set has at most $2f_c-1$ malicious and at least $2f_c+1$ honest validators out of $4f_c$.
The reconfigured core thus enjoys a strict honest majority ($2f_c+1 > 4f_c/2$), allowing a synchronous honest-majority ($n'=2f'+1$) BFT protocol, such as Sync HotStuff~\cite{abraham2020synchotstuff}, which tolerates $f' < n'/2$ faults, to safely resume consensus.

\emph{Step R3 (Client guarantees post-recovery).}
Checkpoint-path clients are unaffected by the fork: by \Cref{thm:resilient-safety}, the CheckpointQC chain is unique and all state finalized by a FinalityQC is preserved.
Fast-path clients may need to roll back transactions that were committed after the last finalized checkpoint but before the fork was resolved; however, the number of such transactions is bounded by the volume of transactions processed between two synchronous timestamping events, consistent with the economic security model of CoBRA~\cite{cobra}.

\para{Comparison with prior recovery work}
Our recovery mechanism is conceptually related to synchronous recovery gadgets studied in~\cite{civit2025recover, gong2025recovery} and the accountability-based recovery of~\cite{sheng2023bft}. Unlike~\cite{civit2025recover}, which assumes a fixed known corruption bound throughout recovery, \gossip's blameset construction actively reduces the effective corruption count, allowing subsequent consensus to run under a stronger honest-majority assumption. Compared to~\cite{gong2025recovery}, our approach ties recovery directly to the FinalityQC chain, avoiding ambiguity in the fork-choice rule under concurrent recoveries. We consider a deeper formal comparison with these works an important direction for future work.

\section{Implementation and Evaluation Details}

\subsection{Implementation} \label{sec:implementation}

We implement a networked, multi-core \lmysticeti validator in Rust by forking the Mysticeti codebase~\cite{mysticeti-code}. Our implementation leverages \texttt{tokio}~\cite{tokio} for asynchronous networking, utilizing raw TCP sockets for communication without relying on any RPC frameworks. For cryptographic operations, we rely on \texttt{ed25519-consensus}~\cite{ed25519-consensus} for asymmetric cryptography and \texttt{blake2}~\cite{rustcrypto-hashes} for cryptographic hashing. To ensure data persistence and crash recovery, we employed the Write-Ahead Log (WAL). Our WAL optimizes I/O operations through vectored writes~\cite{writev} and efficient memory-mapped file usage with the \texttt{minibytes}~\cite{minibytes} crate, minimizing data copying and serialization.

In addition to regular unit tests, we inherited and utilized two supplementary testing utilities from the Mysticeti codebase. First, a simulation layer replicates the functionality of the \texttt{tokio} runtime and TCP networking. This simulated network accurately simulates real-world WAN latencies, while the \texttt{tokio} runtime simulator employs a discrete event simulation approach to mimic the passage of time. Second, a command-line utility (called \emph{orchestrator}) which deploys real-world clusters of \lmysticeti on machines distributed across the globe.

We are open-sourcing our \lmysticeti implementation, along with its simulator and orchestration tools, to ensure reproducibility of our results\footnote{\codelink}.

\subsection{Experimental Setup} \label{sec:experimental-setup}
This section complements \Cref{sec:evaluation} by detailing the experimental setup used to evaluate \lmysticeti.

We deploy \lmysticeti on AWS, using \texttt{m5d.8xlarge} instances across $13$ different AWS regions: Northern Virginia (us-east-1), Oregon (us-west-2), Canada (ca-central-1), Frankfurt (eu-central-1), Ireland (eu-west-1), London (eu-west-2), Paris (eu-west-3), Stockholm (eu-north-1), Mumbai (ap-south-1), Singapore (ap-southeast-1), Sydney (ap-southeast-2), Tokyo (ap-northeast-1), and Seoul (ap-northeast-2). Validators are distributed across those regions as equally as possible.
Each machine provides $10$\,Gbps of bandwidth, $32$ virtual CPUs (16 physical cores) on a $3.1$\,GHz Intel Xeon Skylake 8175M, $128$\,GB memory, and runs Linux Ubuntu server $24.04$.
We select these machines because they provide decent performance, are in the price range of ``commodity servers'', and match the minimal specifications of modern quorum-based blockchains~\cite{sui-min-specs}.

In \Cref{sec:evaluation}, \emph{latency} refers to the time elapsed from the moment a client submits a transaction to when it is committed by the validators, and \emph{throughput} refers to the number of transactions committed per second.
We instantiate several geo-distributed benchmark clients within each validator submitting transactions in an open loop model, at a fixed rate. We experimentally increase the load of transactions sent to the systems, and record the throughput and latency of commits. As a result, all plots \Cref{sec:evaluation} illustrate the steady-state latency of all systems under low load, as well as the maximal throughput they can provide after which latency grows quickly. Transactions in the benchmarks are arbitrary and contain $512$ bytes. We configure both \lmysticeti and Mysticeti with $2$ leaders per round, and a leader timeout of 1 second.
\section{Asynchronous \lmysticeti} \label{sec:async}

We present \asynclmysticeti, the asynchronous variant of \lmysticeti. It builds upon the same ideas as \lmysticeti, but operates in a fully asynchronous network model and leverages a threshold common coin to achieve liveness. \asynclmysticeti is the first $5f+1$-validator BFT consensus protocol to achieve both safety and liveness in a fully asynchronous network. It achieves low latency through its shortened commit path. \asynclmysticeti algorithm is \Cref{alg:main}, \Cref{alg:decider}, and \Cref{alg:helper} when we also include the orange-colored lines.

\subsection{Additional Assumptions} \label{sec:model}

First, the communication network is asynchronous and messages can be delayed arbitrarily, but messages among honest validators are eventually delivered. 

Additionally, we employ a global perfect coin to introduce randomization, similar to previous work~\cite{blum2020asynchronous,cachin2000random,dag-rider,loss2018combining}. This coin can be constructed using an adaptively secure threshold signature scheme~\cite{bacho2022on,boneh2001short}, with the distributed key setup performed under asynchronous conditions~\cite{abraham2023bingo,abraham2023reaching}.

As with \lmysticeti (proven in \Cref{sec:proofs} of the appendix), \asynclmysticeti implements BAB~\cite{cristian1995}.

All notation in the proofs below refers to the core layer; for clarity, we drop the subscript and write $f$ and $n$ in place of $f_c$ and $n_c$.




\subsection{Safety Lemmas}
We start by proving the Total Order and Integrity properties of BAB. A crucial intermediate result towards these properties is that all honest validators have consistent commit sequences, i.e., the committed sequence of one honest validator is a prefix of another's, or vice-versa. This is shown in~\Cref{lem:agree-commit} and~\Cref{lem:consistent}, which the following lemmas and observations build up to.

\begin{definition}[Weak Certificate]\label{def:weak-cert}
    A \emph{weak certificate} for a block $b$ in round $r$ is a set of $2f+1$ distinct \emph{valid} round-$(r+2)$ blocks, each from a \emph{distinct validator}, that support $b$.
\end{definition}

\begin{definition}[Strong Certificate]\label{def:strong-cert}
    A \emph{strong certificate} for a block $b$ in round $r$ is a set of $4f+1$ distinct \emph{valid} round-$(r+2)$ blocks, each from a \emph{distinct validator}, that support $b$.
\end{definition}

\begin{lemma}[Strong-to-Weak Propagation] \label{lma1_cert_path}
    Let $b$ be a round-$r$ block with a \emph{strong certificate} $S$ (i.e., $|S|=4f+1$ distinct round-$(r+2)$ blocks supporting $b$, \Cref{def:strong-cert}). Then every valid block at any future round $r'\ge r+3$ has paths to at least $2f+1$ blocks from $S$; equivalently, it indirectly references a \emph{weak certificate} for $b$ (\Cref{def:weak-cert}).
\end{lemma}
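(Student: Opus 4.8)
The plan is to mirror the structure of Lemma~\ref{lem:safety3}, adapting it to the wave length of $3$ used by \asynclmysticeti, so that the decision round of a wave proposing at round $r$ is $r+2$ rather than $r+1$, and hence a strong certificate lives in round $r+2$. I would prove the statement by induction on the round $r' \geq r+3$: the base case $r' = r+3$ is handled by a quorum-intersection argument against the $4f+1$ parents of a round-$(r+3)$ block, and the inductive step follows by transitivity of reachability in the DAG.

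For the base case, recall that every valid round-$(r+3)$ block $b^*$ references at least $4f+1$ distinct valid blocks of round $r+2$ from $4f+1$ distinct authors, and that the strong certificate $S$ consists of $4f+1$ round-$(r+2)$ blocks, also from $4f+1$ distinct authors, each supporting $b$ (\Cref{def:strong-cert}). Since there are only $5f+1$ validators, the authors of $b^*$'s round-$(r+2)$ parents and the authors of the blocks in $S$ overlap in at least $(4f+1)+(4f+1)-(5f+1)=3f+1$ validators, of which at least $2f+1$ are honest. An honest validator produces exactly one block in round $r+2$, so for each such honest validator $v$ its unique round-$(r+2)$ block is simultaneously the block of $S$ authored by $v$ and the parent of $b^*$ authored by $v$; hence $b^*$ has a direct edge to at least $2f+1$ blocks of $S$. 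Being a subset of $S$, these $2f+1$ blocks are distinct valid round-$(r+2)$ blocks from distinct validators that support $b$, i.e., a weak certificate for $b$ (\Cref{def:weak-cert}).

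For the inductive step, assume the claim for round $r'-1 \geq r+3$ and let $b^*$ be any valid round-$r'$ block. By validity $b^*$ references at least one valid round-$(r'-1)$ block $b'$; by the induction hypothesis $b'$ reaches at least $2f+1$ blocks of $S$, and composing the edge $b^* \to b'$ with those paths shows $b^*$ reaches at least $2f+1$ blocks of $S$, which again form a weak certificate for $b$. This closes the induction.

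The main obstacle is the base case: one must be careful that equivocation in round $r+2$ does not break the counting, i.e., that the $3f+1$ overlap in \emph{authors} really yields $2f+1$ distinct \emph{blocks} that both lie in $S$ and are parents of $b^*$. This is precisely where the uniqueness of honest validators' round-$(r+2)$ blocks is invoked, analogously to the ``an honest validator equivocated, a contradiction'' step of Lemma~\ref{lem:safety3}. The inductive step, and the translation between ``paths to $2f+1$ blocks of $S$'' and ``indirectly references a weak certificate for $b$'', are routine once one tracks membership in the fixed set $S$ rather than re-deriving support round by round.
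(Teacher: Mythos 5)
Your proof is correct and follows essentially the same route as the paper's: induction on $r'$, a quorum-intersection argument in the base case $r'=r+3$, and transitivity of DAG paths in the inductive step. The only difference is bookkeeping—you intersect the two $4f{+}1$-sized author sets over all $5f{+}1$ validators and then discard Byzantine identities, while the paper restricts to honest identities first; both yield the same $2f{+}1$ honest common blocks, and your explicit appeal to honest non-equivocation makes precise a step the paper leaves implicit.
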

\iflong
    \begin{proof}
        We prove the claim by induction on $r'$, starting at $r'=r+3$.

        Base case ($r'=r+3$): Let $x$ be a valid round-$(r+3)$ block. By construction, $x$ references $4f+1$ distinct valid round-$(r+2)$ blocks from distinct validators; thus at most $f$ of these can be Byzantine, so $x$ includes at least $3f+1$ round-$(r+2)$ blocks from honest validators. Likewise, the strong certificate $S$ contains at least $3f+1$ blocks from honest validators (since at most $f$ of its $4f+1$ members can be Byzantine). Restricting attention to honest validator identities (a universe of size $4f+1$), these two sets intersect in at least $(3f+1)+(3f+1)-(4f+1)=2f+1$ blocks. Hence $x$ directly references at least $2f+1$ members of $S$, so $x$ has paths to a weak certificate for $b$.

        Induction step: Assume that every valid round-$r'$ block, for some $r'\ge r+3$, has paths to at least $2f+1$ members of $S$. Consider any valid round-$(r'+1)$ block $y$. By construction, $y$ references $4f+1$ distinct round-$r'$ blocks. By the induction hypothesis, each of those round-$r'$ blocks has paths to at least $2f+1$ members of $S$. Therefore, $y$ references at least one round-$r'$ block that has paths to at least $2f+1$ members of $S$; by transitivity of paths, $y$ itself has paths to at least $2f+1$ members of $S$, i.e., to a weak certificate for $b$.
    \end{proof}
\fi
\begin{observation}\label{obs:no-equiv-vote}
    A block cannot support for more than one block proposal from a given validator, in a given round.
\end{observation}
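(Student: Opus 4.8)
The plan is to reduce the observation to the definition of the vote relation given by \textsc{IsVote} and its subroutine \textsc{VotedBlock} in Algorithm~\ref{alg:helper}, exploiting that \textsc{VotedBlock} is a deterministic procedure returning at most one block. Recall that ``block $b$ supports leader proposal $b_{leader}$'' means exactly $\textsc{IsVote}(b, b_{leader})$, which by definition holds iff $\textsc{VotedBlock}(b,\, b_{leader}.author,\, b_{leader}.round) = b_{leader}$. Every decision-rule predicate that counts support (\textsc{StronglyCertifiedLeader}, \textsc{SkippedLeader}, \textsc{WeaklyCertifiedLeader}) routes through this same $\textsc{IsVote}$ check, so it suffices to reason about $\textsc{VotedBlock}$.

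I would then argue by contradiction. Fix a block $b$, a validator $v$, and a round $r$, and suppose $b$ supports two distinct round-$r$ proposals $P \neq P'$ both authored by $v$. By the definition above this forces $\textsc{VotedBlock}(b, v, r) = P$ and $\textsc{VotedBlock}(b, v, r) = P'$. But $\textsc{VotedBlock}$, with its arguments fixed, performs a deterministic depth-first search over $b$'s causal history and returns a single value ($\perp$, or the first block it encounters whose author/round pair equals $(v,r)$); hence it is a function, so $P = P'$, a contradiction. The point to state explicitly is that even a Byzantine author of $b$ may embed both $P$ and $P'$ (or paths to both) into $b$'s causal history, yet the canonical DFS still resolves $b$'s vote to exactly one of them, so $b$ contributes to the support tally of at most one of $v$'s round-$r$ proposals.

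There is essentially no obstacle here: the observation is a well-definedness statement about the vote function. The one genuine subtlety worth pinning down is that ``support'' is always the $\textsc{IsVote}$/$\textsc{VotedBlock}$ notion rather than a weaker ``there exists a path to the proposal'' notion — indeed, under mere-reachability semantics a Byzantine $b$ referencing both $P$ and $P'$ would support both, which is precisely the pathology the canonical-DFS definition rules out, and which would otherwise corrupt the quorum-intersection counts in the certificate lemmas that rely on this observation.
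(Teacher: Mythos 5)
Your proposal is correct and takes essentially the same route as the paper: the paper's proof likewise observes that support is defined by the deterministic depth-first traversal, so even if a (possibly Byzantine) block has paths to several equivocating leader blocks of the same validator and round, it is interpreted as voting for exactly one of them. Your version merely makes explicit the reduction to \textsc{IsVote}/\textsc{VotedBlock} being a single-valued function, which is a slightly more formal phrasing of the same argument.
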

\iflong
    \begin{proof}
        This is by construction. Honest validators interpret support in the DAG through deterministic depth-first traversal. So even if a block $b$ in the vote round has paths to multiple leader round blocks from the same validator $v$ (i.e., equivocating blocks), all honest validators will interpret $b$ to vote for only one of $v$'s blocks (the first block to appear in the depth-first traversal starting from $b$).
    \end{proof}
\fi

\begin{lemma}[Strong certificate exclusivity] \label{lem:one-cert-per-validator}
    Fix a validator $v$ and round $r$. If some round-$r$ block $b$ of $v$ has a strong certificate, then no other round-$r$ block $b'\ne b$ of $v$ can have even a weak certificate. In particular, at most one round-$r$ block of $v$ can have a strong certificate.
\end{lemma}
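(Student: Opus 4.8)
The plan is a short quorum-intersection argument powered by Observation~\ref{obs:no-equiv-vote}. Suppose, toward a contradiction, that the round-$r$ block $b$ of $v$ has a strong certificate $S$ (so $|S| = 4f+1$, \Cref{def:strong-cert}) and that some other round-$r$ block $b' \ne b$ of the \emph{same} validator $v$ has a weak certificate $W$ (so $|W| = 2f+1$, \Cref{def:weak-cert}). Both $S$ and $W$ consist of valid round-$(r+2)$ blocks, each from a distinct validator; write $V_S$ and $V_W$ for the corresponding validator sets, with $|V_S| = 4f+1$ and $|V_W| = 2f+1$.

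First I would intersect these validator sets. Since there are only $n = 5f+1$ validators in total, $|V_S \cap V_W| \ge (4f+1) + (2f+1) - (5f+1) = f+1$. Because at most $f$ validators are Byzantine, at least one validator $u \in V_S \cap V_W$ is honest. The key second step is that an honest validator emits exactly one block in round $r+2$; call it $x_u$. Then $x_u$ must be the round-$(r+2)$ block that $u$ contributes to $S$ \emph{and} the one it contributes to $W$, so $x_u \in S \cap W$. By the certificate definitions this means $x_u$ supports $b$ and $x_u$ supports $b'$. But $b$ and $b'$ are two distinct round-$r$ proposals of the single validator $v$, so $x_u$ supports two distinct proposals of $v$ in the same round, contradicting Observation~\ref{obs:no-equiv-vote}. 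Hence no round-$r$ block $b' \ne b$ of $v$ can have even a weak certificate. The ``in particular'' clause is then immediate: a strong certificate restricted to any $2f+1$ of its blocks is a weak certificate, so if two distinct round-$r$ blocks of $v$ both had strong certificates the previous conclusion would already be violated.

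The only subtlety, rather than an obstacle, is to keep ``support'' pinned to the deterministic depth-first semantics of \Call{IsVote}{\cdot} so that Observation~\ref{obs:no-equiv-vote} applies verbatim to $x_u$: we need $x_u$ to be a valid block (it is, being honestly produced) and we need membership of $x_u$ in $S$ (resp.\ $W$) to be exactly the statement ``$x_u$ supports $b$'' (resp.\ ``$x_u$ supports $b'$''), which is precisely how \Cref{def:strong-cert,def:weak-cert} are phrased. Everything else is the one-line counting bound above, and this lemma then feeds directly into the commit-sequence consistency lemmas (\Cref{lem:agree-commit,lem:consistent}).
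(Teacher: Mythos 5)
Your proof is correct and follows essentially the same route as the paper's: quorum intersection of the $4f+1$ and $2f+1$ certificate identity sets over $n=5f+1$ validators yields an honest common voter, whose unique round-$(r+2)$ block would have to support both $b$ and $b'$, contradicting Observation~\ref{obs:no-equiv-vote}. The handling of the ``in particular'' clause likewise matches the paper's treatment of the strong-certificate special case.
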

\iflong
    \begin{proof}
        Suppose, for contradiction, that $b$ has a strong certificate $S$ (so $|S|=4f+1$ decision-round blocks at round $r+2$ support $b$), and some other block $b'\ne b$ from the same validator and round has a weak certificate $W$ (so $|W|=2f+1$ vote-round blocks at round $r+2$ support $b'$). Both $S$ and $W$ are over the universe of $5f+1$ validator identities and contain one valid block per identity.

        Let $\mathrm{Id}(S)$ and $\mathrm{Id}(W)$ be the sets of validator identities appearing in the certificates $S$ and $W$ (one valid round-$(r+2)$ block per identity). Then $|\mathrm{Id}(S)|=4f+1$ and $|\mathrm{Id}(W)|=2f+1$. By quorum intersection over identities,
        $$
            |\mathrm{Id}(S) \cap \mathrm{Id}(W)| \ge (4f+1) + (2f+1) - (5f+1) = f+1 > 0.
        $$
        Since at most $f$ validators are Byzantine overall, this intersection contains at least one honest identity. Pick such an honest validator $u\in \mathrm{Id}(S)\cap \mathrm{Id}(W)$, and let $x$ be $u$'s unique valid decision-round (round $r+2$) block (by the uniqueness-per-round assumption). Because certificates record one valid block per identity, the same block $x$ is the representative of $u$ in both $S$ and $W$, hence $x$ supports both $b$ and $b'$. This contradicts \Cref{obs:no-equiv-vote}. Therefore, $b'$ cannot have a weak certificate. Taking $W$ to be a strong certificate $S'$ gives the special case that two distinct blocks of $v$ in the same round cannot both have strong certificates.
    \end{proof}
\fi

\begin{observation}\label{obs:exists-certificate}
    If an honest validator $v$ directly or indirectly commits a block $b$, then $v$'s local DAG contains a weak certificate for $b$.
\end{observation}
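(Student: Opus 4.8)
The plan is a direct case analysis on the rule by which $v$ committed $b$, unpacking the definitions of the direct and indirect decision rules (\Cref{alg:decider}) against those of weak and strong certificates (\Cref{def:weak-cert,def:strong-cert}). Throughout, I rely on two structural facts about honest DAGs: every validator runs the decision procedure on \emph{its own} local DAG, and an honest validator stores a block only after downloading and verifying its entire causal history, so any block referenced—directly or via paths—by a block in $v$'s DAG is itself in $v$'s DAG. I also use the uniqueness-per-round assumption already in force in this section, which ensures the decision-round blocks counted by the rules come from distinct validators.

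\textbf{Direct commit.} Suppose $v$ directly commits $b$, say a round-$r$ block. Then $\Call{TryDirectDecide}{\cdot}$ returned \texttt{Commit}$(b)$, which occurs only when $\Call{StronglyCertifiedLeader}{w,b}$ holds, i.e., at least $4f+1$ blocks of $DAG[r+2]$ are votes for $b$. By uniqueness-per-round these are from $4f+1$ distinct validators, so they constitute a strong certificate for $b$ contained in $v$'s DAG; discarding any $2f$ of them leaves a weak certificate (\Cref{def:weak-cert}), which is therefore also in $v$'s DAG.

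\textbf{Indirect commit.} Suppose $v$ indirectly commits $b$. Then $\Call{TryIndirectDecide}{w,S}$ returned \texttt{Commit}$(b)$, which requires an anchor $s_{anchor}=\texttt{Commit}(b_{anchor})$—so $b_{anchor}$ was already committed and hence lies, together with its causal history, in $v$'s DAG—and that $\Call{WeaklyCertifiedLeader}{b_{anchor},b}$ holds. The latter means $v$'s DAG contains at least $2f+1$ blocks of $DAG[r+2]$ that are votes for $b$ and each of which links to $b_{anchor}$. Dropping the path condition, these $\ge 2f+1$ supporting round-$(r+2)$ blocks—from distinct validators by uniqueness-per-round—form a weak certificate for $b$, all of whose members are stored in $v$'s DAG.

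In either case $v$'s local DAG contains a weak certificate for $b$, which is the claim. There is no real obstacle here; the only point needing care is the bookkeeping that the decision rules are evaluated over $v$'s own DAG and that the supporting blocks they tally are from distinct validators, both of which follow from the way honest validators build DAGs (one proposal per round, full causal-history download) together with the uniqueness-per-round assumption.
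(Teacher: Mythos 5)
Your proposal is correct and matches the paper's argument, which simply states that the observation follows immediately from the direct and indirect commit rules; you have just spelled out that unpacking (a strong certificate contains a weak one as a subset, and the indirect rule's \textsc{WeaklyCertifiedLeader} check directly exhibits a weak certificate). No substantive difference in approach.
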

\iflong
    \begin{proof}
        This follows immediately from our direct and indirect commit rules.
    \end{proof}
\fi

\begin{observation}\label{obs:agree-order}
    Honest validators agree on the sequence of leader slots.
\end{observation}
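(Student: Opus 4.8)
The plan is to reduce the claim, in each variant of the protocol, to the determinism of how the leader of a given slot is computed, where a leader \emph{slot} is a pair $(\text{wave }w,\text{ rank }l)$ with $0 \le l \le \texttt{leadersPerRound}-1$. In the partially synchronous \lmysticeti, the leader of such a slot is $\texttt{validators}[(r_{propose} + l) \bmod |\texttt{validators}|]$, where $r_{propose} = w \cdot \texttt{waveLength} + \texttt{waveOffset}$ depends only on the (public, fixed) Decider offsets, the wave number $w$, the rank $l$, and the (public, fixed) ordered set $\texttt{validators}$. Since honest validators instantiate the same Decider instances and all these quantities are identical on every honest node, every honest validator computes the same leader for every slot, and iterating over all waves and ranks gives agreement on the whole sequence. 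I would dispatch this case in one or two sentences.

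The substantive case is \asynclmysticeti, where $\Call{GetLeaderBlocks}{w, l}$ (\Cref{alg:helper}) first reconstructs the coin value $s = \Call{CombineCoinShares}{\{\, b.\mathit{share} : b \in DAG[r_{decision}]\,\}}$ from the shares carried by the round-$r_{decision}$ blocks, and then returns the slot leader as $\texttt{validators}[(s + l) \bmod |\texttt{validators}|]$. The key ingredient is the \emph{robustness} of the threshold common coin: its per-round output is unique once reconstructed, i.e.\ any subset of valid shares meeting the reconstruction threshold reconstructs to the same value. Since \asynclmysticeti's block-validity rule requires every valid block to carry a valid coin share, an honest validator that has collected enough round-$r_{decision}$ blocks to even attempt to decide wave $w$ necessarily holds at least the threshold number of valid shares; hence two honest validators feed possibly different, but mutually consistent, share sets into $\Call{CombineCoinShares}{\cdot}$ and obtain the \emph{same} $s$, and therefore the same leader. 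Agreement on the full slot sequence again follows by iterating over waves and ranks.

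The main obstacle is ruling out that equivocation in the decision round perturbs the coin. I would argue that a Byzantine validator can contribute at most one valid coin share for a given round: share validity is publicly checkable and, for the standard (deterministic) instantiation of the threshold signature scheme cited in the model, the share is a unique function of the pair $(\text{validator},\text{round})$; thus equivocating decision-round blocks from the same author either all carry that single valid share or carry an invalid share that honest validators discard. Consequently the set of valid round-$r_{decision}$ shares that any honest validator can ever see is a subset of one well-defined set, and uniqueness of reconstruction on every threshold-sized subset closes the gap. I would also note the degenerate reading that makes the statement almost vacuous: it constrains only slots that two honest validators have \emph{both} resolved, so it requires no synchrony or progress assumption — a validator that has not yet reconstructed the coin for $r_{decision}$ has simply not determined that slot's leader and owes no consistency obligation.
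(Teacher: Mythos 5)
Your proof is correct and takes essentially the same approach as the paper: the paper's own proof is a one-line appeal to the properties of the assumed global perfect coin (with the partially synchronous case being trivially deterministic), which is exactly the coin-uniqueness/robustness argument you spell out. Your additional care about equivocated coin shares and threshold-subset reconstruction is a sound unpacking of what the paper's ``perfect coin'' abstraction already assumes, not a different route.
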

\iflong

    \begin{proof}
        This follows immediately from the properties of the common coin, see~\Cref{sec:model}.
    \end{proof}
\fi

\begin{observation}[Unique valid block per (validator, round)]\label{obs:unique-valid}
    For any validator $w$ and round $r$, at most one block signed by $w$ is counted as valid in round $r$. Since commit rules apply only to valid blocks, any commitment for a slot with leader $w$ in round $r$ must be that unique valid block.
\end{observation}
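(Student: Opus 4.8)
The plan is to treat this as a convention-fixing observation: its content is that \asynclmysticeti's notion of a ``valid round-$r$ block of $w$'' is single-valued, and the second sentence is then a one-line consequence of the fact that the commit rules never inspect anything else. So the real work is bookkeeping rather than a new argument.

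First I would dispatch the honest case: by the protocol specification of \Cref{sec:dag}, in each round every honest validator proposes exactly one block, so for an honest $w$ there is at most one block signed by $w$ that can ever be counted as valid in round $r$. For a Byzantine $w$ I would invoke the block-handling discipline the commit and certificate logic relies on --- the same ``uniqueness-per-round'' accounting already used in the proof of \Cref{lem:one-cert-per-validator} --- under which an honest participant that observes two blocks signed by $w$ for round $r$ does not count either of them as a valid round-$r$ block of $w$ (mirroring the equivocation handling of \gossip in \Cref{sec:gossip}, where equivocated proposals are simply discarded). Hence, in all cases, at most one block signed by $w$ is counted as valid in round $r$. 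The second sentence then follows immediately: the only point at which the commit logic touches $w$'s round-$r$ proposals is \texttt{GetLeaderBlocks} inside \texttt{TryDirectDecide} and \texttt{TryIndirectDecide}, and those iterate solely over valid blocks; if the slot carries no valid block the loop is empty and nothing is committed for it, and otherwise the unique valid block is the sole candidate, so any commitment attributed to the slot $(w,r)$ is a commitment of that block.

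The step that actually needs care --- and the reason to isolate this observation at all --- is making sure this single-valuedness is what the downstream safety lemmas want (Total Order and Integrity, via \Cref{lem:agree-commit} and \Cref{lem:consistent}): I want that two honest validators never attribute conflicting blocks to the same leader slot. I would argue this survives because those lemmas apply the observation only at decision rounds and only to authorities that the quorum-intersection step has already fixed as honest (as in \Cref{lem:one-cert-per-validator}), where ``at most one valid block'' holds with no dependence on detection timing, and because, by the BAB \textbf{Agreement} property, every honest validator eventually observes the same equivocation evidence, so the collection of slots carrying a counted valid block converges across honest validators. Should a genuinely global unique block be needed even for a Byzantine $w$ whose equivocation is never exhibited, I would derive it from the certificate machinery: by \Cref{lem:one-cert-per-validator} at most one round-$r$ block of $w$ can carry a strong certificate, and by \Cref{obs:no-equiv-vote} no honest block supports two of $w$'s round-$r$ proposals, which together constrain which of $w$'s blocks the commit rules can ever select; pinning that down completely is the part I would expect to require the most care.
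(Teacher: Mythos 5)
Your core argument is the paper's own proof, which is a one-liner ("Immediate from the DAG construction and the fact that commit rules consider only valid blocks"): your honest/Byzantine case split and the remark that \textsc{TryDirectDecide}/\textsc{TryIndirectDecide} only ever touch valid blocks via \textsc{GetLeaderBlocks} is exactly that argument, just spelled out. The worry in your final paragraph about whether uniqueness is global rather than per-validator goes beyond what the paper establishes here --- the paper simply treats per-(validator, round) uniqueness as part of the DAG construction (the same ``uniqueness-per-round assumption'' invoked in the proof of \Cref{lem:one-cert-per-validator}) --- so for this observation as the paper intends it, no additional certificate machinery is required, though you are right that the paper's terse proof leaves that convention implicit.
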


\iflong
    \begin{proof}
        Immediate from the DAG construction and the fact that commit rules consider only valid blocks.
    \end{proof}
\fi

\begin{lemma} \label{lma4_honest_unique}
    If an honest validator $v$ commits some block $b$ in a slot $s$, then no other honest validator decides to directly skip the slot $s$.
\end{lemma}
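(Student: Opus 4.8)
The plan is to reach a contradiction through a quorum-intersection argument at the decision round. Let $w$ be the leader of slot $s$ and $r$ its propose round, so the corresponding decision round is $r+2$ (recall \texttt{waveLength}$=3$ in \asynclmysticeti, hence $\Call{DecisionRound}{w}=\Call{ProposeRound}{w}+2$). By Observation~\ref{obs:unique-valid}, $w$ has a unique valid round-$r$ block, so the block $b$ that $v$ commits and the block that any honest validator could skip in slot $s$ are the \emph{same} block. Since $v$ commits $b$ — whether by the direct or the indirect rule — Observation~\ref{obs:exists-certificate} gives that $v$'s local DAG contains a weak certificate $W$ for $b$, i.e.\ a set of $2f+1$ valid round-$(r+2)$ blocks from distinct validators, each supporting $b$. (Using Observation~\ref{obs:exists-certificate} here is what makes the argument uniform over direct and indirect commits; in the direct case $v$ even has a strong certificate, which contains a weak one.)

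Next I would assume, for contradiction, that some honest validator $v'$ directly skips slot $s$. By the direct skip rule ($\Call{SkippedLeader}{w,b}$ in \Cref{alg:decider}), $v'$'s local DAG contains a set $N$ of $4f+1$ valid round-$(r+2)$ blocks, none of which supports $b$; these are from distinct validators because honest validators only store valid blocks and, by Observation~\ref{obs:unique-valid}, each validator has at most one valid block in round $r+2$. Let $I_W$ and $I_N$ be the sets of validator identities appearing in $W$ and $N$. Both are subsets of the $5f+1$ validators with $|I_W| = 2f+1$ and $|I_N| = 4f+1$, so $|I_W \cap I_N| \ge (2f+1)+(4f+1)-(5f+1) = f+1$. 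Since at most $f$ validators are Byzantine, there is an honest validator $u \in I_W \cap I_N$.

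Finally I would close the loop on $u$: being honest, $u$ issued a single round-$(r+2)$ block, which by Observation~\ref{obs:unique-valid} is $u$'s unique valid round-$(r+2)$ block; call it $x$. Thus $x$ is the representative of $u$ in both $W$ and $N$, so from $x \in W$ we get that $x$ supports $b$, while from $x \in N$ we get that $x$ does not support $b$. But whether a block supports $b$ is the deterministic outcome of $\Call{IsVote}{x,b}$, so it cannot differ between $v$'s and $v'$'s views; contradiction. Hence no honest validator directly skips $s$. The main obstacle is precisely the bookkeeping in this last step — arguing that the honest witness contributes the \emph{same} block to both certificates so that the support predicate is evaluated on a single fixed block — which is exactly what Observation~\ref{obs:unique-valid} (unique valid block per round) together with the determinism of $\Call{IsVote}{\cdot}$ (Observation~\ref{obs:no-equiv-vote}) provide; everything else reduces to the one-line quorum-intersection estimate above.
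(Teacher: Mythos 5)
Your proof is correct and follows essentially the same route as the paper's: obtain a weak certificate for $b$ from Observation~\ref{obs:exists-certificate}, intersect it with the $4f+1$-block direct-skip witness over the $5f+1$ identities to find an honest validator in both, and derive a contradiction from the uniqueness of that validator's round-$(r+2)$ block and the determinism of the support rule (Observations~\ref{obs:unique-valid} and~\ref{obs:no-equiv-vote}). Your explicit remark that the skipped and committed block must coincide because only the leader's unique valid block is considered is a small added precision, not a departure from the paper's argument.
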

\iflong
    \begin{proof}
        Assume by contradiction that some honest validator $v'$ decides to directly skip $s$. By the direct-skip rule, this means that in $v'$'s local DAG there exists a set $N$ of $4f+1$ distinct valid round-$(r+2)$ blocks (one per validator identity) whose support does not go to $b$ (a direct-skip witness for $s$).

        Since $v$ commits $b$ at $s$, by \Cref{obs:exists-certificate} there exists a weak certificate $W$ for $b$ at $s$ in $v$'s local DAG: a set of $2f+1$ distinct valid round-$(r+2)$ blocks (one per validator identity) that support $b$ (\Cref{def:weak-cert}). Consider the identity sets $\mathrm{Id}(N)$ and $\mathrm{Id}(W)$. We have $|\mathrm{Id}(N)|=4f+1$ and $|\mathrm{Id}(W)|=2f+1$, hence by quorum intersection over the $5f+1$ validator identities,
        $$
            |\mathrm{Id}(N) \cap \mathrm{Id}(W)| \ge (4f+1) + (2f+1) - (5f+1) = f+1.
        $$
        As at most $f$ validators are Byzantine, the intersection contains at least one honest identity. Pick such an honest validator $u\in \mathrm{Id}(N)\cap\mathrm{Id}(W)$, and let $x$ be $u$'s unique valid round-$(r+2)$ block.

        Because certificates/witnesses record one valid block per identity, the same block $x$ is the representative of $u$ in both $N$ and $W$. By definition of $W$, $x$ supports $b$; by definition of $N$, the very same $x$ is counted as not supporting $b$ at $v'$. By \Cref{obs:no-equiv-vote} (uniqueness of support per validator/round under the deterministic rule), $x$ cannot both support and not support $b$. Hence $v'$ cannot have a valid direct-skip witness $N$, a contradiction.
    \end{proof}
\fi

\begin{lemma} \label{lma5_honest_skip}
    If an honest validator directly commits some block in a slot $s$, then no other honest validator decides to skip the slot $s$.
\end{lemma}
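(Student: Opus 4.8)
The plan is to argue by contradiction and to split on \emph{how} a second honest validator could come to skip the slot. Fix an honest validator that directly commits a block $b$ in slot $s$; write $r$ for the round of $s$ and $w$ for its wave, so that $r=\Call{ProposeRound}{w}$ and the decision round is $r_{decision}=\Call{DecisionRound}{w}=r+2$. By the direct-commit rule ($\Call{StronglyCertifiedLeader}{}$), this validator observes a \emph{strong certificate} $S$ for $b$ (\Cref{def:strong-cert}): a set of $4f+1$ valid round-$(r+2)$ blocks from distinct validators, each a vote for $b$. Suppose for contradiction that some other honest validator $v'$ decides to skip $s$. A slot is skipped only through the direct-skip rule or the $\sskip$ branch of $\Call{TryIndirectDecide}{}$, which gives two cases.

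The direct-skip case is immediate: a direct commit is in particular a commit, so \Cref{lma4_honest_unique} already forbids any other honest validator from directly skipping $s$. The substance is the indirect-skip case. There, $v'$ reaches the $\sskip$ branch of $\Call{TryIndirectDecide}{}$ for $s$ using a \emph{committed} anchor $b_{anchor}$ whose round $r_a$ satisfies $r_a>r_{decision}=r+2$, hence $r_a\ge r+3$; by \Cref{obs:unique-valid}, $b_{anchor}$ is valid, and since $v'$ committed it, $v'$ holds its entire causal history. I would then apply \Cref{lma1_cert_path} to the strong certificate $S$: the valid block $b_{anchor}$, being at a round $\ge r+3$, has paths to at least $2f+1$ members of $S$; these $2f+1$ blocks are distinct, valid, at round $r+2$, and each votes for $b$, so they constitute a \emph{weak certificate} $S'$ for $b$ (\Cref{def:weak-cert}) that is referenced by $b_{anchor}$, with $S'\subseteq DAG[r+2]$ in $v'$'s local view. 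Because each block of $S'$ has a path to $b$, the block $b$ lies in $b_{anchor}$'s causal history, hence in $v'$'s DAG; and by \Cref{obs:agree-order} the author of $b$ is exactly the leader $v'$ computes for slot $s$, so $b\in B_{leader}$ in $v'$'s evaluation of $\Call{GetLeaderBlocks}{w,\texttt{leaderOffset}}$. Consequently $\Call{WeaklyCertifiedLeader}{b_{anchor},b}$ returns true at $v'$ (its witness set $\{b'':\Call{IsVote}{b'',b}\wedge\Call{Link}{b'',b_{anchor}}\}$ contains $S'$, of size $2f+1$), so $\Call{TryIndirectDecide}{}$ returns $\scommit(b)$ for $s$ --- contradicting the assumption that $v'$ skips $s$. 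Both cases give a contradiction, proving the lemma.

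The step I expect to be the main obstacle is the transfer inside the indirect case: arguing that the $2f+1$ blocks produced by \Cref{lma1_cert_path} are \emph{the same objects} sitting in $v'$'s decision-round set $B_{decision}=DAG[r+2]$ and are counted by $v'$ as votes for $b$ rather than for an equivocating sibling of $b$. This needs a few facts made explicit: honest validators download and verify a block's full causal history before referencing it, so $\Call{IsVote}{}$ and $\Call{Link}{}$ are deterministic functions of that history and evaluate identically across honest views; the leader schedule is agreed (\Cref{obs:agree-order}) and valid blocks are unique per (validator, round) (\Cref{obs:unique-valid}), so ``the leader block of $s$'' is unambiguous; and \Cref{obs:no-equiv-vote} rules out any block of $S'$ being simultaneously counted as a vote for two sibling proposals. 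With those in hand, the rest is a direct reading of the $\Call{TryIndirectDecide}{}$ and $\Call{WeaklyCertifiedLeader}{}$ pseudocode against \Cref{lma4_honest_unique} and \Cref{lma1_cert_path}.
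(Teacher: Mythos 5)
Your proposal is correct and follows essentially the same route as the paper's proof: direct skip is excluded by \Cref{lma4_honest_unique}, and indirect skip is excluded by applying \Cref{lma1_cert_path} to the committed anchor (necessarily at round $\ge r+3$), which therefore references a weak certificate for $b$ and forces the indirect rule to commit rather than skip. Your extra discussion of why the $2f+1$ certificate blocks are counted identically in $v'$'s view (determinism of $\Call{IsVote}{}$/$\Call{Link}{}$, uniqueness of the valid leader block, \Cref{obs:no-equiv-vote}) just makes explicit a step the paper leaves implicit, not a different argument.
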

\iflong
    \begin{proof}
        Assume by contradiction that an honest validator $v$ directly commits block $b$ in slot $s$ while another honest validator $v'$ decides to skip $s$. By \Cref{lma4_honest_unique}, $v'$ cannot directly skip $s$; therefore $v'$ must attempt to skip $s$ via the indirect decision rule. Let $r$ be the round of $s$.

        Since $v$ directly commits $b$, there exists a strong certificate $S$ for $b$ at $s$ (i.e., $|S|=4f+1$ distinct valid round-$(r+2)$ blocks supporting $b$). By \Cref{lma1_cert_path} (Strong-to-Weak Propagation), every valid block at any round $r'\ge r+3$ has paths to at least $2f+1$ members of $S$, i.e., it carries a weak certificate for $b$. In particular, any valid anchor block for deciding $s$ (which necessarily lies at some round $r'\ge r+3$) has paths to a weak certificate for $b$.

        But the indirect skip rule for $s$ requires an anchor whose deterministic support excludes $b$ (equivalently, with no weak certificate for $b$). This contradicts the propagation property above. Hence $v'$ cannot skip $s$ indirectly either. Contradiction.
    \end{proof}
\fi

\begin{lemma}\label{lem:agree-commit}
    If a slot $s$ is committed at two honest validators, then $s$ contains the same block at both validators.
\end{lemma}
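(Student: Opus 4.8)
The plan is to prove this by strong induction on the round $r$ of the slot $s=(w,r)$ — more precisely, over the finite set of leader slots that have been decided by both honest validators by the time the later of the two commits occurs — and, within the inductive step, to case-split on whether each of $v$ and $v'$ reached its commit via the direct or the indirect rule. I would also strengthen the induction hypothesis to assert agreement on the full \scommit/\sskip classification of every decided slot of round $>r$, not merely agreement on which block is committed; that extra content is what makes the indirect case go through.

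The cases in which at least one of the two commits is \emph{direct} follow immediately from the certificate machinery already established. By Observation~\ref{obs:exists-certificate}, $v$'s local DAG contains a weak certificate for $b$ and $v'$'s contains a weak certificate for $b'$; if, say, $v$ committed $b$ directly then $b$ in addition carries a \emph{strong} certificate, and Lemma~\ref{lem:one-cert-per-validator} (strong-certificate exclusivity) forbids any round-$r$ block of $w$ other than $b$ — in particular $b'$ — from carrying even a weak certificate, contradicting $b\neq b'$. The same observation disposes of the base case, since the committed leader slot of highest round can only have been decided by the direct rule. (If the ``uniqueness-per-round'' invariant of Observation~\ref{obs:unique-valid} is read globally — at most one round-$r$ block of $w$ is valid in any honest view — then the whole statement collapses to a one-liner, since commit rules act only on valid blocks; I would include that as the short route and keep the inductive argument as the self-contained one.)

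The substantive case is that \emph{both} commits are indirect. Unwinding the indirect rule, $v$ commits $b$ because some committed anchor slot $s_a$ of round $>r+1$ is resolved to \scommit at $v$ and the anchor block there causally references (``links to'') a weak certificate for $b$; symmetrically for $v'$ with an anchor slot $s_{a'}$. The indirect decision for $s$ is then a deterministic function of the (globally well-defined) causal history of the anchor block, so once I know $s_a=s_{a'}$ and that $v$ and $v'$ committed the same block there — the latter by the induction hypothesis, since $s_a$ has strictly larger round than $r$ — I obtain $b=b'$. I expect the genuine obstacle to be showing $s_a=s_{a'}$: the anchor is the \emph{lowest}-round leader slot above $r+1$ that is not classified \sskip, so if $v$ and $v'$ disagreed on whether some intervening slot is \sskip versus \scommit, they could latch onto different anchors. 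Ruling this out is exactly why the induction hypothesis must also carry agreement on \sskip/\scommit classifications; establishing that part will lean on Lemmas~\ref{lma4_honest_unique} and~\ref{lma5_honest_skip} (no honest validator commits a slot that another honest validator directly skips, resp. skips at all when it was directly committed) together with the quorum-intersection identity $(4f+1)+(2f+1)-(5f+1)=f+1$ over the $5f+1$ validator identities, and it is the bookkeeping of threading these through the induction — so that the induction simultaneously proves the richer statement — that I anticipate being the delicate part.
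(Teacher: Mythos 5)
The paper's actual proof is exactly the ``short route'' you relegate to a parenthesis: by \Cref{obs:agree-order} both validators agree on the leader identity $w$ for slot $s$, and by \Cref{obs:unique-valid} (which the paper does read globally) at most one round-$r$ block signed by $w$ is counted as valid, so---since the commit rules act only on valid blocks---whatever block either validator commits in $s$ must be that unique block. That two-line argument is the intended proof, and since you explicitly identify it, your proposal does contain the paper's reasoning; but you center a much heavier joint induction that essentially re-derives \Cref{lem:consistent} (the very next lemma, whose paper proof \emph{uses} the present lemma at the anchor) merged with this statement. That merge is not circular and can be made to work, but it buys nothing here and costs the bookkeeping you yourself anticipate: to pin the anchors equal you need the ``highest slot of disagreement'' formulation rather than an induction over ``slots decided by both,'' since a priori one validator's anchor need not be decided by the other, and it is precisely the agreement on all higher decided slots that forces each validator's anchor to be decided (as a skip or commit) by the other and hence forces $s_a=s_{a'}$.

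The one substantive soft spot in your self-contained route is the step ``same anchor block $\Rightarrow$ same committed block.'' Without \Cref{obs:unique-valid}, the anchor's causal history can in principle contain weak certificates for two equivocating round-$r$ blocks of the same leader ($2(2f{+}1)=4f{+}2\le 5f{+}1$ identities suffice for $f\ge 1$), and the pseudocode's existential ``commit some weakly certified $b_{leader}$'' is then not a well-defined deterministic function unless a tie-break is specified; so your induction does not actually free you from the uniqueness assumption---it reappears at exactly this point (just as it already underlies \Cref{lem:one-cert-per-validator}, via honest non-equivocation, in your direct cases). Given that \Cref{obs:unique-valid} must be invoked anyway, the paper's direct argument is the right proof, and your inductive scaffolding is better spent where the paper spends it, on \Cref{lem:consistent}.
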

\iflong
    \begin{proof}
        Let $v$ and $u$ be honest validators and suppose $v$ commits block $b$ at slot $s$. If $u$ commits $s$ as well, we show that $u$ commits $b$.

        Let $w$ be the validator identity of the leader for $s$ (i.e., the creator of $b$). By \Cref{obs:agree-order}, all honest validators agree on the leader identity per slot, so $u$ also agrees that $s$ must contain a block by $w$.

        By \Cref{obs:unique-valid}, there is a unique valid round-$r$ block by $w$ that can be committed in $s$.

        Since $v$ commits $b$ and $b$ is by $w$, $b$ is this unique valid block by $w$ for round $r$. Hence if $u$ commits $s$, it must also commit $b$. Thus $s$ contains the same block at both validators.
    \end{proof}
\fi

We say that a slot is \textit{decided} at a validator $v$ if $s$ is committed or skipped, that is, if it is categorized as $\scommit$ or $\sskip$. Otherwise, $s$ is \textit{undecided}.

\begin{lemma}\label{lem:consistent}
    If a slot $s$ is decided at two honest validators $v$ and $v'$, then either both validators commit $s$, or both validators skip $s$.
\end{lemma}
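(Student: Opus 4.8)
The plan is to follow the proof of Lemma~\ref{lem:safety5} almost verbatim, replacing its appeals to Lemmas~\ref{lem:safety1}--\ref{lem:safety2} and Corollary~\ref{cor:safety4} with the asynchronous analogues already established here: Lemma~\ref{lma4_honest_unique} (a commit forbids a conflicting direct skip), Lemma~\ref{lma5_honest_skip} (a direct commit forbids any skip), and Lemma~\ref{lem:agree-commit} (committed slots hold the same block), and using Observation~\ref{obs:agree-order} so that ``the slot at index $k$'' is meaningful across validators. Fix two honest validators $v,v'$ and let $b_v,b_{v'}$ be the highest-indexed leader slots they respectively commit, where slots are indexed by the agreed global order; without loss of generality $\mathrm{idx}(b_v)\le\mathrm{idx}(b_{v'})$. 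First I would dispose of every slot $s$ with $\mathrm{idx}(s)>\mathrm{idx}(b_v)$ that $v$ decides: such $s$ must be decided $\sskip$ by $v$, and in fact \emph{directly} --- an indirect skip would require $v$ to have committed an anchor of strictly higher index, contradicting the maximality of $b_v$ (the same remark covers the case where $v$ commits nothing at all). By the contrapositive of Lemma~\ref{lma4_honest_unique}, no honest validator commits a slot that $v$ directly skips, so if $v'$ has decided $s$ it also skips it, and the decisions agree.

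For slots of index at most $\mathrm{idx}(b_v)$ I would use downward induction on the index, from $\mathrm{idx}(b_v)$ down to the first slot, with the statement: if the slot $s$ at the current index is decided at both $v$ and $v'$, then they agree. Two of the cases need no induction hypothesis. If either validator \emph{directly} commits $s$, then by Lemma~\ref{lma5_honest_skip} the other does not skip $s$; being decided, it commits $s$; and by Lemma~\ref{lem:agree-commit} both commit the same block. If either validator \emph{directly} skips $s$, then by the contrapositive of Lemma~\ref{lma4_honest_unique} no honest validator commits $s$, so both skip. The base case $\mathrm{idx}(b_v)$ is of the first kind, because $v$'s highest commit is necessarily a direct commit (an indirect commit of $b_v$ would again produce a committed anchor of strictly higher index).

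The one remaining, and principal, case of the inductive step is that both $v$ and $v'$ decide $s$ \emph{indirectly}, and there I would show they use the same anchor. Reading \textsc{TryIndirectDecide} in \Cref{alg:decider}: for $v$ to indirectly \emph{decide} $s$ its anchor $A^v$ must be a \emph{committed} slot, and every leader slot that both precedes $A^v$ in the order and lies past the decision round of $s$'s wave must already be decided $\sskip$ by $v$ --- otherwise the first such slot would be undecided ($\bot\ne\sskip$) and be selected as the anchor, leaving $s$ undecided. Writing $c=\mathrm{idx}(A^v)$ and $d=\mathrm{idx}(A^{v'})$, we have $\mathrm{idx}(s)<c\le\mathrm{idx}(b_v)$ (as $A^v$ is committed by $v$) and likewise $\mathrm{idx}(s)<d$. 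If $d<c$, the slot at index $d$ is committed by $v'$ but decided $\sskip$ by $v$ (it lies past the decision round and before $A^v$), contradicting the induction hypothesis applied at index $d$, which is available since $\mathrm{idx}(s)<d<c\le\mathrm{idx}(b_v)$; by the symmetric argument $c<d$ is impossible too. Hence $c=d$, the slot at that index is committed by both, and by Lemma~\ref{lem:agree-commit} it holds the same block $b_{anchor}$ at both. Finally, given the anchor, the indirect decision for $s$ is the outcome of the deterministic test \textsc{WeaklyCertifiedLeader}$(b_{anchor},\cdot)$, which inspects only blocks linking to $b_{anchor}$, hence only $b_{anchor}$'s causal history, which both validators have downloaded in full; running the same test on the same input, $v$ and $v'$ reach the same verdict for $s$, closing the induction.

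The step I expect to be the main obstacle is precisely this anchor-agreement argument: pinning down what the anchor-selection loop guarantees --- in particular, that an undecided slot occurring before the eventual anchor is itself chosen as the anchor, which is what forces every pre-anchor slot past the decision round to be a \emph{decided} skip --- and checking that the anchor's index never exceeds $\mathrm{idx}(b_v)$ so that the induction hypothesis genuinely applies to it. Everything else reduces to the direct-rule lemmas (Lemmas~\ref{lma4_honest_unique} and~\ref{lma5_honest_skip}) together with Lemma~\ref{lem:agree-commit}, exactly mirroring the synchronous proof of Lemma~\ref{lem:safety5}.
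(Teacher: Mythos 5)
Your proof is correct and takes essentially the same approach as the paper: the paper argues by contradiction on the highest slot at which $v$ and $v'$ decide differently, which is your downward induction in contrapositive form, and both arguments rest on the same ingredients --- Lemmas~\ref{lma4_honest_unique}, \ref{lma5_honest_skip} and \ref{lem:agree-commit} for the direct cases, agreement on all higher slots to force a common committed anchor, and determinism of the indirect rule given that anchor. Your write-up merely spells out the anchor-agreement and causal-history-determinism steps that the paper states tersely.
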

\iflong
    \begin{proof}
        Assume by contradiction that there exists a slot $s$ such that $v$ and $v'$ decide differently at $s$. We consider a finite execution prefix and assume \textit{wlog} that $s$ is the highest slot at which $v$ and $v'$ decide differently (*). Further assume \textit{wlog} that $v$ commits $s$ and $v'$ skips $s$. By \Cref{lma4_honest_unique} and \Cref{lma5_honest_skip}, neither $v$ nor $v'$ could have used the direct decision rule for $s$; they must both have used the indirect rule. Consider now the anchor of $s$: $v$ and $v'$ must agree on which slot is the anchor of $s$, since by our assumption (*) above, they make the same decisions for all slots higher than $s$, including the anchor of $s$. Let $s'$ be the anchor of $s$; $s'$ must be committed at both $v$ and $v'$. Thus, by \Cref{lem:agree-commit}, $v$ and $v'$ commit the same block $b'$ at $s'$. But then $v$ and $v'$ cannot reach different decisions about slot $s$ using the indirect decision rule. We have reached a contradiction.
    \end{proof}
\fi

We have proven the consistency of honest validators' commit sequences: honest validators commit (or skip) the same leader blocks, in the same order. However, we are not done: we also need to prove that non-leader blocks are delivered in the same order by honest validators. We show this next.

\textbf{Causal history and delivery conditions}
Consider an honest validator $v$. We call the \textit{causal history} of a block $b$ in $v$'s DAG, the transitive closure of all blocks referenced by $b$ in $v$'s DAG, including $b$ itself. In \asynclmysticeti, a block $b$ is delivered by an honest validator $v$ if (1) there exists a committed leader block $l$ in $v$'s DAG such that $b$ is in $l$'s causal history (2) all slots up to $l$ are decided in $v$'s DAG and (3) $b$ has not been delivered as part of a lower slot's causal history. In this case we say $b$ is \textit{delivered at} slot $s$, or \textit{delivered with} block $l$.

\begin{lemma}\label{lem:delivered-same-slot}
    If a block $b$ is delivered by two honest validators $v$ and $v'$, then $b$ is delivered at the same slot $s$, and $b$ is delivered with the same leader block $l$, at both $v$ and $v'$.
\end{lemma}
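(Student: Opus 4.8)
The plan is to reduce the claim to the consistency results already in hand. First I would give a view-independent characterization of the delivery slot: for an honest validator $v$ that delivers $b$, let $s_v$ be the delivery slot; unwinding delivery conditions (1)--(3), $s_v$ is exactly the smallest leader slot in the common schedule order (all honest validators agree on this order by Observation~\ref{obs:agree-order}) that is committed at $v$ and whose committed block has $b$ in its causal history. I would define $s_{v'}$ for $v'$ the same way, and assume without loss of generality $s_v \le s_{v'}$; the goal is then $s_v = s_{v'}$ with the same committed leader block at that slot.

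Next I would transfer the decision on slot $s_v$ from $v$ to $v'$. Since $v'$ delivers $b$ with a leader block at slot $s_{v'} \ge s_v$, delivery condition (2) says every slot up to $s_{v'}$ — in particular $s_v$ — is decided at $v'$. As $s_v$ is committed at $v$, Lemma~\ref{lem:consistent} forces $v'$ to commit $s_v$ as well, and Lemma~\ref{lem:agree-commit} then gives that both validators commit the same block $l$ at $s_v$. Committing a slot entails linearizing the committed leader's sub-DAG (the \textsc{LinearizeSubDags} step of \Cref{alg:main}), so both $v$ and $v'$ hold all of $l$'s causal history; and that causal history is fixed by $l$'s content, since parents are referenced by cryptographic hash and a block's full causal history is downloaded and validated before the block is ever referenced. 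Hence $b$, which is in $l$'s causal history at $v$, is also in $l$'s causal history at $v'$.

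Finally I would rule out that $v'$ delivers $b$ at an earlier slot. Suppose some slot $s^\ast < s_v$ is committed at $v'$ with $b$ in the causal history of the block $l^\ast$ committed there. Because $s^\ast < s_v$, and $v$'s delivery of $b$ at $s_v$ means every slot up to $s_v$ is decided at $v$ (delivery condition (2)), the slot $s^\ast$ is decided at $v$; Lemma~\ref{lem:consistent} then makes it committed at $v$, Lemma~\ref{lem:agree-commit} makes the committed block there equal to $l^\ast$ at $v$, and the causal-history argument again puts $b$ inside $l^\ast$'s causal history at $v$. That contradicts the minimality of $s_v$ (equivalently, delivery condition (3) for $v$). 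So no such $s^\ast$ exists, $s_v$ is the delivery slot of $b$ at $v'$ as well, giving $s_{v'} = s_v =: s$ and a common committed leader block $l$ at $s$.

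I expect the main obstacle to be the step asserting that the causal history of a committed leader block is the same across honest validators — that committing a slot pins down, view-independently, which blocks sit below it. This rests on the DAG invariants (hash-linked parents, full-causal-history download-and-verify before referencing, and linearization over that fixed sub-DAG) rather than on the decision rules, so it deserves to be written out carefully; the remainder is a clean application of Lemmas~\ref{lem:consistent} and~\ref{lem:agree-commit} together with Observation~\ref{obs:agree-order}.
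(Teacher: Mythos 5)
Your proof is correct and follows essentially the same route as the paper's: compare the two delivery slots and use the agreement lemmas to show that if they differed, the validator with the higher slot would already have delivered $b$ at the lower one, contradicting the no-re-delivery condition. You are in fact somewhat more careful than the paper, which cites only \Cref{lem:agree-commit} where your explicit appeal to \Cref{lem:consistent} (to transfer the commit decision across validators) and your note that a committed leader's causal history is fixed by its hash-linked content are genuinely needed steps.
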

\iflong
    \begin{proof}
        Let $s$ be the slot at which $b$ is delivered at validator $v$, and $l$ the corresponding leader block in $s$, also at validator $v$. Consider now the slot $s'$ at which $b$ is delivered at validator $v'$, and $l'$ the corresponding leader block. Assume by contradiction that $s' \ne s$. If $s' < s$, then $v$ would have also delivered $b$ at slot $s'$, since by \Cref{lem:agree-commit} must commit the same leader blocks in the same slots, so $v$ could not have delivered $b$ again at slot $s$; a contradiction. Similarly, if $s < s'$, then $v'$ would have already delivered $b$ at slot $s$, since by \Cref{lem:agree-commit} $v$ and $v'$ must have committed the same block in slot $s$; contradiction. Thus it must be that $s = s'$, and by \Cref{lem:agree-commit}, $l = l'$.
    \end{proof}
\fi

We can now prove the main safety properties of \sysname: Total Order and Integrity.
\begin{theorem}[Total Order]
    \asynclmysticeti satisfies the total order property of Byzantine Atomic Broadcast.
\end{theorem}
\iflong
    \begin{proof}
        This property follows immediately from \Cref{lem:delivered-same-slot} and the fact that honest validators order the causal histories of committed blocks using the same deterministic function, and deliver blocks in this order.
    \end{proof}
\fi

\begin{theorem}[Integrity]\label{thm:integrity}
    \asynclmysticeti satisfies the integrity property of Byzantine Atomic Broadcast.
\end{theorem}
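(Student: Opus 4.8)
The plan is to reduce Integrity to two structural facts about an honest validator's local delivery procedure: that it finalises leader slots in a monotone, append-only fashion, and that within the linearisation of any single committed sub-DAG the deterministic depth-first traversal visits each block exactly once. First I would recall the delivery conditions stated just before the theorem: an honest validator $v$ outputs $\deliver{i}{\cdot}$ for a block $b$ only while linearising the causal history of some committed leader block $l$, and only if condition (3) holds, i.e. $b$ has not already been delivered as part of a lower slot's causal history. So it suffices to show that $v$ never re-invokes delivery for the same block.

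Next I would establish the monotonicity ingredient: once $v$ decides a slot $s$ (as \scommit or \sskip) its decision never changes, so $v$'s sequence of committed leader blocks only grows and slots are finalised in increasing slot order. This holds because both decision rules are monotone in $v$'s DAG---a direct decision is triggered by $4f+1$ round-$(r+2)$ blocks supporting (or not supporting) the slot, and those blocks persist; an indirect decision, once all intermediate slots are decided and the anchor committed, is a deterministic function of the anchor's causal history, which only grows (this is exactly the reasoning behind \Cref{lem:consistent}, now applied across time at a single validator rather than across two validators). I would also note that by \Cref{lem:one-cert-per-validator} and \Cref{obs:unique-valid} the block committed in a given slot is unique even in a single validator's view, and by \Cref{obs:agree-order} the slot assignment is fixed, so the linearisation pass associated with each committed slot is well-defined; I would state explicitly, as part of the delivery procedure, that each committed slot triggers its linearisation pass at most once.

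Then the core argument: suppose for contradiction that $v$ delivers $b$ twice; order the two events by time, $t_1 \le t_2$, with associated committed slots $s_1$ and $s_2$. By the monotone finalisation just established, $s_1 \le s_2$. If $s_1 = s_2$, then by uniqueness of the committed block both events arise from linearising the same sub-DAG, whose deterministic traversal lists $b$ exactly once, so the two events coincide---a contradiction. If $s_1 < s_2$, then at time $t_2$ the block $b$ has already been delivered (at time $t_1$) as part of the strictly lower slot $s_1$, so delivery condition (3) fails and $v$ does not deliver $b$ again---a contradiction. Finally, since the event $\deliver{i}{m,q,v_k}$ corresponds to the unique valid block authored by $v_k$ with sequence number $q$ (\Cref{obs:unique-valid}), ``$b$ delivered at most once'' yields ``$\deliver{i}{m,q,v_k}$ output at most once,'' which is Integrity.

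The hard part will be pinning down the monotonicity/stability claim with full rigour: the delivery precondition ``all slots up to $l$ are decided'' leaves implicit that individual slot decisions never flip and that the committed prefix only extends, and making this precise for the indirect rule---where one must argue that the anchor and every intermediate decision are themselves stable under a growing DAG---is the step that needs the most care. A secondary, more bookkeeping-flavoured subtlety is ensuring the protocol's delivery procedure is formalised so that it tracks the already-delivered set and does not re-run a linearisation pass for an already-processed slot; I would surface this as an explicit part of the specification rather than attempt to derive it.
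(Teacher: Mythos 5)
Your proposal is correct and rests on the same core idea as the paper's proof, which simply observes that Integrity holds ``by construction'': delivery condition (3) forbids re-delivering a block already delivered with an earlier committed leader, and the deterministic linearisation within a slot visits each block once. Your added discussion of decision monotonicity and once-per-slot linearisation passes is a more careful elaboration of the same argument, not a different route.
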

\iflong
    \begin{proof}
        This is by construction: a block $b$ is delivered as part of the causal history of a committed leader block only if $b$ has not been delivered along with an earlier leader block (see "Causal history \& delivery conditions" above). So an honest validator cannot deliver the same block twice.
    \end{proof}
\fi

\subsection{Liveness Lemmas}

\para{Block inclusion.}
The following two lemmas establish that blocks broadcast by honest validators are eventually included in all honest validators' DAGs.

\begin{lemma} \label{lem:causal-history-inclusion}
    If a block $b$ produced by an honest validator $v$ references some block $b'$, then $b'$ will eventually be included in the local DAG of every honest validator.
\end{lemma}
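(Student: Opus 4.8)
The plan is to combine the DAG-construction invariant from Section~\ref{sec:dag} with the asynchronous delivery assumption of the model. First I would invoke the rule that an honest validator references a block's hash in its own proposal only after downloading and verifying that block's entire causal history: since $v$ is honest and $b$ references $b'$, at the moment $v$ creates $b$ its local DAG already contains $b'$ together with every ancestor of $b'$. In particular, $v$ is a source for the whole causal closure of $b'$, regardless of who authored the blocks in it.

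Next I would track how $b$ propagates. An honest validator broadcasts every block it produces, so $v$ sends $b$ to all validators, and since messages among honest validators are eventually delivered, every honest validator $u$ eventually receives $b$. To insert $b$ into its local DAG, $u$ must hold $b$'s causal closure, so whenever $u$ observes a reference to a block it is missing it requests that block from its peers. Because $v$ holds $b'$ and all of $b'$'s ancestors and, being honest, answers such requests, $u$ can pull $b'$---and recursively $b'$'s ancestors---from $v$. I would make this recursion terminate by induction on the round number of the missing block: a parent reference strictly decreases the round, round numbers are bounded below, and each round holds finitely many blocks at any point in time, so after finitely many fetch steps $u$ has acquired $b'$ together with its entire causal history, i.e., $b'$ lies in $u$'s local DAG.

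The main obstacle is that the dissemination-and-fetch mechanism is implicit in the pseudocode rather than written out, so the argument must make explicit the two facts it rests on: (i) an honest validator serves, on request, any block contained in its local DAG, which is what keeps $b'$ reachable even when $b'$'s author is Byzantine and sent $b'$ only to $v$; and (ii) recursive ancestor-fetching does not diverge, which is exactly the strictly-decreasing-round-number termination argument above. Once these mechanics are pinned down the conclusion is immediate, and the companion block-inclusion lemma follows by the same broadcast-plus-pull reasoning.
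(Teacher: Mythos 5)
Your proposal is correct and follows essentially the same route as the paper, which likewise argues that the receiving validator's synchronizer requests the missing block $b'$ (and, recursively, its causal history) from the honest author $v$, who holds it all and answers over reliable links. Your explicit termination argument by decreasing round numbers merely spells out the recursion the paper states informally, so the two proofs coincide in substance.
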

\iflong

    \begin{proof}
        This is ensured by the synchronizer sub-component in each validator: if some validator $w$ receives $b$ from $v$, but does not have $b'$ yet, $w$ will request $b'$ from $v$; since $v$ is honest and the network links are reliable, $v$ will eventually receive $w$'s request, send $b'$ to $w$, and $w$ will eventually receive $b'$. The same is recursively true for any blocks from the causal history of $b'$, so $w$ will eventually receive all blocks from the causal history of $b'$ and thus include $b'$ in its local DAG.
    \end{proof}
\fi

\begin{lemma} \label{lem:block-inclusion}
    If a honest validator $v$ broadcasts a block $b$, then $b$ will eventually be included in the local DAG of every honest validator.
\end{lemma}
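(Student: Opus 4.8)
The plan is to split the claim into two obligations and discharge each using results already in hand. Obligation (i): the content of $b$ reaches every honest validator. Since $v$ is honest, when it broadcasts $b$ it transmits $b$ over its reliable links to all validators, and by the assumption that messages among honest validators are eventually delivered, every honest validator $w$ eventually receives $b$ from $v$; moreover, even absent a direct transmission, as soon as $w$ encounters a reference to $b$ (e.g., inside another block) its synchronizer requests $b$ from $v$, who, being honest and holding $b$, answers. Obligation (ii): every honest validator eventually holds enough of $b$'s causal history to validate and insert $b$. Recall that an honest validator inserts a block into its local DAG only after downloading and verifying the block's entire causal history, so I must argue that this requirement cannot stall forever.

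To discharge (ii), note that $b$ is produced by the honest validator $v$ and references each of its parents, so \Cref{lem:causal-history-inclusion} applies to every parent $b'$ of $b$: each such $b'$ — and, by the recursive clause in that lemma, every block in the causal history of $b'$ — eventually appears in the local DAG of every honest validator. Hence every honest $w$ eventually holds the complete causal history of $b$. At that point all of \lmysticeti's validity checks on $b$ succeed (valid author signature, a round number, a list of transactions, and at least $4f+1$ distinct hashes of valid blocks from the previous round), because $v$ is honest and $b$ is well-formed, so $w$ inserts $b$ into its local DAG. Combining (i) and (ii) yields the lemma.

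The routine part dominates here; the only point needing care is the interplay between obligations (i) and (ii), i.e., ruling out that the ``verify the entire causal history first'' rule stalls the insertion of $b$ indefinitely — in particular because some of $b$'s ancestors may be authored by Byzantine validators who never serve them. This is precisely what \Cref{lem:causal-history-inclusion} already rules out: its proof relies on the fact that the honest author $v$ retains (and will re-serve on request) everything it referenced, and it recurses through the whole causal history; so the remaining work is merely to invoke it and observe that a validator possessing both $b$ and $b$'s full causal history has no further obstacle to including $b$.
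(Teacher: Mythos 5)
Your proof is correct and follows essentially the same route as the paper's: reliable links deliver $b$ itself to every honest validator, and \Cref{lem:causal-history-inclusion} (applied to the blocks $b$ references) supplies the full causal history, after which inclusion in the local DAG follows. The extra remarks about the synchronizer and the validity checks are harmless elaborations of the same argument.
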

\iflong
    \begin{proof}
        Since network links are reliable, all honest validators will eventually receive $b$ from $v$. By \Cref{lem:causal-history-inclusion}, all honest validators will eventually receive all of $b$'s causal history, and so will include $b$ in their local DAG.
    \end{proof}
\fi

\paragraph{Main structural lemmas.}
These are the main structural lemmas that we will use to prove liveness. The key idea is that the reference rule creates significant overlap among honest blocks across consecutive rounds, which we can leverage to ensure that a randomly chosen leader has sufficient honest support to be directly committed.

\begin{lemma}[Reference Honesty Lower Bound]\label{lem:valid-ref-honest-lb}
    Any \emph{valid} round-$(R+1)$ block (whether created by an honest or Byzantine validator) references at least $3f+1$ honest round-$R$ blocks.
\end{lemma}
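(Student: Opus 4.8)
The plan is a direct counting (pigeonhole) argument, with the only real care needed in how references are counted. First I would unpack the validity rule from Section~\ref{sec:dag}: a valid round-$(R+1)$ block $b$ includes hashes of at least $4f+1$ valid round-$R$ blocks, and — crucially — these references correspond to $4f+1$ \emph{distinct validators}, not merely $4f+1$ distinct hashes. This is the convention used implicitly throughout the quorum-intersection arguments (e.g.\ the ``standard quorum intersection'' step in Lemma~\ref{lem:safety3}): a valid block carries at most one parent per author in any given round, so $4f+1$ parent hashes from round $R$ means $4f+1$ distinct round-$R$ authors. I would make this explicit because a Byzantine validator may equivocate in round $R$, and without the distinct-author convention its several round-$R$ blocks could be double-counted, which would break the bound.

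Next I would apply the pigeonhole principle. Of the $n = 5f+1$ validators, at most $f$ are Byzantine. Among the (at least) $4f+1$ distinct round-$R$ authors referenced by $b$, therefore at most $f$ are Byzantine, so at least $(4f+1) - f = 3f+1$ of them are honest. Since every honest validator broadcasts exactly one block per round, the round-$R$ block contributed by each such honest author is an honest round-$R$ block, and $b$ references all of them. Hence $b$ references at least $3f+1$ honest round-$R$ blocks. I would also note that this argument is agnostic to whether $b$ itself was produced by an honest or a Byzantine validator: validity is a syntactic predicate that every honest recipient checks on $b$ before storing or referencing it, so the $4f+1$-distinct-authors property holds for every block that ever enters an honest DAG.

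The main (and essentially the only) obstacle is the reference-counting subtlety just described — establishing that ``$4f+1$ distinct hashes from round $R$'' forces ``$4f+1$ distinct round-$R$ authors.'' Once that is pinned down from the validity rule, the rest is an immediate subtraction. As a closing remark I would observe that the bound $3f+1$ is tight: a Byzantine round-$(R+1)$ block may legally reference exactly $4f+1$ round-$R$ blocks chosen so as to include all $f$ Byzantine authors, leaving precisely $3f+1$ honest references.
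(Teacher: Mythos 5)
Your proposal is correct and follows essentially the same argument as the paper: a valid round-$(R+1)$ block references at least $4f+1$ round-$R$ blocks from distinct validator identities, of which at most $f$ can be Byzantine, leaving at least $3f+1$ honest references. Your extra care in justifying that distinct hashes must come from distinct authors (ruling out double-counting equivocations) is a welcome elaboration of a point the paper simply asserts, but it does not change the approach.
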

\iflong
    \begin{proof}
        In any round there are $5f+1$ total validators of which at most $f$ are Byzantine, so at least $4f+1$ are honest. A valid round-$(R+1)$ block, by definition of the protocol's formation rule, references at least $4f+1$ distinct round-$R$ blocks (one per validator identity). At most $f$ of these can be Byzantine, hence at least $4f+1-f = 3f+1$ are honest.
    \end{proof}
\fi

\begin{lemma}[Many Heavily Referenced Round-$R$ Blocks]\label{lem:heavy-honest-R-blocks}
    There are at least $2f+1$ honest round-$R$ blocks each referenced by at least $f+1$ honest round-$R+1$ blocks.
\end{lemma}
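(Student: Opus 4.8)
The plan is to prove Lemma~\ref{lem:heavy-honest-R-blocks} by a double-counting (averaging) argument on the bipartite ``reference graph'' connecting honest round-$R$ blocks to honest round-$(R+1)$ blocks. I would put the $4f+1$ honest round-$(R+1)$ blocks (one per honest validator) on one side, and the honest round-$R$ blocks on the other; since each honest validator issues at most one block per round, there are at most $4f+1$ of the latter. Draw an edge from an honest round-$(R+1)$ block $b'$ to an honest round-$R$ block $b$ whenever $b'$ references $b$ as a parent. By Lemma~\ref{lem:valid-ref-honest-lb}, every valid round-$(R+1)$ block references at least $3f+1$ honest round-$R$ blocks, so each of the $4f+1$ honest round-$(R+1)$ blocks has at least $3f+1$ incident edges, whence the total edge count $E$ satisfies $E \ge (4f+1)(3f+1)$.

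Next, call an honest round-$R$ block \emph{heavy} if at least $f+1$ honest round-$(R+1)$ blocks reference it, and \emph{light} otherwise; the claim is that there are at least $2f+1$ heavy blocks. Assume for contradiction that there are at most $2f$ of them. A heavy block contributes at most $4f+1$ edges (the number of honest round-$(R+1)$ blocks), a light block at most $f$, and the total number of honest round-$R$ blocks is at most $4f+1$, so the number of light blocks is at most $2f+1$; since heavy slots carry more weight than light ones, the edge count is maximized by having exactly $2f$ heavy and $2f+1$ light blocks, giving $E \le 2f(4f+1) + (2f+1)f = 10f^2 + 3f$. This contradicts $E \ge (4f+1)(3f+1) = 12f^2 + 7f + 1$, since $2f^2 + 4f + 1 > 0$. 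Hence at least $2f+1$ honest round-$R$ blocks are heavy, which is exactly the statement.

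This is a routine pigeonhole/averaging step, and I do not anticipate a genuine obstacle. The only point requiring a little care is the light-block bookkeeping: one must invoke that an honest validator produces at most one block per round, so the number of honest round-$R$ blocks is capped at $4f+1$, and then verify that the extremal configuration for the upper bound on $E$ is the $(2f, 2f+1)$ split. (Equivalently, one could state the bound directly as $k \ge (4f+1)(2f+1)/(3f+1) = 2f+1 + f(2f+1)/(3f+1) > 2f+1$, and round up; the contradiction form above just avoids fractions.)
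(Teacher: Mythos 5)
Your proof is correct and follows essentially the same route as the paper's: the same bipartite double-counting between honest round-$(R+1)$ and round-$R$ blocks, the same lower bound $(4f{+}1)(3f{+}1)$ on the edge count via Lemma~\ref{lem:valid-ref-honest-lb}, and the same contradiction with the upper bound $10f^2+3f$ obtained from at most $2f$ heavily referenced blocks. The extremal-configuration bookkeeping and the averaging aside are fine but add nothing beyond the paper's argument.
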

\iflong
    \begin{proof}
        By \Cref{lem:valid-ref-honest-lb} every valid round-$(R+1)$ block references at least $3f+1$ \emph{honest} round-$R$ blocks. There are $4f+1$ honest validators, hence $4f+1$ honest round-$(R+1)$ blocks in total. Consider the bipartite graph whose left vertices $A$ are the honest round-$(R+1)$ blocks and whose right vertices $B$ are the honest round-$R$ blocks; connect $a\in A$ to $b\in B$ if $a$ references $b$.

        Let $E$ be the total number of edges. Each $a\in A$ has degree at least $3f+1$, so
        $$
            E \ge (4f+1)(3f+1).
        $$
        Suppose for contradiction that fewer than $2f+1$ honest round-$R$ blocks have degree at least $f+1$ (i.e., are referenced by $\ge f+1$ honest round-$(R+1)$ blocks). Let $X \subseteq B$ be the (assumed) set of degree $\ge f+1$ blocks with $|X| \le 2f$. Any block in $B\setminus X$ then has degree at most $f$.

        We upper bound $E$ under this hypothesis:
        \begin{align*}
            E & \le |X|(4f+1) + (|B|-|X|)f \\
              & = |X|(4f+1 - f) + f(4f+1)  \\
              & = |X|(3f+1) + f(4f+1).
        \end{align*}
        Using $|X| \le 2f$ we get
        \begin{align*}
            E & \le 2f(3f+1) + f(4f+1)     \\
              & = (6f^2 + 2f) + (4f^2 + f) \\
              & = 10f^2 + 3f.
        \end{align*}
        Yet
        $$
            (4f+1)(3f+1) = 12f^2 + 7f + 1 > 10f^2 + 3f,
        $$
        a contradiction. Therefore, $|X| \ge 2f+1$, establishing the claim.
    \end{proof}
\fi

\begin{lemma}[Common Honest Ancestors]\label{lem:common-honest-ancestors}
    Any set of $4f+1$ round-$R+2$ blocks collectively references at least $2f+1$ common honest round-$R$ blocks.
\end{lemma}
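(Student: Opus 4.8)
The plan is to prove something slightly stronger than stated: \emph{every} valid round-$(R{+}2)$ block references a fixed set $G$ of at least $2f+1$ honest round-$R$ blocks, from which the claim about any $4f+1$ such blocks follows immediately (their common references include $G$). Note that all blocks under consideration are valid, as enforced by the protocol's formation rule.

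First I would invoke \Cref{lem:heavy-honest-R-blocks} to fix a set $G$ of at least $2f+1$ honest round-$R$ blocks, each of which is referenced by at least $f+1$ honest round-$(R{+}1)$ blocks; this $G$ will be the common-ancestor set in the conclusion. Next, take an arbitrary valid round-$(R{+}2)$ block $x$. By the DAG formation rule, $x$ references $4f+1$ distinct round-$(R{+}1)$ blocks from $4f+1$ distinct validators; since at most $f$ of the $5f+1$ validators are Byzantine, at least $3f+1$ of these parents are honest round-$(R{+}1)$ blocks, so $x$ fails to reference the round-$(R{+}1)$ block of at most $f$ honest validators. Now fix any $b \in G$: by the choice of $G$, at least $f+1$ honest round-$(R{+}1)$ blocks reference $b$. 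Because $x$ misses at most $f$ honest round-$(R{+}1)$ blocks, it cannot miss all $f+1$ of those that reference $b$, so $x$ references at least one of them; by transitivity of references in the DAG, $x$ references $b$. Since $b \in G$ was arbitrary, $x$ references all of $G$, and hence so does any set of $4f+1$ round-$(R{+}2)$ blocks, giving $\lvert G\rvert \ge 2f+1$ common honest round-$R$ ancestors.

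The main obstacle is recognizing that the obvious approach fails: a union bound over "honest round-$R$ blocks missed by \emph{some} round-$(R{+}2)$ block" is useless, since each round-$(R{+}2)$ block may miss up to $f$ honest round-$R$ ancestors and $(4f+1)\cdot f$ vastly exceeds the $4f+1$ honest round-$R$ blocks available. The key insight supplied by \Cref{lem:heavy-honest-R-blocks} is a quantifier flip: a round-$R$ block that is referenced by $\ge f+1$ honest round-$(R{+}1)$ blocks is necessarily referenced by \emph{every} downstream valid block, since any such block can omit at most $f$ honest round-$(R{+}1)$ blocks and must therefore hit one of the $f+1$ witnesses. The only remaining care is the honest-count bookkeeping at each round ($5f+1$ total, at most $f$ Byzantine, hence the "misses at most $f$" slack), which propagates cleanly through the two reference steps.
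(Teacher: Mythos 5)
Your proof is correct and follows essentially the same route as the paper's: both fix the set of $2f+1$ heavily referenced honest round-$R$ blocks from Lemma~\ref{lem:heavy-honest-R-blocks}, note that each valid round-$(R+2)$ block omits at most $f$ of the $4f+1$ honest round-$(R+1)$ blocks (the paper packages this as Lemma~\ref{lem:valid-ref-honest-lb}), and conclude it must hit one of the $f+1$ honest witnesses of each such ancestor. Your observation that the claim holds for every valid round-$(R+2)$ block, independently of the set size $4f+1$, is also implicit in the paper's argument, so the two proofs coincide in substance.
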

\iflong
    \begin{proof}
        Let $H$ denote the set of (at least) $2f+1$ honest round-$R$ blocks guaranteed by \Cref{lem:heavy-honest-R-blocks}: every $h\in H$ is referenced by at least $f+1$ honest round-$(R+1)$ blocks.

        Consider any multiset $S$ of $4f+1$ round-$(R+2)$ blocks (they may be arbitrary, honest or Byzantine). By applying \Cref{lem:valid-ref-honest-lb} to round $R+1$ vs. $R+2$, each valid round-$(R+2)$ block includes at least $3f+1$ \emph{honest} round-$(R+1)$ blocks among the $4f+1$ distinct references it must carry.

        Fix some $h\in H$. Suppose, for contradiction, that a particular $s\in S$ fails to (indirectly) reference $h$. Then $s$ must omit every honest round-$(R+1)$ block that references $h$. However, $h$ has at least $f+1$ such honest round-$(R+1)$ children, while $s$ can omit at most $(4f+1) - (3f+1) = f$ honest round-$(R+1)$ blocks (because it necessarily includes at least $3f+1$ of the $4f+1$ honest ones). Since $f+1 > f$, omitting them all is impossible. Therefore, every $s\in S$ (indirectly) references $h$.

        The argument holds for each $h\in H$, so all blocks in $S$ commonly (indirectly) reference every element of $H$. Thus, their intersection over round-$R$ ancestors contains $H$, and has size at least $|H| \ge 2f+1$.
    \end{proof}
\fi
\paragraph{Liveness theorems.}
We now leverage the structural overlap to obtain probabilistic liveness via the common coin.

\begin{definition}[Core Set of Round $R$]\label{def:core-set}
    Let $C_R$ be the set of honest round-$R$ blocks each referenced by at least $f+1$ honest round-$(R+1)$ blocks. By \Cref{lem:heavy-honest-R-blocks}, $|C_R| \ge 2f+1$.
\end{definition}

\begin{lemma}[Persistence of Core Support]\label{lem:core-persistence}
    Every valid round-$(R+2)$ block (indirectly) references every block in $C_R$.
\end{lemma}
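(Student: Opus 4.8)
The plan is to recognize that this lemma is essentially the single-block specialization of the counting argument already carried out inside the proof of \Cref{lem:common-honest-ancestors}. First I would fix an arbitrary valid round-$(R+2)$ block $y$ and an arbitrary block $h\in C_R$, and show that $y$ (indirectly) references $h$; since both $y$ and $h$ are arbitrary, this gives the claim.

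The key step is a pigeonhole count. Applying \Cref{lem:valid-ref-honest-lb} with rounds shifted up by one, $y$ references at least $3f+1$ honest round-$(R+1)$ blocks (from $3f+1$ distinct honest validator identities) among its $4f+1$ parents. Since there are $4f+1$ honest validators, there are at most $4f+1$ honest round-$(R+1)$ blocks, so $y$ fails to reference at most $f$ of them. By \Cref{def:core-set}, $h$ is referenced by at least $f+1$ distinct honest round-$(R+1)$ blocks. As $f+1>f$, $y$ cannot omit all of these honest children of $h$; hence some honest round-$(R+1)$ block $z$ is both a parent of $y$ and a block that references $h$. By transitivity of the reference relation along the path $y\to z\to h$, the block $y$ (indirectly) references $h$.

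The only thing to be careful about is the one-valid-block-per-(validator, round) bookkeeping (\Cref{obs:unique-valid}): the phrases ``$3f+1$ honest round-$(R+1)$ references'' and ``$f+1$ honest round-$(R+1)$ children of $h$'' must both be counted by distinct validator identities, so that the two sets, living in the universe of $4f+1$ honest identities, are forced to overlap. No quorum intersection beyond this single counting step is needed, and no synchrony or timing assumption is invoked—only the protocol's reference rule and the bound of at most $f$ Byzantine validators. This makes the proof short, and I expect no real obstacle beyond stating the identity-level counting cleanly.
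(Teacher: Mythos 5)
Your proof is correct and is essentially the paper's argument: the paper dispatches this lemma with a one-line citation of \Cref{lem:common-honest-ancestors} (whose internal proof is exactly your per-block identity-counting via \Cref{lem:valid-ref-honest-lb} and \Cref{def:core-set}), while you simply unwind that counting directly. If anything, your version is slightly more self-contained, since the \emph{statement} of \Cref{lem:common-honest-ancestors} speaks of a set of $4f+1$ blocks collectively, whereas its proof---and yours---establishes the per-block claim that is actually needed here.
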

\iflong
    \begin{proof}
        Immediate from \Cref{lem:common-honest-ancestors} since $C_R \subseteq H$ for the set $H$ used there.
    \end{proof}
\fi

We denote by $l \leq 5f + 1$ the number of leader slots per round.

\begin{lemma}[Direct Commitment via Core Intersection]\label{lem:direct-core-intersection}
    Fix a round $r$ and let $n = 5f+1$. When $l$ leader slots are sampled uniformly at random without replacement from the $n$ validators, the probability that at least one slot can be directly committed is
    $$
        p^\star = 1 - \frac{\binom{n - |C_R|}{l}}{\binom{n}{l}}.
    $$
    Moreover, if $l > n - |C_R|$ (in particular, if $l > 3f$ using $|C_R| \ge 2f+1$) then $p^\star = 1$ (deterministic success).
\end{lemma}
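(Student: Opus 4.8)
The plan is to split the statement into a deterministic structural fact and an elementary counting step. First I would isolate the key claim: \emph{any leader slot that the sampling assigns to a validator $v$ whose round-$r$ block $b$ lies in the core set $C_R$ (with $R=r$) is directly committed.} To establish it, I would invoke \Cref{lem:core-persistence}: every valid round-$(r+2)$ block (indirectly) references every member of $C_R$, in particular $b$. Because members of $C_R$ are honest round-$r$ blocks, $v$ is honest, so by \Cref{obs:unique-valid} the block $b$ is the \emph{only} valid round-$r$ block of $v$; hence when an honest validator resolves support for the slot $(v,r)$ by the deterministic traversal, the unique block it can land on is $b$ (cf.\ \Cref{obs:no-equiv-vote}). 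Consequently each of the $\ge 4f+1$ honest round-$(r+2)$ blocks votes for $b$, and by \Cref{lem:block-inclusion} they eventually populate every honest validator's DAG, so $b$ acquires a strong certificate (\Cref{def:strong-cert}). Since at most $n-(4f+1)=f$ round-$(r+2)$ blocks can fail to vote for $b$, the direct \emph{skip} rule never fires for this slot; therefore the direct decision rule of \Cref{alg:decider} outputs $\texttt{Commit}(b)$.

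With the structural claim in hand, the probability step is pure complementary counting. The $l$ leader slots of round $r$ are a uniformly random $l$-subset of the $n=5f+1$ validators. Let $K$ be the set of validator identities owning a round-$r$ block in $C_R$, so $|K|=|C_R|\ge 2f+1$ by \Cref{def:core-set}. A sample in which \emph{no} slot lies in $K$ must pick all $l$ slots among the $n-|K|$ validators outside $K$, which occurs for exactly $\binom{n-|C_R|}{l}$ of the $\binom{n}{l}$ equally likely subsets; hence the event ``at least one sampled slot lies in $K$'' has probability $1-\binom{n-|C_R|}{l}/\binom{n}{l}=p^\star$. By the structural claim this event implies that at least one slot can be directly committed, so the probability of the latter is at least $p^\star$.

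For the deterministic regime, if $l>n-|C_R|$ then the complement of $K$ — of size $n-|C_R|<l$ — contains no $l$-element subset, so $\binom{n-|C_R|}{l}=0$ and $p^\star=1$. Since $|C_R|\ge 2f+1$ and $n=5f+1$ give $n-|C_R|\le 3f$, already $l>3f$ forces $p^\star=1$.

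I expect the structural claim to be the crux: the delicate point, in the uncertified-DAG model, is upgrading ``$b$ is (indirectly) referenced by every valid round-$(r+2)$ block'' to ``$b$ is \emph{voted for} by at least $4f+1$ such blocks'' — which relies on honesty of $v$, uniqueness of its valid block (\Cref{obs:unique-valid}), and the deterministic tie-breaking that forbids a block from supporting two proposals of the same validator (\Cref{obs:no-equiv-vote}) — while simultaneously ruling out the direct skip rule. The counting and the $l>3f$ corollary are routine.
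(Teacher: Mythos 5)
Your proposal is correct and follows essentially the same route as the paper's proof: invoke Lemma~\ref{lem:core-persistence} to conclude that any sampled leader in $C_R$ is directly commit-able, then compute the hypergeometric zero-intersection probability and note that $l>n-|C_R|$ (hence $l>3f$) forces $p^\star=1$. Your extra care in upgrading ``indirectly references'' to ``is voted for'' (via honesty of the leader, \Cref{obs:unique-valid}, and \Cref{obs:no-equiv-vote}) is a valid elaboration of a step the paper leaves implicit, not a different approach.
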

\iflong
    \begin{proof}
        By \Cref{lem:core-persistence}, any selected leader whose block lies in $C_R$ is (eventually) directly commit-able by every honest validator. Thus a successful direct commitment in round $r$ occurs iff the sampled set intersects $C_R$. The probability that it does not intersect $C_R$ is exactly the hypergeometric zero-success probability $\binom{n - |C_R|}{l}/\binom{n}{l}$; subtracting from $1$ yields $p^\star$.

        If $l > n - |C_R|$, it is impossible to choose all $l$ leaders outside $C_R$, so the intersection is certain and $p^\star = 1$. Using only $|C_R| \ge 2f+1$, we have $n - |C_R| \le 5f+1 - (2f+1) = 3f$, hence any $l > 3f$ suffices for determinism.
    \end{proof}
\fi

\begin{lemma}[Eventual Slot Resolution]\label{lem:eventual-slot-resolution}
    Fix a slot $s$. Every honest validator eventually either commits or skips $s$, with probability 1.
\end{lemma}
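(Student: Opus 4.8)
The plan is to assemble the liveness argument from four ingredients already available: the DAG-progress lemmas (\Cref{lem:causal-history-inclusion,lem:block-inclusion}), the structural-overlap lemmas (\Cref{lem:heavy-honest-R-blocks,lem:common-honest-ancestors,lem:core-persistence}), the coin-commit lemma (\Cref{lem:direct-core-intersection}), and the downward anchor cascade of the partially synchronous proof (\Cref{lem:liveness5}). Concretely, I would prove the statement in three stages: (i) every honest validator's DAG grows without bound and eventually contains every honest block; (ii) with probability $1$ there are infinitely many rounds, at unboundedly high round numbers, whose coin-selected leader slot is \emph{directly committed} by every honest validator; (iii) any such directly committed slot forces, by a downward induction on rounds, the resolution of every lower undecided slot, and in particular of the target slot $s$.

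For stage (i) I would induct on the round number $R$: an honest validator forms its round-$(R+1)$ block as soon as it holds $4f+1$ round-$R$ blocks, and since there are $4f+1$ honest validators each producing a round-$R$ block, \Cref{lem:block-inclusion} guarantees every honest validator eventually collects all of them, hence advances to round $R+1$, and by \Cref{lem:causal-history-inclusion} eventually holds all honest blocks of every round. Thus rounds are unbounded, and for each fixed round $R'$ there is a time after which all $4f+1$ honest round-$R'$ blocks lie in every honest validator's local DAG.

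For stage (ii) I would fix a round $R$ and use that in \asynclmysticeti its $l$ leader slots are sampled uniformly at random through the coin revealed at the decision round $R+2$, independently of the structure of rounds $R$ and $R+1$; that structure fixes the core set $C_R$ with $|C_R|\ge 2f+1$ by \Cref{lem:heavy-honest-R-blocks}, and this bound holds no matter which earlier coins were revealed. By \Cref{lem:direct-core-intersection} the probability that a sampled leader's round-$R$ block lies in $C_R$ is at least $p^\star\ge (2f+1)/(5f+1)>1/3$ (and equals $1$ when $l>3f$), and by \Cref{lem:core-persistence} every valid round-$(R+2)$ block then votes for that block; combined with stage (i), every honest validator eventually sees $4f+1$ such votes and directly commits the slot. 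Since the coins of distinct rounds are independent and the lower bound on $p^\star$ holds conditionally on the entire past, a Borel--Cantelli-type argument yields that, with probability $1$, infinitely many rounds at arbitrarily high round numbers carry such a directly committed leader slot.

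For stage (iii) I would fix $s$ at round $r$, pick (a.s.) a directly committed leader slot at a round $j>r+2$, and mirror \Cref{lem:liveness5}: induct downward from $j$, observing that the slots in the few rounds just below $j$ obtain committed anchors from $j$ or from one or two further good rounds above $j$ (which exist a.s.), and that for any lower slot the induction hypothesis leaves no undecided slot between it and a committed anchor, so \textsc{TryIndirectDecide} classifies it as \scommit or \sskip; this resolves $s$ at every honest validator. I expect stage (iii) to be the main obstacle: it requires matching the exact anchor-scan order induced by \textsc{TryDecide} (leader offsets processed high-to-low but prepended, so offset $0$ appears first in the sequence $S$) and pinning down which good-round configuration suffices when \texttt{waveLength} $=3$, together with making the conditional-independence argument of stage (ii) fully rigorous against an adaptive adversary.
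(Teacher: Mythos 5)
Your proposal is sound and rests on exactly the same two ingredients as the paper's proof---the positive per-round probability $p^\star$ of a directly committed coin leader from \Cref{lem:direct-core-intersection} (via the core set of \Cref{lem:heavy-honest-R-blocks,lem:core-persistence}) and the anchor mechanism of \textsc{TryIndirectDecide}---but you run the argument in the opposite direction, and with considerably more machinery. The paper's proof is a three-line contrapositive: if $s$ stayed undecided forever, its anchor would have to stay undecided forever, hence so would that anchor's anchor, and so on; this infinite chain of undecided anchors is then bounded in probability by the event that no round ever produces a directly committed slot, which is $\lim_{t\to\infty}(1-p^\star)^t=0$. You instead argue constructively: (i) explicit DAG growth, (ii) almost surely infinitely many rounds with a directly committed leader (Borel--Cantelli over independent coins), and (iii) a downward anchor-cascade induction imported from \Cref{lem:liveness5} that resolves every slot below a committed anchor. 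What your route buys is that it surfaces and discharges steps the paper leaves implicit (block dissemination, the conditional-independence of the coins, and the fact that direct commits are eventually visible to all honest validators); what it costs is precisely the bookkeeping you flag in stage (iii)---the anchor-scan order of \textsc{TryDecide}, the \texttt{waveLength}$=3$ geometry, and the need for slots within two rounds below a committed round to draw anchors from yet higher good rounds---details the paper's upward-chain phrasing sidesteps (and, to be fair, does not spell out either). So this is not a gap relative to the paper's own level of rigor, but be aware that a fully formal version of either direction has to rule out an infinite chain of undecided anchors threading between interspersed committed slots; your observation that stage (ii) supplies good rounds at arbitrarily high heights, so the cascade can always restart above any stubborn slot, is the right way to close that.
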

\iflong
    \begin{proof}
        We prove the lemma by showing that the probability of $s$ remaining undecided forever at some honest validator is 0. In order for $s$ to remain undecided forever, $s$ cannot be committed or skipped directly. Furthermore, $s$ cannot be decided using the indirect rule. This means that the anchor $s'$ of $s$ must also remain undecided forever, and therefore the anchor $s''$ of $s'$ must remain undecided forever, and so on. The probability of this occurring is at most equal to the probability of an infinite sequence of rounds with no directly committed slots, equal to $\lim_{t \to \infty}(1 - p^\star)^t = 0$, where $p^\star > 0$ is the probability from \Cref{lem:direct-core-intersection}.
    \end{proof}
\fi

\begin{theorem}[Validity]
    \asynclmysticeti satisfies the validity property of Byzantine Atomic Broadcast.
\end{theorem}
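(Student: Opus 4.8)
The plan is to show that the block $b$ broadcast by the honest validator $v_k$ (with sequence number $q$) eventually enters the causal history of a committed leader block at every honest validator, which forces $v_k$'s message to be delivered everywhere. I would organize the argument into dissemination of $b$, saturation of causal histories, and a probabilistic appeal to the liveness machinery, and then read off the delivery conditions.

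\emph{Dissemination.} First I would invoke \Cref{lem:block-inclusion}: $b$ is eventually in the local DAG of every honest validator. Under asynchrony with eventual delivery, every honest validator keeps advancing through all rounds---each round-$r$ block needs only $4f+1$ parents from round $r-1$, and all $4f+1$ honest validators eventually produce such blocks, so by induction every honest validator proposes a block in every round. By the standard honest-proposal convention (the same one used implicitly in \Cref{lem:safety3}: honest authorities always link to their own blocks, and more generally a freshly proposed block's causal history subsumes the proposer's current local DAG), once an honest validator $w$ has received $b$, every block $w$ subsequently proposes in a round greater than $b.round$ has $b$ in its causal history. Since each honest validator receives $b$ in finite time and proposes blocks in all (ever-increasing) rounds, there is a finite round $R'$ such that for every honest $w$ and every round $r \ge R'$, $w$'s round-$r$ block has $b$ in its causal history.

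\emph{Saturation.} Next I would push this to all valid blocks, including Byzantine-authored ones. By \Cref{lem:valid-ref-honest-lb}, every valid round-$(R'+1)$ block references at least $3f+1$ honest round-$R'$ blocks, each carrying $b$ in its causal history; hence every valid round-$(R'+1)$ block carries $b$, and inducting on the round (each valid block references $4f+1$ blocks of the previous round) so does every valid block at any round $\ge R'+1$.

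\emph{Liveness and conclusion.} Finally I would use \Cref{lem:direct-core-intersection}: in every round at least one leader slot is directly committable with probability $p^\star>0$, so---exactly as in the proof of \Cref{lem:eventual-slot-resolution}---the probability that no round after $R'$ has a directly committed slot is at most $\lim_{t\to\infty}(1-p^\star)^t=0$. Thus with probability $1$ there is a committed leader block $l$ at some round $R''\ge R'+1$, and by the saturation step $b$ lies in $l$'s causal history. By \Cref{obs:exists-certificate}, $l$ carries a weak certificate whose honest members propagate $l$ to every honest DAG (via \Cref{lem:block-inclusion} and \Cref{lem:causal-history-inclusion}); by \Cref{lem:eventual-slot-resolution} every honest validator eventually decides $l$'s slot, and by \Cref{lem:consistent} and \Cref{lem:agree-commit} decides it as a commit of $l$, while also deciding each of the finitely many slots preceding $l$. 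At that point the delivery conditions for $b$ hold at every honest validator: $b$ is delivered together with $l$, unless it was already delivered with an earlier committed leader---either way every honest validator delivers $b$, which is the required validity property (delivery being exactly once is \Cref{thm:integrity}). The main obstacle I anticipate is the dissemination/saturation step: making it fully precise that $b$ percolates into the causal history of \emph{all} sufficiently late valid blocks. This rests on the honest-behavior convention that a proposal's causal history covers the proposer's entire current DAG, on uniform round-advancement of honest validators under asynchrony, and on \Cref{lem:valid-ref-honest-lb} to absorb Byzantine leaders; once these are nailed down, the probabilistic step is a routine echo of \Cref{lem:eventual-slot-resolution}.
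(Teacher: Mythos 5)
Your proposal is correct and follows essentially the same route as the paper's proof: use \Cref{lem:block-inclusion} to get $b$ into every honest DAG and hence into honest validators' later proposals, use the $4f+1$-parent rule (with at least $3f+1$ honest parents) so that any later committed leader has a path to $b$, invoke \Cref{lem:direct-core-intersection} for a committed leader after that round with probability~1, and conclude via \Cref{lem:eventual-slot-resolution} and the delivery conditions. Your dissemination/saturation step is simply a more explicit rendering of what the paper states tersely ("all validators have $b$ in their causal histories by round $r$"), so no substantive difference remains.
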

\iflong
    \begin{proof}
        Let $v$ be an honest validator and $b$ a block broadcast by $v$. We show that, with probability 1, $b$ is eventually delivered by every honest validator. \Cref{lem:block-inclusion} $b$ is eventually included in the local DAG of every honest validator. So every honest validator will eventually include a reference to $b$ in at least one of its blocks. Let $r$ be the highest round at which some honest validator includes a reference to $b$ in one of its blocks. By \Cref{lem:direct-core-intersection}, with probability 1, eventually some block $b'$ at a round $r' > r$ will be directly committed. Block $b'$ must reference at least $4f + 1$ blocks, thus at least $3f + 1$ blocks from honest validators. Since all validators have $b$ in their causal histories by round $r$, $b'$ must therefore have a path to $b$. \Cref{lem:eventual-slot-resolution} guarantees that all slots before $b'$ are eventually decided, so $b'$ is eventually delivered. Thus, $b$ will be delivered at all honest validators at the latest when $b'$ is delivered along with its causal history.
    \end{proof}
\fi

\begin{theorem}[Agreement]
    \asynclmysticeti satisfies the agreement property of Byzantine Atomic Broadcast.
\end{theorem}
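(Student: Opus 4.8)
The plan is to reduce Agreement to the consistency and liveness lemmas already in hand. Suppose an honest validator $v$ delivers a block $b$. By the delivery conditions this means there is a slot $s$, committed at $v$ with leader block $l$, such that $b$ lies in $l$'s causal history in $v$'s DAG, all slots up to $s$ are decided at $v$, and $b$ was not already delivered with an earlier committed leader. I want to show that every other honest validator $v'$ eventually delivers $b$ with probability $1$, which is exactly the Agreement property for \asynclmysticeti.

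First I would argue that $v'$ eventually commits the same leader block $l$ in slot $s$. By \Cref{lem:eventual-slot-resolution}, $v'$ eventually decides slot $s$ with probability $1$; since $v$ commits $s$, \Cref{lem:consistent} forces $v'$ to commit $s$ as well (not \sskip it), and \Cref{lem:agree-commit} then guarantees that $v'$ commits the same block $l$. Applying the same two lemmas to each of the finitely many slots below $s$ shows that $v'$ eventually decides all of them and makes exactly the same \scommit/\sskip decisions as $v$; hence $v'$ eventually has all slots up to $s$ decided and commits precisely the same sequence of leader blocks in slots $\le s$ as $v$.

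Next I would handle block availability. Since $v'$ commits $l$, it has $l$ in its DAG, so by \Cref{lem:causal-history-inclusion} (equivalently \Cref{lem:block-inclusion} applied to $b$) the entire causal history of $l$ — in particular $b$ — is eventually included in $v'$'s DAG. At that point $v'$ satisfies the delivery conditions for $b$: there is a committed leader block whose causal history contains $b$, all slots up to that leader are decided, and $b$ has not yet been delivered. Therefore $v'$ delivers $b$; by \Cref{lem:delivered-same-slot} it is in fact delivered at the same slot $s$ with the same leader $l$ as at $v$, but for Agreement it is enough that $b$ is delivered at all.

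I expect the only delicate point to be the probabilistic bookkeeping: eventual resolution of each individual slot $\le s$ holds with probability $1$ by \Cref{lem:eventual-slot-resolution}, and since there are only finitely many such slots, their conjunction also holds with probability $1$; combined with the purely deterministic consistency and inclusion lemmas this yields the claim. No new structural argument about the DAG is needed beyond what the safety and liveness lemmas already establish.
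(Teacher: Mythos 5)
Your proof is correct and follows essentially the same route as the paper's: eventual slot resolution (\Cref{lem:eventual-slot-resolution}) plus agreement on the committed leader block, followed by delivery of its causal history. You simply spell out in more detail (via \Cref{lem:consistent}, \Cref{lem:agree-commit}, and the block-availability lemmas) steps that the paper's shorter proof compresses into its appeals to \Cref{lem:eventual-slot-resolution} and \Cref{lem:delivered-same-slot}.
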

\iflong
    \begin{proof}
        Let $v$ be an honest validator and $b$ a block delivered by $v$. We show that, with probability 1, $b$ is eventually delivered by every honest validator. Let $l$ be the leader block with which $b$ is delivered, and $s$ the corresponding slot. By \Cref{lem:eventual-slot-resolution}, all blocks up to and including $s$ are eventually decided by all honest validators, with probability 1. By \Cref{lem:delivered-same-slot}, all honest validators commit $l$ in $s$. Therefore, all honest validators deliver $b$ eventually.
    \end{proof}
\fi

\subsection{\asynclmysticeti Performance}

We implement\footnote{\asynccodelink} \asynclmysticeti by extending our \lmysticeti prototype (Appendix~\ref{sec:implementation}) and evaluate it using the same setup as in \Cref{sec:evaluation}.

\Cref{fig:evaluation-async} presents the WAN throughput-latency results for \lmysticeti and \asynclmysticeti, with Mysticeti and Hydrangea included as visual reference points from \Cref{sec:evaluation}. We consider two scenarios: (a) committees of $10$ and $50$ validators without faults (\Cref{fig:evaluation-common-async}), and (b) committees of $10$ and $11$ validators (enough to tolerate $3$ and $2$ faults) under corresponding crash failures (\Cref{fig:evaluation-faults-async}). We cap throughput at $50{,}000$\,tps for cost reasons.
\asynclmysticeti matches Mysticeti's throughput--latency performance across both scenarios and committee sizes, as both commit in three rounds in the common case. This highlights a key trade-off: for equivalent performance, one can choose between an asynchronous protocol with $n=5f+1$ or a partially synchronous one with $n=3f+1$.

\begin{figure}[t]
    \centering
    \begin{subfigure}{0.49\textwidth}
        \centering
        \includegraphics[width=\linewidth]{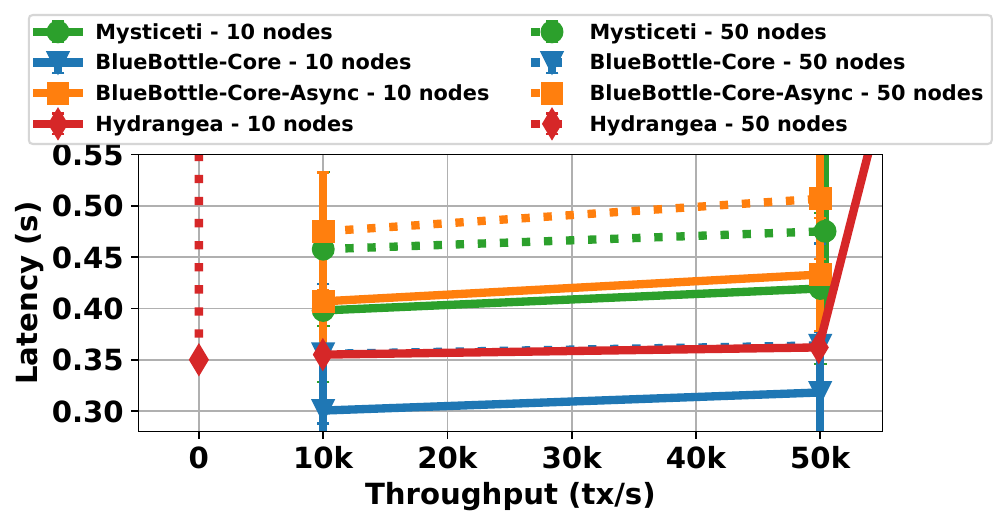}
        \caption{Committees with $10$ and $50$ validators with no validator faults.}
        \label{fig:evaluation-common-async}
    \end{subfigure}\hfill
    \begin{subfigure}{0.49\textwidth}
        \centering
        \includegraphics[width=\linewidth]{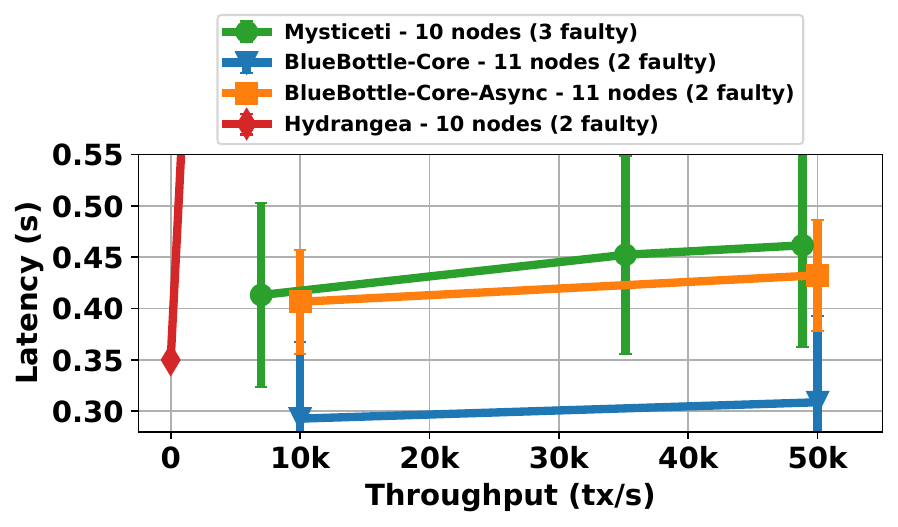}
        \caption{With $3$ or $2$ crash faults; committees with $10$ and $11$ validators (minimum to tolerate $3$ or $2$ faults).}
        \label{fig:evaluation-faults-async}
    \end{subfigure}
    \caption{
        WAN throughput-latency performance comparison of \lmysticeti ($n=5f+1$) and \asynclmysticeti ($n=5f+1$), with Mysticeti ($n=3f+1$) and Hydrangea ($n=3f+2c+k+1$) as reference points. Transaction size is $512$\,B. The y-axis starts at $300$\,ms to zoom in on the latency difference between the systems.
    }
    \label{fig:evaluation-async}
\end{figure}
\section{Reproducing Experiments} \label{sec:tutorial}

Below, we provide a tutorial for reproducing the experiments.
All artifacts necessary for evaluating the contributions of this paper are publicly available at:
\begin{center}
  \codelink
\end{center}
We provide the orchestration scripts used to benchmark the \sysname codebase on AWS and produce the benchmarks of \Cref{sec:evaluation}.

\para{Deploying a testbed}
The file `\texttildelow/.aws/credentials' should have the following content:
\begin{scriptsize}\begin{verbatim}
[default]
aws_access_key_id = YOUR_ACCESS_KEY_ID
aws_secret_access_key = YOUR_SECRET_ACCESS_KEY
\end{verbatim}\end{scriptsize}
configured with account-specific AWS \emph{access key id} and \emph{secret access key}. It is advise to not specify any AWS region as the orchestration scripts need to handle multiple regions programmatically.

A file `settings.yaml' contains all the configuration parameters for the testbed deployment. We run the experiments of \Cref{sec:evaluation} with the following settings:

\begin{scriptsize}\begin{lstlisting}[language=yaml]
---
testbed_id: "${USER}-bluebottle-testbed"
cloud_provider: aws
token_file: "/Users/${USER}/.aws/credentials"
ssh_private_key_file: "/Users/${USER}/.ssh/aws"
regions:
  - us-east-1
  - us-west-2
  - ca-central-1
  - eu-central-1
  - eu-west-1
  - eu-west-2
  - eu-west-3
  - eu-north-1
  - ap-south-1
  - ap-southeast-1
  - ap-southeast-2
  - ap-northeast-1
  - ap-northeast-2
specs: m5d.8xlarge
repository:
  url: https://github.com/AUTHOR/REPO.git
  commit: main
node_parameters_path:
  "crates/orchestrator/assets/node-parameters.yml"
client_parameters_path:
  "crates/orchestrator/assets/client-parameters.yml"
benchmark_duration: 1000
\end{lstlisting}\end{scriptsize}

where the file `/Users/\${USER}/.ssh/aws' holds the ssh private key used to access the AWS instances, and `AUTHOR' and `REPO' are respectively the GitHub username and repository name of the codebase to benchmark.
To run benchmarks with faults (\Cref{sec:faults}), we add the following configuration at the end of the `settings.yaml' file:
\begin{scriptsize}\begin{lstlisting}[language=yaml]
faults: !Permanent { faults: 3 }
\end{lstlisting}\end{scriptsize}

The orchestrator binary provides various functionalities for creating, starting, stopping, and destroying instances. For instance, the following command to boots 2 instances per region (if the settings file specifies 13 regions, as shown in the example above, a total of 26 instances will be created):
\begin{scriptsize}\begin{verbatim}
cargo run --bin orchestrator -- testbed deploy --instances 2
\end{verbatim}\end{scriptsize}
The following command displays he current status of the testbed instances
\begin{scriptsize}\begin{verbatim}
cargo run --bin orchestrator testbed status
\end{verbatim}\end{scriptsize}
Instances listed with a green number are available and ready for use and instances listed with a red number are stopped. It is necessary to boot at least one instance per load generator, one instance per validator, and one additional instance for monitoring purposes (see below).
The following commands respectively start and stop instances:
\begin{scriptsize}\begin{verbatim}
cargo run --bin orchestrator -- testbed start
cargo run --bin orchestrator -- testbed stop
\end{verbatim}\end{scriptsize}
It is advised to always stop machines when unused to avoid incurring in unnecessary costs.

\para{Running Benchmarks}
Running benchmarks involves installing the specified version of the codebase on all remote machines and running one validator and one load generator per instance. For example, the following command benchmarks a committee of 50 validators under a constant load of 1,000 tx/s:
\begin{scriptsize}\begin{verbatim}
cargo run --bin orchestrator -- benchmark \
    --committee 50 fixed-load --loads 1000
\end{verbatim}\end{scriptsize}

The nodes and clients configuration files (respectively specified in `crates/orchestrator/assets/node-parameters.yml' and then `crates/orchestrator/assets/client-parameters.yml') are used to respectively parametrize the nodes and clients. We run our benchmarks with the nodes configuration file as follows:
\begin{scriptsize}\begin{lstlisting}[language=yaml]
leader_timeout:
  secs: 1
  nanos: 0
\end{lstlisting}\end{scriptsize}
and the client configuration file as follows:
\begin{scriptsize}\begin{lstlisting}[language=yaml]
initial_delay:
  secs: 400
  nanos: 0
\end{lstlisting}\end{scriptsize}

\para{Monitoring}
The orchestrator provides facilities to monitor metrics. It deploys a Prometheus instance and a Grafana instance on a dedicated remote machine. Grafana is then available on the address printed on \texttt{stdout} when running benchmarks with the default username and password both set to \texttt{admin}. An example Grafana dashboard can be found in the file `grafana-dashboard.json'\footnote{\dashboardlink}.

\fi

\end{document}